\definecolor{darkgreen}{rgb}{0,0.5,0}
\definecolor{darkblue}{rgb}{0,0,0.8}
\newtheorem{theorem}{Theorem}[section]
\newtheorem{lemma}{Lemma}[section]
\newtheorem{corollary}{Corollary}[section]
\newtheorem{observation}{Observation}[section]
\newtheorem{property}{Property}[section]
\theoremstyle{definition}
\newtheorem{definition}{Definition}[section]
\newcommand{\ignore}[1]{}
\algnewcommand\algorithmicswitch{\textbf{switch}}
\algnewcommand\algorithmiccase{\textbf{case}}
\newcommand{\MCM}{{\sc mcm}}
\newcommand{\MWM}{{\sc mwm}}
\newcommand{\limit}{\operatorname{\text{\sc limit}}}
\newcommand{\MIS}{\operatorname{MIS}}
\newcommand{\congest}{\ensuremath{\mathsf{CONGEST}}\xspace}
\newcommand{\crewpram}{\ensuremath{\mathsf{CREW\ PRAM}}\xspace}
\newcommand{\local}{$\mathsf{LOCAL}$\xspace}
\newcommand{\suportedcongest}{\ensuremath{\mathsf{SUPPORTED ~CONGEST}~}}
\newcommand{\poly}{\operatorname{\text{{\rm poly}}}}
\newcommand{\polylog}{\operatorname{\text{{\rm polylog}}}}
\newcommand{\floor}[1]{\lfloor #1 \rfloor}
\newcommand{\cev}[1]{\reflectbox{\ensuremath{\vec{\reflectbox{\ensuremath{#1}}}}}}
\renewcommand{\paragraph}[1]{\vspace{0.15cm}\noindent {\bf #1}:}
\algnewcommand{\LineComment}[1]{\Statex \(\triangleright\) #1}
\newcommand{\Lmax}{\ell_{\max}}
\newcommand{\Cmax}{C_{\max{}}}
\newcommand{\Bmax}{C_{\max{}}}
\newcommand{\diamB}{\textrm{diam}(B)}
\newcommand{\FRemain}{\ensuremath{\tilde{F}}}
\newcommand{\GContract}{\ensuremath{\tilde{G}}}
\newcommand{\MContract}{\ensuremath{\tilde{M}}}
\newcommand{\AlgPhase}{{\textsc{Alg-Phase}}\xspace}
\newcommand{\VertexAlgPhase}{{\textsc{Vertex-Weighted-Alg-Phase}}\xspace}
\begin{document}
\title{$(1-\epsilon)$-Approximate Maximum Weighted Matching in\\ Distributed, Parallel, and Semi-Streaming Settings\footnote{A preliminary version appears in PODC 2023 as ``$(1-\epsilon)$-Approximate Maximum Weighted Matching in $\poly(1/\epsilon, \log n)$ Time in the Distributed and Parallel Settings''.}} 
\date{}

\setcounter{page}{0}
\author{
Shang-En Huang\thanks{Supported by NSF Grant No.~CCF-2008422.}\\
\normalsize Boston College \\
\normalsize \texttt{huangaul@bc.edu} \\
\and
Hsin-Hao Su\footnotemark[2]\\
\normalsize Boston College\\
\normalsize \texttt{suhx@bc.edu} \\
}

\maketitle
\begin{abstract}
 The maximum weighted matching (\MWM{}) problem is one of the most well-studied combinatorial optimization problems in distributed graph algorithms. Despite a long development on the problem, and the recent progress of Fischer, Mitrovic, and Uitto \cite{FMU22} who gave a $\poly(1/\epsilon, \log n)$-round algorithm for obtaining a $(1-\epsilon)$-approximate solution for unweighted maximum matching, it had been an open problem whether a $(1-\epsilon)$-approximate \MWM{} can be obtained in $\poly(1/\epsilon, \log n)$ rounds in the \congest model. Algorithms with such running times were only known for special graph classes such as bipartite graphs \cite{AKO18} and minor-free graphs \cite{CS22}. For general graphs, the previously known algorithms require exponential in $(1/\epsilon)$ rounds for obtaining a $(1-\epsilon)$-approximate solution \cite{FFK21} or achieve an approximation factor of at most 2/3 \cite{AKO18}. In this work, we settle this open problem by giving a deterministic $\poly(1/\epsilon, \log n)$-round algorithm for computing a $(1-\epsilon)$-approximate \MWM{} for general graphs in the \congest model.
 Our proposed solution extends the algorithm of Fischer, Mitrovic, and Uitto~\cite{FMU22}, blends in the sequential algorithm from Duan and Pettie~\cite{DuanP14} and the work of Faour, Fuchs, and Kuhn~\cite{FFK21}.
 Interestingly, this solution also implies a  CREW PRAM algorithm with $\poly(1/\epsilon, \log n)$ span using only $O(m)$ processors.
 In addition, 
 with the reduction from Gupta and Peng~\cite{GuptaP13}, we further obtain a semi-streaming algorithm with $\poly(1/\epsilon)$ passes.
 When $\epsilon$ is smaller than a constant $o(1)$ but at least $1/\log^{o(1)} n$, our algorithm 
 is more efficient than both Ahn and Guha's $\poly(1/\epsilon, \log n)$-passes algorithm~\cite{AhnG13} and Gamlath, Kale, Mitrovic, and Svensson's $(1/\epsilon)^{O(1/\epsilon^2)}$-passes algorithm~\cite{GKMS19}.

\end{abstract}
\thispagestyle{empty}	
\clearpage

\section{Introduction}

Matching problems are central problems in the study of both sequential and distributed graph algorithms.
A {\it matching} is a set of edges that do not share endpoints. Given a weighted graph $G = (V, E, w)$, where $w: E \to \{1, \ldots, W\}$, the maximum weight matching (\MWM{}) problem is to compute a matching $M$ with the maximum weight, where the weight of $M$ is defined as $\sum_{e \in M} w(e)$. Given an unweighted graph $G = (V, E)$, the maximum cardinality matching (\MCM{}) problem is to compute a matching $M$ such that $|M|$ is maximized. Clearly, the \MCM{} problem is a special case of the \MWM{} problem.  For $0 < \epsilon < 1$, a $(1-\epsilon)$-\MWM{} (or $(1-\epsilon)$-\MCM{}) is a $(1-\epsilon)$-approximate solution to the \MWM{} (or \MCM{}) problem. Throughout the paper, we let $n = |V|$ and $m = |E|$.

In distributed computing, the \MCM{} and \MWM{} problems have been studied extensively in the \congest model and the \local model. In these models, nodes host processors and operate in synchronized rounds. In each round, each node sends a message to its neighbors, receives messages from its neighbors, and performs local computations. The time complexity of an algorithm is defined to be the number of rounds used.  In the \local model, there are no limits on the message size, while the \congest model is a more realistic model where the message size is limited by $O(\log n)$ bits per link per round.

Computing an exact \MWM{} requires $\Omega(n)$ rounds in both the \congest model and the \local model (e.g., consider the graph $G$ to be a unit-weight even cycle.) Thus, the focus has been on developing efficient approximate algorithms.
In fact, the approximate \MWM{} problem is also one of the few classic combinatorial optimization problems where it is possible to bypass the notorious \congest model lower bound of $\tilde{\Omega}(D+\sqrt{n})$ by \cite{SarmaHKKNPPW12}, where $D$ denotes the diameter of the graph.
For $(1-\epsilon)$-\MWM{} in the \congest model, the lower bounds of \cite{KuhnMW16, AKO18} imply that polynomial dependencies on $(\log n)$ and $(1/\epsilon)$ are needed.
Whether matching upper bounds can be achieved is an intriguing and important  problem, as also mentioned in \cite{FFK21}:

\begin{quotation}``Obtaining a $(1-\epsilon)$-approximation (for \MWM{}) in $\poly(\log n/\epsilon)$ \congest rounds is one of the key open questions in understanding the distributed complexity of maximum matching.''
\end{quotation}

A long line of studies has been pushing progress toward the goal.
Below, we summarize the current fronts made by the existing results (also see \Cref{table:matching}).

\renewcommand{\arraystretch}{1.15}
\begin{table}[h!]\centering
\caption{Previous results on \MCM{} and \MWM{} in the \congest model and the \local model.  }\label{table:matching}
\begin{adjustbox}{width=\textwidth,center}
\begin{tabular}{llllll}
Citation                                          & Problem                                                                           & Ratio                            & Running Time                                                                                             & Type              & Model                        \\ \hline
\multicolumn{1}{|l|}{\cite{czygrinow2008fast}}                      & \multicolumn{1}{l|}{\begin{tabular}[c]{@{}l@{}}\MCM \\  (planar)\end{tabular}}                                                          & \multicolumn{1}{l|}{$1-\epsilon$}     & \multicolumn{1}{l|}{$O(\log(1/\epsilon)\cdot \log^{*} n)$}                                                                                    & \multicolumn{1}{l|}{Det.}  & \multicolumn{1}{l|}{\local}   \\ \hline
\multicolumn{1}{|l|}{\cite{CzygrinowHS09}}                      & \multicolumn{1}{l|}{\begin{tabular}[c]{@{}l@{}}\MCM \\  (bounded arb.)\end{tabular}}                                                          & \multicolumn{1}{l|}{$1-\epsilon$}     & \multicolumn{1}{l|}{$(1/\epsilon)^{O(1/\epsilon)}+O(\log^{*} n)$}                                                                                    & \multicolumn{1}{l|}{Det.}  & \multicolumn{1}{l|}{\local}   \\ \hline
\multicolumn{1}{|l|}{\cite{Nieberg08}}                     & \multicolumn{1}{l|}{\MWM}                                                          & \multicolumn{1}{l|}{$1-\epsilon$}     & \multicolumn{1}{l|}{$O(\epsilon^{-2} \log n \cdot T_{\MIS}(n^{O(1/\epsilon)}))$}                          & \multicolumn{1}{l|}{}      & \multicolumn{1}{l|}{\local}   \\ \hline
\multicolumn{1}{|l|}{\cite{BEPS16}}            & \multicolumn{1}{l|}{\MCM}                                                          & \multicolumn{1}{l|}{$\frac{1}{2}$}              & \multicolumn{1}{l|}{$O(\log \Delta + \log^{4} \log n)$}                                                  & \multicolumn{1}{l|}{Rand.} & \multicolumn{1}{l|}{\local}   \\ \hline
\multicolumn{1}{|l|}{\multirow{2}{*}{\begin{tabular}[c]{@{}l@{}} \cite{EvenMR15}\end{tabular}}}                         & \multicolumn{1}{l|}{\MCM}                                                          & \multicolumn{1}{l|}{$1-\epsilon$}              & \multicolumn{1}{l|}{$\Delta^{O(1/\epsilon)}+O(\log^{*} n / \epsilon^2)$}                                                     & \multicolumn{1}{l|}{Det.}  & \multicolumn{1}{l|}{\local} \\ \cline{2-6}
\multicolumn{1}{|l|}{}                            & \multicolumn{1}{l|}{\MWM}                                                          & \multicolumn{1}{l|}{$1-\epsilon$}   & \multicolumn{1}{l|}{
$O(\epsilon^{-2}\log \epsilon^{-1}) \cdot \log^\ast n + \Delta^{O(1/\epsilon)} \cdot O(\log \Delta)$
}                              & \multicolumn{1}{l|}{Det.}  & \multicolumn{1}{l|}{\local} \\ \hline
\multicolumn{1}{|l|}{\cite{FGK17}}              & \multicolumn{1}{l|}{\MCM}                                                          & \multicolumn{1}{l|}{$1-\epsilon$}     & \multicolumn{1}{l|}{$O(\Delta^{1/\epsilon} + \poly(\frac{1}{\epsilon}) \log^{*} n)$}                                                                                    & \multicolumn{1}{l|}{Det.}  & \multicolumn{1}{l|}{\local}   \\ \hline
\multicolumn{1}{|l|}{\multirow{2}{*}{\begin{tabular}[c]{@{}l@{}}\cite{GhaffariKM17} $+$\\ ~\cite{RozhonG20,GhaffariGR21} \end{tabular}}}             & \multicolumn{1}{l|}{\MWM}                                                          & \multicolumn{1}{l|}{$1-\epsilon$}     & \multicolumn{1}{l|}{$O(\epsilon^{-1} \log^3 n)$}                                                                                    & \multicolumn{1}{l|}{Rand.}  & \multicolumn{1}{l|}{\local}  \\ \cline{2-6}
\multicolumn{1}{|l|}{}                            & \multicolumn{1}{l|}{\MWM}                                                          & \multicolumn{1}{l|}{$1 - \epsilon$} & \multicolumn{1}{l|}{$O(\epsilon^{-1} \log^7 n)$}           & \multicolumn{1}{l|}{Det.}  & \multicolumn{1}{l|}{\local}  \\ \hline
\multicolumn{1}{|l|}{\cite{GHK18}}             & \multicolumn{1}{l|}{\MCM}                                                          & \multicolumn{1}{l|}{$1-\epsilon$}     & \multicolumn{1}{l|}{$O(\epsilon^{-9} \log^{5} \Delta \log^2 n)$}                                                                                    & \multicolumn{1}{l|}{Det.}  & \multicolumn{1}{l|}{\local}   \\ \hline
\multicolumn{1}{|l|}{\begin{tabular}[c]{@{}l@{}}\cite{GHK18} $+$\\ \cite{GhaffariKMU18} \end{tabular}}             & \multicolumn{1}{l|}{\MWM}                                                          & \multicolumn{1}{l|}{$1-\epsilon$}     & \multicolumn{1}{l|}{$O(\epsilon^{-7} \log^{4} \Delta \log^3 n)$}                                                                                    & \multicolumn{1}{l|}{Det.}  & \multicolumn{1}{l|}{\local}   \\ \hline
\multicolumn{1}{|l|}{\multirow{2}{*}{\begin{tabular}[c]{@{}l@{}} \\ \cite{Harris19}\end{tabular}}}                      & \multicolumn{1}{l|}{\MWM}                                                          & \multicolumn{1}{l|}{$1-\epsilon$}     & \multicolumn{1}{l|}{$O(\epsilon^{-4} \log^2 \Delta + \epsilon^{-1} \log^{*} n)$}                                                                                    & \multicolumn{1}{l|}{Det.}  & \multicolumn{1}{l|}{\local} \\ \cline{2-6}
\multicolumn{1}{|l|}{}                            & \multicolumn{1}{l|}{\MWM}                                                          & \multicolumn{1}{l|}{$1 - \epsilon$} & \multicolumn{1}{l|}{\begin{tabular}[c]{@{}l@{}}$O(\epsilon^{-3}\log (\Delta + \log \log n)$ \\$+\epsilon^{-2}\cdot (\log \log n)^2 )$\end{tabular}}           & \multicolumn{1}{l|}{Rand.}  & \multicolumn{1}{l|}{\local}  \\ \hline
\multicolumn{1}{|l|}{\cite{II86}}            & \multicolumn{1}{l|}{\MCM}                                                          & \multicolumn{1}{l|}{$\frac{1}{2}$}              & \multicolumn{1}{l|}{$O(\log n)$}                                                                         & \multicolumn{1}{l|}{Rand.} & \multicolumn{1}{l|}{\congest} \\ \hline
\multicolumn{1}{|l|}{\cite{ABI86} }                 & \multicolumn{1}{l|}{\MCM}                                                          & \multicolumn{1}{l|}{$\frac{1}{2}$}              & \multicolumn{1}{l|}{$O(\log n)$}                                                                         & \multicolumn{1}{l|}{Rand.} & \multicolumn{1}{l|}{\congest} \\ \hline
\multicolumn{1}{|l|}{\cite{Luby86}}                        & \multicolumn{1}{l|}{\MCM}                                                          & \multicolumn{1}{l|}{$\frac{1}{2}$}              & \multicolumn{1}{l|}{$O(\log n)$}                                                                         & \multicolumn{1}{l|}{Rand.} & \multicolumn{1}{l|}{\congest} \\ \hline
\multicolumn{1}{|l|}{\cite{HKP01}}           & \multicolumn{1}{l|}{\MCM}                                                          & \multicolumn{1}{l|}{$\frac{1}{2}$}              & \multicolumn{1}{l|}{$O(\log^4  n)$}                                                                         & \multicolumn{1}{l|}{Det.}  & \multicolumn{1}{l|}{\congest} \\ \hline
\multicolumn{1}{|l|}{\cite{WW04}} & \multicolumn{1}{l|}{\MWM}                                                          & \multicolumn{1}{l|}{$\frac{1}{5}$}              & \multicolumn{1}{l|}{$O(\log^2 n)$}                                                                       & \multicolumn{1}{l|}{Rand.} & \multicolumn{1}{l|}{\congest} \\ \hline
\multicolumn{1}{|l|}{\cite{LPR09}}               & \multicolumn{1}{l|}{\MWM}                                                          & \multicolumn{1}{l|}{$\frac{1}{4} - \epsilon$} & \multicolumn{1}{l|}{$O(\epsilon^{-1}\log \epsilon^{-1} \log n)$}                                         & \multicolumn{1}{l|}{Rand.}      & \multicolumn{1}{l|}{\congest} \\ \hline
\multicolumn{1}{|l|}{\multirow{3}{*}{\begin{tabular}[c]{@{}l@{}} \cite{LPP15}\end{tabular}}}                & \multicolumn{1}{l|}{\begin{tabular}[c]{@{}l@{}}\MCM\\  (bipartite)\end{tabular}} & \multicolumn{1}{l|}{$1-\epsilon$}     & \multicolumn{1}{l|}{$O(\log n/\epsilon^3)$}                                                              & \multicolumn{1}{l|}{Rand.} & \multicolumn{1}{l|}{\congest} \\ \cline{2-6}
\multicolumn{1}{|l|}{}                            & \multicolumn{1}{l|}{\MCM}                                                          & \multicolumn{1}{l|}{$1-\epsilon$}     & \multicolumn{1}{l|}{$2^{O(1/\epsilon)} \cdot O(   \epsilon^{-4} \log \epsilon^{-1} \cdot \log n)$} & \multicolumn{1}{l|}{Rand.}      & \multicolumn{1}{l|}{\congest} \\ \cline{2-6}
\multicolumn{1}{|l|}{}                            & \multicolumn{1}{l|}{\MWM}                                                          & \multicolumn{1}{l|}{$\frac{1}{2} - \epsilon$} & \multicolumn{1}{l|}{$O(\log(1/\epsilon) \cdot \log n)$}                                                  & \multicolumn{1}{l|}{Rand.} & \multicolumn{1}{l|}{\congest} \\ \hline
\multicolumn{1}{|l|}{\multirow{4}{*}{\begin{tabular}[c]{@{}l@{}} \cite{BCGS17} \end{tabular}}}          & \multicolumn{1}{l|}{\MWM}                                                          & \multicolumn{1}{l|}{$\frac{1}{2}$}              & \multicolumn{1}{l|}{$O(T_{\MIS}(n) \cdot \log W)$}                                                            & \multicolumn{1}{l|}{Rand.} & \multicolumn{1}{l|}{\congest} \\ \cline{2-6}
\multicolumn{1}{|l|}{}                            & \multicolumn{1}{l|}{\MWM}                                                          & \multicolumn{1}{l|}{$\frac{1}{2}$}              & \multicolumn{1}{l|}{$O(\Delta + \log n)$}                                                                & \multicolumn{1}{l|}{Det.}  & \multicolumn{1}{l|}{\congest} \\ \cline{2-6}
\multicolumn{1}{|l|}{}                            & \multicolumn{1}{l|}{\MWM}                                                          & \multicolumn{1}{l|}{$\frac{1}{2}-\epsilon$}   & \multicolumn{1}{l|}{$O(\log \Delta/ \log \log \Delta)$}                                                  & \multicolumn{1}{l|}{Rand.} & \multicolumn{1}{l|}{\congest} \\ \cline{2-6}
\multicolumn{1}{|l|}{}                            & \multicolumn{1}{l|}{\MCM}                                                          & \multicolumn{1}{l|}{$1-\epsilon$}     & \multicolumn{1}{l|}{$2^{O(1/\epsilon)}\cdot O(\log \Delta/ \log \log \Delta)$}                                                  & \multicolumn{1}{l|}{Rand.} & \multicolumn{1}{l|}{\congest} \\ \hline
\multicolumn{1}{|l|}{\multirow{2}{*}{\begin{tabular}[c]{@{}l@{}} \cite{Fischer17}\end{tabular}}}                         & \multicolumn{1}{l|}{\MCM}                                                          & \multicolumn{1}{l|}{$\frac{1}{2}$}              & \multicolumn{1}{l|}{$O(\log^2 \Delta \cdot \log n)$}                                                     & \multicolumn{1}{l|}{Det.}  & \multicolumn{1}{l|}{\congest} \\ \cline{2-6}
\multicolumn{1}{|l|}{}                            & \multicolumn{1}{l|}{\MWM}                                                          & \multicolumn{1}{l|}{$\frac{1}{2}-\epsilon$}   & \multicolumn{1}{l|}{$O(\log^2 \Delta \cdot \log \epsilon^{-1} + \log^{*} n)$}                              & \multicolumn{1}{l|}{Det.}  & \multicolumn{1}{l|}{\congest} \\ \hline
\multicolumn{1}{|l|}{\multirow{2}{*}{\begin{tabular}[c]{@{}l@{}} \cite{AKO18}\end{tabular}}}              & \multicolumn{1}{l|}{\begin{tabular}[c]{@{}l@{}}\MWM \\  (bipartite)\end{tabular}} & \multicolumn{1}{l|}{$1-\epsilon$}     & \multicolumn{1}{l|}{$O(\frac{\log(\Delta W )}{\epsilon^2} + \frac{\log^2 \Delta + \log^{*} n} {\epsilon})$}           & \multicolumn{1}{l|}{Det.}  & \multicolumn{1}{l|}{\congest} \\ \cline{2-6}
\multicolumn{1}{|l|}{}                            & \multicolumn{1}{l|}{\MWM}                                                          & \multicolumn{1}{l|}{$\frac{2}{3} - \epsilon$} & \multicolumn{1}{l|}{$O(\frac{\log(\Delta W )}{\epsilon^2} + \frac{\log^2 \Delta + \log^{*} n} {\epsilon})$}           & \multicolumn{1}{l|}{Det.}  & \multicolumn{1}{l|}{\congest} \\ \hline
\multicolumn{1}{|l|}{\cite{FFK21}}            & \multicolumn{1}{l|}{\MWM}                                                          & \multicolumn{1}{l|}{$1-\epsilon$}              & \multicolumn{1}{l|}{$2^{O(1/\epsilon)}\cdot \polylog(n)$}                                                                         & \multicolumn{1}{l|}{Det.} & \multicolumn{1}{l|}{\congest} \\ \hline
\multicolumn{1}{|l|}{\cite{FMU22}}            & \multicolumn{1}{l|}{\MCM}                                                          & \multicolumn{1}{l|}{$1-\epsilon$}              & \multicolumn{1}{l|}{$\poly(\log n, 1/\epsilon)$}                                                                         & \multicolumn{1}{l|}{Det.} & \multicolumn{1}{l|}{\congest} \\ \hline
\multicolumn{1}{|l|}{\cite{CS22}}                      & \multicolumn{1}{l|}{\begin{tabular}[c]{@{}l@{}}\MWM \\  (minor-free)\end{tabular}}                                                          & \multicolumn{1}{l|}{$1-\epsilon$}     & \multicolumn{1}{l|}{$\poly(\log n, 1/\epsilon)$}                                                                                    & \multicolumn{1}{l|}{Rand.}  & \multicolumn{1}{l|}{\congest}   \\ \hline
\multicolumn{1}{|l|}{{\bf this paper}}            & \multicolumn{1}{l|}{\MWM}                                                          & \multicolumn{1}{l|}{$1-\epsilon$}              & \multicolumn{1}{l|}{$\poly(\log n, 1/\epsilon)$}                                                                         & \multicolumn{1}{l|}{Det.} & \multicolumn{1}{l|}{\congest} \\ \hline
\end{tabular}
\end{adjustbox}
\end{table}
\renewcommand{\arraystretch}{1}

\begin{itemize}[leftmargin=*]
\item $c$-\MWM{} algorithms for $c < 2/3$. Wattenhofer and Wattenhofer \cite{WW04} were among the first to study the \MWM{} problem in the \congest model. They gave an algorithm for computing a $(1/5)$-\MWM{} that runs in $O(\log^2 n)$ rounds. Then Lotker, Patt-Shamir, and Ros\'en~\cite{LPR09} developed an algorithm that computes a $(1/4-\epsilon)$-\MWM{} in $O((1/\epsilon) \log (1/\epsilon) \log n)$ rounds. Later, Lotker, Patt-Shamir, and Pettie~\cite{LPP15} improved the approximation ratio and the number of rounds to $1/2-\epsilon$ and $O(\log(1/\epsilon)\cdot \log n)$ respectively.
Bar-Yehuda, Censor-Hillel, Gaffari, and Schwartzman~
\cite{BCGS17} gave a (1/2)-\MWM{} algorithm that runs in $O(T_{\MIS}(n) \cdot \log W)$ rounds, where $T_{\MIS}(n)$ is the time needed to compute a maximal independent set (MIS) in an $n$-node graph. Fischer \cite{Fischer17} gave a deterministic algorithm that computes a $(1/2 -\epsilon)$-\MWM{} in $O(\log^{2}\Delta \cdot \log \epsilon^{-1} + \log^{*} n)$ rounds by using a rounding approach, where $\Delta$ is the maximum degree. Then Ahmadi, Khun, and Oshman~\cite{AKO18} gave another rounding approach for $(2/3 - \epsilon)$-\MWM{} that runs in $O(\frac{\log(\Delta W)}{\epsilon^2} + \frac{\log^2 \Delta + \log^{*} n}{\epsilon})$ rounds deterministically. The rounding approaches of  \cite{Fischer17} and \cite{AKO18} inherently induce a $2/3$ approximation ratio because the linear programs they consider have an integrality gap of 2/3 in general graphs.

\item Exponential-in-$(1/\epsilon)$ algorithms. \cite{LPP15} showed that the random bipartition technique can be applied to get a randomized $2^{O(1/\epsilon)} \cdot O(\log n)$-round $(1-\epsilon)$-\MCM{} algorithm. 
Such a technique was later also applied by \cite{FFK21}, who gave a deterministic $2^{O(1/\epsilon)} \cdot \poly(\log n)$-round algorithm for $(1-\epsilon)$-\MWM{}. 

\item Bipartite graphs and other special graphs. For bipartite graphs, Lotker et al.~\cite{LPP15} gave an algorithm for $(1-\epsilon)$-\MCM{} that runs in $O(\log n /\epsilon^3)$ rounds. Ahmadi et al.~\cite{AKO18} showed that $(1-\epsilon)$-\MWM{} in bipartite graphs can be computed in $O(\log(\Delta W) / \epsilon^2 + (\log^2 \Delta + \log^{*} n) /\epsilon)$ rounds deterministically. Recently, Chang and Su~\cite{CS22} showed that a $(1-\epsilon)$-\MWM{} can be obtained in $\poly(1/\epsilon, \log n)$ rounds in minor-free graphs with randomization by using expander decompositions.  

\item Algorithms using larger messages. It was shown in~\cite{FGK17} that the $(1-\epsilon)$-\MWM{} problem can be reduced to hypergraph maximal matching problems, which are known to be solvable efficiently in the \local model. A number of $\poly(1/\epsilon, \log n)$-round algorithms are known for obtaining $(1-\epsilon)$-\MWM{} \cite{Nieberg08,  GhaffariKMU18, Harris19}. The current fastest algorithms are by \cite{Harris19}, who gave a $O(\epsilon^{-3} \log(\Delta + \log \log n)+\epsilon^{-2}(\log \log n)^2)$-round randomized algorithm and a $O(\epsilon^{-4} \log^2 \Delta + \epsilon^{-1} \log^{*} n)$-round deterministic algorithm.

\end{itemize}
Recently, Fischer, Mitrovi\'c, and Uitto~\cite{FMU22} made significant progress by giving a $\poly(1/\epsilon, \log n)$-round algorithm for computing a $(1-\epsilon)$-\MCM~{} --- the unweighted version of the problem.
Despite the progress,
the complexity of $(1-\epsilon)$-\MWM{} still remains unsettled. 
We close the gap by giving the first  $\poly(1/\epsilon, \log n)$ round algorithm for computing $(1-\epsilon)$-\MWM{} in the \congest model. 
The result is summarized as \cref{thm:main}.

\begin{theorem}\label{thm:main}
  There exists a deterministic \congest{} algorithm that solves the $(1-\epsilon)$-\MWM{} problem in $\poly(1/\epsilon, \log n)$ rounds.
\end{theorem}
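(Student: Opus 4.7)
The plan is to combine the sequential scaling framework of Duan and Pettie with the distributed machinery of Fischer-Mitrovi\'c-Uitto (FMU22) and Faour-Fuchs-Kuhn (FFK21). Following Duan--Pettie, the algorithm maintains a matching $M$ together with a dual-feasible labeling, and processes the problem in $O(\log(nW))$ scales, refining the duals at each scale so that the final matching is a $(1-\epsilon)$-\MWM{}. Within each scale, the algorithm performs $O(1/\epsilon)$ augmentation phases: each phase produces a large vertex-disjoint collection of eligible augmenting paths (alternating paths between free endpoints whose weight gain is positive under the current duals) of length $O(1/\epsilon)$, then adjusts duals Hungarian-style. Summing over all scales and phases gives $\poly(1/\epsilon)\cdot O(\log n)$ phases total, so it suffices to implement each phase in $\poly(1/\epsilon,\log n)$ \congest{} rounds.

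The implementation of a single phase mirrors FMU22's unweighted machinery, but defined over the eligibility subgraph induced by tight dual constraints. FMU22 finds short augmenting paths in \congest{} by a layered alternating BFS, reduces the packing of disjoint augmenting paths to a sequence of ``blocking-flow''-type subproblems on the layered graph, and solves those using derandomized network decomposition primitives in $\poly(1/\epsilon,\log n)$ rounds. Porting this to the weighted setting, the layered search is performed over eligible edges only, and dual updates at the end of each phase are a single global additive shift that can be propagated by BFS in $O(\log n)$ bits. The MWU-style continuous weight maintenance of FFK21 is used here to encode the eligibility changes as incremental updates that fit inside the $O(\log n)$-bit bandwidth without having to recompute layers from scratch each time an augmentation is applied.

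The main obstacle is handling blossoms in the weighted general-graph setting while respecting \congest{} bandwidth. Unlike the unweighted or bipartite case, eligible augmenting walks in weighted general graphs traverse nested odd blossoms whose contraction is inherently non-local. The crucial step is to show that, with a sufficiently fine weight quantization (controlled by $\epsilon$), the algorithm needs to maintain only blossoms whose nesting depth and internal diameter are $\poly(1/\epsilon,\log n)$. Each such blossom can then be simulated by a BFS tree along which dual information, eligibility tests, and path endpoints are aggregated with $O(\log n)$-bit messages. Establishing this structural bound on blossoms, and verifying that the FMU22 layered search and disjoint-path packing still function correctly after contraction --- in particular that the boundary edges of blossoms never carry more than $\poly(1/\epsilon,\log n)$-many concurrent messages --- is the technically hardest part of the argument. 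Once this is in place, multiplying the per-phase round complexity by the $\poly(1/\epsilon)\cdot O(\log n)$ phase count yields the claimed $\poly(1/\epsilon,\log n)$ bound.
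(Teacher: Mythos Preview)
Your high-level outline (Duan--Pettie scaling, $O(\log W)$ scales, $O(1/\epsilon)$ iterations each, dual adjustment after each) matches the paper, but several of the pieces you invoke are mischaracterized, and the central technical difficulty is not the one you identify.

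First, two factual issues. FMU22 does \emph{not} work by layered alternating BFS plus blocking-flow subproblems; its core is a bounded-depth parallel DFS in which every free vertex grows a disjoint ``structure,'' with a careful overtake mechanism when structures collide. The guarantees the paper actually uses---that every short alternating walk surviving after removing the found paths and the active DFS paths is captured inside some structure of size $\poly(1/\epsilon)$---come specifically from this DFS analysis, not from any layered/blocking picture. Second, FFK21 is not used here for any ``MWU-style continuous weight maintenance''; the only thing the paper takes from FFK21 is the diameter-reduction theorem that lets one assume a broadcast tree of depth $O((1/\epsilon)\log^3 n)$. Eligibility updates are handled by discrete weight modifiers $\Delta w$ in the Chang--Su style, not by any continuous scheme.

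The deeper gap is your treatment of blossoms and long paths. You frame the hard step as \emph{proving} a structural bound: that with fine enough quantization, blossoms have nesting depth and diameter $\poly(1/\epsilon,\log n)$. No such structural fact holds; the eligible graph can contain augmenting paths of length $\Theta(n)$ and blossoms of size $\Theta(n)$ (the paper's Figures~1(a),(c)). The paper's solution is algorithmic, not structural: it is allowed to \emph{delete} a small budget of matched edges and free vertices from the eligible graph each iteration (charged via $\Delta w$ and free-vertex dual bumps against $\epsilon\cdot \hat w(M^*)$), and it uses this budget in two ways. It runs a \emph{vertex-weighted} FMU DFS on $G_{elig}/\Omega$ (each contracted blossom weighted by its true size) so that any new blossom formed inside a DFS structure is automatically of size $\poly(1/\epsilon)$; and after contracting those, it observes that the weighted $O(1/\epsilon)$-neighborhood of the remaining free vertices is bipartite, which lets a pigeonhole ``path-cutting'' argument pick one thin layer of matched edges whose removal severs every remaining augmenting path. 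Without these two ingredients you cannot reach the primal blocking conditions (no augmenting path \emph{and} no reachable full blossom in $G_{elig}/\Omega$) that the dual adjustment step requires, so the Hungarian update would be ill-defined. Your proposal does not supply either ingredient.
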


In the parallel setting, Hougardy and Vinkemeier~\cite{HougardyV06} gave a \textsf{CREW PRAM}\footnote{A parallel random access machine that allows concurrent reads but requires exclusive writes.} algorithm that solves the $(1-\epsilon)$-\MWM{} problem in $O(\frac1{\epsilon}\log^5 n)$ span with $n^{O(1/\epsilon)}$ processors.  However, it is still not clear whether a {\it work-efficient} algorithm with a $\poly(1/\epsilon, \log n)$-span and $O(m)$ processors exists. Our \congest algorithm can be directly simulated in the \textsf{CREW PRAM} model, obtaining a $\poly(1/\epsilon, \log n)$ span algorithm that uses only $O(m)$ processors. The total work matches the best known sequential algorithm of \cite{DuanP14}, up to $\poly(1/\epsilon, \log n)$ factors.

\begin{corollary}\label{thm:main-parallel}
  There exists a deterministic \textsf{CREW PRAM} algorithm that solves the $(1-\epsilon)$-\MWM{} problem with $\poly(1/\epsilon, \log n)$ span and uses only $O(m)$ processors.
\end{corollary}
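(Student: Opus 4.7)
The plan is to simulate the deterministic \congest{} algorithm of \Cref{thm:main} directly in the \textsf{CREW PRAM} model. Assign one processor to each edge of $G$ and, to handle trivial cases, one processor per isolated vertex; this uses $O(m+n)=O(m)$ processors after dropping isolated vertices in a preprocessing scan. Every vertex $v$ keeps its current algorithmic state in a dedicated shared-memory cell, and for each edge $\{u,v\}$ the associated edge-processor owns two cells, one per direction, that will serve as the ``channels'' carrying the $O(\log n)$-bit messages prescribed by the \congest{} algorithm. Concurrent reads are used whenever several edge-processors around $v$ must inspect $v$'s state cell, which is exactly what \textsf{CREW} permits; all writes target cells that are exclusive to the edge-processor or vertex-processor performing them, so no write conflict can arise.

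To simulate one \congest{} round, each edge-processor executes the two message-computation steps (one per endpoint) using the locally available state cells, which is constant span and $O(1)$ work per edge. The only nontrivial per-round operation is the vertex-level local computation, which reads/aggregates information over the $\deg(v)$ incident edges (for instance, selecting a heaviest incident slack, or summing contributions from neighbors). Because the \congest{} algorithm of \Cref{thm:main}, built by combining the primitives of \cite{FMU22,DuanP14,FFK21}, uses only aggregations that decompose into standard associative reductions and prefix-sum type primitives, each such aggregation can be carried out in $O(\log n)$ span and $O(\deg(v))$ work along a reduction tree on the incidence list of $v$. Summed over all vertices this is $O(m)$ work and $O(\log n)$ span per round.

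Multiplying the per-round $O(\log n)$ span by the $T=\poly(1/\epsilon,\log n)$ \congest{} rounds from \Cref{thm:main} yields the claimed $\poly(1/\epsilon,\log n)$ span, with the processor count bounded by $O(m)$ throughout since the reduction trees are executed processor-by-edge at every step rather than allocated anew.

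The main obstacle is the second step: one must verify that every internal subroutine called by the \congest{} algorithm indeed admits an $O(\log n)$-span PRAM implementation on $O(m)$ processors, i.e., that no vertex ever needs to perform a local computation whose span or work is disproportionately large. This is essentially a bookkeeping obligation ---  for each building block inherited from \cite{FMU22,DuanP14,FFK21}, one must exhibit its local work as a combination of (sorted) reads, associative reductions, or prefix sums over $O(\deg(v))$ incident edges. No new algorithmic idea is required; the key is to confirm that the constant-number-of-aggregations-per-round structure of the \congest{} algorithm is preserved across all invoked subroutines.
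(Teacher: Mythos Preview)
Your proposal is correct and follows essentially the same approach as the paper: simulate the \congest{} algorithm of \Cref{thm:main} in \textsf{CREW PRAM}. The paper's justification is spread across \Cref{sec:pram-approx-primal} and \Cref{lemma:simulate-fmu-pram}, where each subroutine is checked individually --- exactly the ``bookkeeping obligation'' you identify.

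One minor refinement: your claim that every per-round local computation decomposes into an associative reduction over $O(\deg(v))$ incident edges is not literally true for all steps. For instance, detecting nested blossoms inside a structure $S_\alpha$ (Step~3 of \Cref{alg:main}) works by gathering an $O(\Cmax^2)$-edge induced subgraph to a single node over many \congest{} rounds and then running a sequential blossom-finding routine there; the paper accounts for this as $O(\Cmax^2)=\poly(1/\epsilon)$ sequential time rather than as an $O(\log n)$-span reduction. Similarly, the DFS step in \VertexAlgPhase{} incurs an $O(\log n)$ factor to locate the next unvisited neighbor (\Cref{lemma:simulate-fmu-pram}). None of this threatens the final $\poly(1/\epsilon,\log n)$ span, but the simulation is not a uniform ``$O(\log n)$ span per \congest{} round''; the paper instead tracks per-subroutine slowdown factors of $O(\Cmax^2)$ or $O(\Cmax^2\log n)$. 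The paper also notes that the \textsf{CREW PRAM} implementation bypasses the weak-diameter reduction of \Cref{thm:diameter} altogether, since global counts such as $|M|$ and $|\mathcal{P}|$ are obtained via standard parallel prefix sums rather than aggregation along a broadcast tree.
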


\paragraph{Semi-Streaming Model} 
In the semi-streaming model, the celebrated results of $(1-\epsilon)$-\MWM{} with $\poly(1/\epsilon, \log n)$ passes were already known by Ahn and Guha~\cite{AhnG13, AhnG11}.
Thus, in the semi-streaming model, the focus has been on obtaining algorithms with $o(\log n)$ dependencies on $n$.
The state of the art algorithms for $(1-\epsilon)$-\MWM{} still have exponential dependencies on $(1/\epsilon)$ (see \cite{GKMS19}).
Recently, Fischer, Mitrovi\'{c}, and Uitto~\cite{FMU22} made a breakthrough in the semi-streaming model, obtaining a $\poly(1/\epsilon)$  passes algorithm to the $(1-\epsilon)$-\MCM{} problem.

Our \congest algorithm translates to an $\poly(1/\epsilon)\cdot\log W$ passes algorithm in the semi-streaming model.
Bernstein and Dudeja~\cite{BD23} pointed out that,
with the reduction from Gupta and Peng~\cite{GuptaP13}, an input instance can be reduced into $O(\log_{1/\epsilon} W)$ instances of $(1-O(\epsilon))$-\MWM{} such that, the largest weight $W'$ in each instance can be upper bounded by $W'=(1/\epsilon)^{O(1/\epsilon)}$.
Now that $\log W'=\poly(1/\epsilon)$, 
by running all the $(1-O(\epsilon))$-\MWM{} instances in parallel, we obtain
the very first $\poly(1/\epsilon)$-passes semi-streaming algorithm that computes an $(1-\epsilon)$-\MWM{}.
We summarize the result below, and provide the proof in appendix.

\begin{theorem}\label{thm:semi-streaming-result}
  There exists a deterministic algorithm that returns an $(1-\epsilon)$-approximate maximum weighted matching using $\poly(1/\epsilon)$ passes in the semi-streaming model. The algorithm requires $O(n\cdot \log W\cdot  \poly(1/\epsilon))$ words of memory.
\end{theorem}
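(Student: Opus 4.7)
The plan is to stitch together three ingredients: the \congest algorithm of \cref{thm:main}, its direct translation to the semi-streaming model, and the Gupta--Peng weight-bucketing reduction~\cite{GuptaP13}. First I would observe (or invoke) the standard simulation of \congest in the semi-streaming model: one pass suffices to collect, for every vertex, the set of messages it would need from its neighbors in a single round, provided the per-vertex state remains $\tilde O(1)$ bits. Applied to the algorithm of \cref{thm:main}, this yields a semi-streaming algorithm whose number of passes equals the round complexity but with the $\log n$ factor replaced by the relevant log-range parameter of the weights, giving $\poly(1/\epsilon)\cdot \log W$ passes and $\tilde O(n\cdot\poly(1/\epsilon))$ words of memory on an instance with maximum weight $W$.

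Next I would apply the Gupta--Peng reduction as described in the excerpt: partition the edges of the input graph into overlapping weight classes, each spanning a window of length $O((1/\epsilon)\log(1/\epsilon))$ in the exponent (with base $1/\epsilon$), obtaining $O(\log_{1/\epsilon} W)$ sub-instances $G_1,\ldots,G_L$ whose maximum weight is $W' = (1/\epsilon)^{O(1/\epsilon)}$ after rescaling. By Gupta and Peng's analysis, a $(1-O(\epsilon))$-\MWM{} on each $G_i$ can be combined into a $(1-\epsilon)$-\MWM{} on $G$; this is the approximation-preservation step which I would cite rather than reprove.

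I would then run all $L$ sub-instances in parallel on the same stream: in each pass, each sub-instance advances one simulated round, and the memory used is the sum of the per-instance memories. Because $\log W' = O((1/\epsilon)\log(1/\epsilon)) = \poly(1/\epsilon)$, each sub-instance terminates after $\poly(1/\epsilon)\cdot \log W' = \poly(1/\epsilon)$ passes, so the whole parallel execution also uses $\poly(1/\epsilon)$ passes. The total memory becomes $O(\log_{1/\epsilon} W)\cdot \tilde O(n\cdot\poly(1/\epsilon)) = O(n\cdot\log W\cdot\poly(1/\epsilon))$ words, matching the bound claimed in the statement. Finally I would combine the $L$ matchings using the Gupta--Peng merging step, which itself runs in a constant number of additional passes with $\tilde O(n)$ space.

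The main obstacle I anticipate is the first step: certifying that the \congest algorithm of \cref{thm:main} admits a pass-efficient semi-streaming simulation in which the number of passes depends on $\log W$ rather than $\log n$. This amounts to verifying that the $\log n$ factors in \cref{thm:main} all arise from weight-independent primitives (e.g.\ MIS-like subroutines and scaling phases) that can be emulated with $\tilde O(n\cdot\poly(1/\epsilon))$ total state per pass, and that only $O(\log W)$ weight-scaling phases are needed; once this is established, the Gupta--Peng reduction and the parallel-streaming composition are largely bookkeeping.
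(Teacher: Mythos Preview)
Your proposal is correct and follows essentially the same approach as the paper: simulate the scaling framework in the semi-streaming model to obtain a $\poly(1/\epsilon,\log W)$-pass, $n\cdot\poly(1/\epsilon,\log W)$-space algorithm (this is the paper's \cref{lem:semi-streaming-simulate-fig2}), then apply the Gupta--Peng reduction (\cref{lemma:gupta-peng-reduction}) to reduce to instances with $\log W'=\poly(1/\epsilon)$ and run them in parallel. Your identified obstacle---showing that the $\log n$ factors in the \congest round count become $\log W$ (or disappear) in the pass count---is exactly the content of \cref{lem:semi-streaming-simulate-fig2}, and your diagnosis of where those factors come from (global aggregation, maximal-matching subroutines, the $O(\log W)$ scales) is accurate; in semi-streaming the first two are free or $O(1)$ passes, leaving only the $\poly(1/\epsilon,\log W)$ dependence from the scaling loop and the \textsc{Approx\_Primal} calls.
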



We remark that the results of Ahn and Guha~\cite{AhnG13, AhnG11} do not translate easily to a \congest{} algorithm within $\poly(1/\epsilon, \log n)$ rounds.
In particular, in \cite{AhnG11} the algorithm reduces to solving several instances of minimum odd edge cut\footnote{The goal is to return a mincut $(X, V\setminus X)$ among all subsets $X\subseteq V$ with an odd cardinality and  $|X|=O(1/\epsilon)$.}.
It seems hard to solve minimum odd edge cut in \congest, given the fact that approximate minimum edge cut has a lower bound $\tilde\Omega(D+\sqrt{n})$~\cite{GhaffariK13}, where $D$ is the diameter of the graph.
On the other hand, in \cite{AhnG13} the runtime per pass could be as high as $n^{O(1/\epsilon)}$, so it would be inefficient in \congest.

\subsection{Related Works and Other Approaches}
\label{sec:tech_sum}
\paragraph{Sequential Model} For the sequential model,  by the classical results of \cite{MV80,Blum90,GT91}, it was known that the exact \MCM{} and \MWM{} problems can be solved in $\tilde{O}(m\sqrt{n})$ time. For approximate matching, it is well-known that a $\frac12$-\MWM{} can be computed in linear time by computing a maximal matching.
Although near-linear time algorithms for $(1-\epsilon)$-\MCM{} were known in the 1980s~\cite{MV80,GT91},
it was a challenging task to obtain a near-linear time $\alpha$-\MWM{} algorithm for the approximate ratio $\alpha > \frac12$.
Several near-linear time algorithms were developed, such as $(\frac23-\epsilon)$-\MWM{}~\cite{DH03a,PS04} and $(\frac34-\epsilon)$-\MWM{}~\cite{DuanP10,HankeH10}.
Duan and Pettie~\cite{DuanP14} gave the first 
near-linear time algorithms for $(1-\epsilon)$-\MWM{}, which runs in $O(\epsilon^{-1} \log (1/\epsilon)\cdot m)$ time.

 \paragraph{Other Approaches} A number of different approaches have been proposed for the $(1-\epsilon)$-\MWM{} problem in distributed settings, which we summarize and discuss as follows: 
 
 \begin{itemize}[leftmargin = *]
 
 \item Augmenting paths. 
 We say an augmenting path is an $l$-{\it augmenting path} if it contains at most $l$ vertices. Being able to find a set of (inclusion-wise) maximal  $l$ augmenting paths in $\poly(l, \log n)$ rounds is a key subproblem in many known algorithms for $(1-\epsilon)$-\MCM{},  where $l = O(1/\epsilon)$. In bipartite graphs,  \cite{LPP15} showed that the subproblem can be done by simulating Luby's MIS algorithm on the fly. On general graphs, finding an augmenting path is significantly more complicated than that in bipartite graphs. Finding a maximal set of augmenting paths is even more difficult. In the recent breakthrough of \cite{FMU22}, they showed how to find an ``almost'' maximal set of $l$-augmenting paths in $\poly(l, \log n)$ rounds in the \congest model in general graphs via bounded-length parallel DFS.  We note that the problem of finding a maximal set of $l$ augmenting paths can be thought of as finding a hypergraph maximal matching, where an $l$-augmenting path is represented by a rank-$l$ hyperedge.

 \item Hypergraph maximal matching. For the \MWM{} problem, the current approaches \cite{HougardyV06, Nieberg08, GhaffariKMU18, Harris19} in the \textsf{PRAM} model and the \local model  consider an extension of $l$-augmenting paths, the {\it $l$-augmentations}. Roughly speaking, an $l$-augmentation is an alternating path\footnote{more precisely, with an additional condition that each endpoint is free if its incident edge is unmatched.} or cycle with at most $l$ vertices. Similar to $l$-augmenting paths, the $l$-augmentations can also be represented by a rank-$l$ hypergraph (albeit a significantly larger one). Then they divide the augmentations into poly-logarithmic classes based on their {\it gains}. From the class with the highest gain to the lowest, compute the hypergraph maximal matching of the hyperedges representing those augmentations. While in the \local model and the \textsf{PRAM} model, the rank-$l$ hypergraph can be built explicitly and maximal independent set algorithms can be simulated on the hypergraph efficiently to find a maximal matching; it is not the case for the \congest model due to the bandwidth restriction.

 \item The rounding approach. The rounding approaches work by first solving a linear program relaxation of the \MWM{} problem. In \cite{Fischer17, AKO18}, they both developed procedures for obtaining fractional solutions and deterministic procedures to round a fractional matching to an integer matching (with some loss). While \cite{AKO18} obtained an algorithm for $(1-\epsilon)$-\MWM{} in bipartite graphs, the direct linear program that they have considered has an integrality gap of 2/3 in general graphs. Therefore, the approximation factor will be inherently stuck at 2/3 without considering other formulations such as Edmonds' blossom linear program \cite{Edmonds65}. 
 
 \item The random bipartition approach. Bipartite graphs are where the matching problems are more well-understood. The random bipartition approach randomly partitions vertices into two sets and then ignores the edges within the same partition. A path containing $l$ vertices will be preserved with probability at least $2^{-l}$. By using this property, \cite{LPP15} gave a $(1-\epsilon)$-\MCM{} algorithm that runs in $2^{O(1/\epsilon)} \cdot O(\log n)$ rounds and \cite{FFK21} gave a $(1-\epsilon)$-\MWM{} algorithm that runs in $2^{O(1/\epsilon)} \cdot \polylog(n)$ rounds. Note that this approach naturally introduces an exponential dependency on $(1/\epsilon)$. 

 \end{itemize}

\subsection{Technique Overview}
Our approach is to parallelize Duan and Pettie's \cite{DuanP14} near-linear time algorithm, which involves combining the recent approaches of \cite{CS22} and \cite{FMU22} as well as several new techniques.  The algorithm of \cite{DuanP14} is a primal-dual based algorithm that utilizes Edmonds' formulation \cite{Edmonds65}. Roughly speaking, the algorithm maintains a matching $M$, a set of active blossoms $\Omega \subseteq 2^{V}$, dual variables $y: V \to \mathbb{R}$ and  $z: 2^{V} \to \mathbb{R}$ (see \cref{sec:preliminaries} for details of blossoms). It consists of $O(\log W)$ scales with exponentially decreasing step sizes. Each scale consists of multiple primal-dual iterations that operate on a contracted {\bf unweighted} subgraph, $G_{elig}/ \Omega$, which they referred to as the {\it eligible graph}. For each iteration in scale $i$, it tries to make progress on both the primal variables ($M$, $\Omega$) and the dual variables ($y,z$) by the step size of the scale.

Initially, $\Omega = \emptyset$ so no blossoms are contracted. The first step in adjusting the primal variable is to search for an (inclusion-wise) maximal set of augmenting paths in the eligible graph and augment along them. After the augmentation, their edges will disappear from the eligible graph. Although \cite{DuanP14} showed that such a step can be performed in linear time in the sequential setting, it is unclear how it can be done efficiently in $\poly(1/\epsilon, \log n)$ time in the \congest model or the \textsf{PRAM} model. Specifically, for example, it is impossible to find the augmenting paths of length $\Theta(n)$ in \Cref{fig:1} in such time in the \congest model.

Our first ingredient is an idea from \cite{CS22}, where they introduced the weight modifier $\Delta w$ and dummy free vertices to effectively remove edges and free vertices from the eligible graph. They used this technique to integrate the expander decomposition procedure into the algorithm of \cite{DuanP14} for minor-free graphs. As long as the total number of edges and free vertices removed is small, one can show that the final error can be bounded.

With this tool introduced, it becomes more plausible that a maximal set of augmenting paths can be found in $\poly(1/\epsilon, \log n)$ time, as we may remove edges to cut the long ones.  Indeed, in {\it bipartite graphs}, this can be done by partitioning matched edges into layers. An edge is in the $i$-th layer if the shortest alternating path from any free vertex that ends at it contains exactly $i$ matched edges. Let $M_i$ be the set of matched edges of the $i$-th layer. It must be that the removal of $M_{i}$ disconnects all augmenting paths that contain more than $i$ matched edges. Let $i^{*} = \arg \min_{1 \leq i \leq 1/\epsilon} |M_i|$ and thus $|M_{i^{*}}| \leq \epsilon |M|$.  The removal of $M_{i^{*}}$ would cause all the leftover augmenting paths to have lengths of $O(1/\epsilon)$.

In general graphs, the above {\it path-cutting technique} no longer works. The removal of $M_i$ would not necessarily disconnect augmenting paths that contain more than $i$ matched edges.  
Consider the example in \Cref{fig:3}: 
for any matched edge $e$, the shortest alternating path from a free vertex that ends at $e$ contains at most $2$ matched edges.
There is a (unique) augmenting path from $\alpha$ to $\beta$ with $12$ matched edges.
However,
the removal of $M_5$ (notice that $5 < 12$) would not disconnect this augmenting path, since $M_5 = \emptyset$.
One of the technical challenges is to have an efficient procedure to {\it find a small fraction of edges whose removal cut all the remaining long augmenting paths in general graphs}. 

Secondly, the second step of the primal-dual iterations of \cite{DuanP14} is to find a maximal set of full blossoms reachable from free vertices and add them to $\Omega$ so they become contracted in the eligible graph. The problem here is that such a blossom can have a size as large as $\Theta(n)$ (See \Cref{fig:2}), so contracting it would take $\Theta(n)$ time in the \congest model. So the other technical challenge is {\it to ensure such blossoms will not be formed, possibly by removing a small fraction of edges and free vertices.} In general, these technical challenges are to remove a small fraction of edges and free vertices to achieve the so-called {\it primal blocking condition}, which we formally define in \Cref{prop:PBC}. 

Note that the challenge may become more involved after the first iteration, where $\Omega$ is not necessarily empty. It may be the case that a blossom found in $G_{elig}/\Omega$ contains a very small number of vertices in the contracted graph $G_{elig}/\Omega$ but is very large in the original graph $G$. In this case, we cannot add it to $\Omega$ either, as it would take too much time to simulate algorithms on $G_{elig}/\Omega$ in the \congest model if $\Omega$ has a blossom containing too many vertices in $G$. Therefore, we also need to ensure such a blossom is never formed.

\begin{figure}[h]
\centering
\begin{subfigure}[b]{0.32\textwidth}
\centering
\hspace*{0.5cm}
\includegraphics[width=4.5cm]{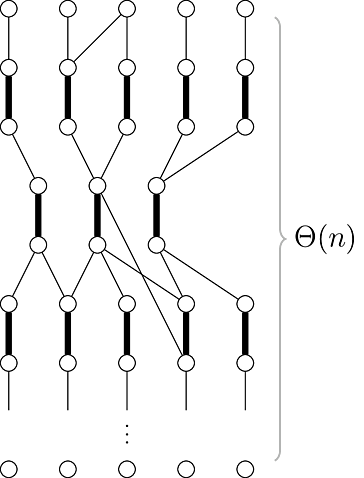}
\hspace*{-0.5cm}
\caption{}\label{fig:1}
\end{subfigure}
\begin{subfigure}[b]{0.32\textwidth}
\centering
\includegraphics[width=3.8cm]{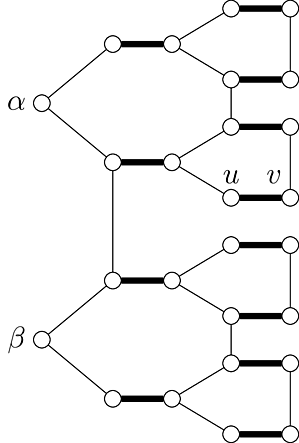}
\caption{}\label{fig:3}
\end{subfigure}
\begin{subfigure}[b]{0.32\textwidth}
\centering
\vspace*{-0.8cm}
\includegraphics[width=4.5cm]{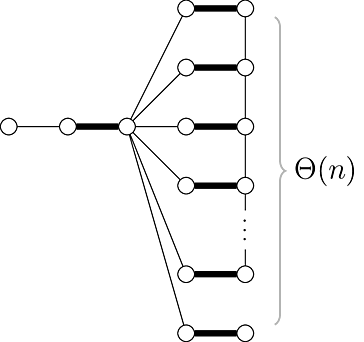}
\vspace*{0.8cm}
\caption{}\label{fig:2}
\end{subfigure}
\caption{Note that in these examples, we have $\Omega = \emptyset$ and so that $G/\Omega = G$.}
\end{figure}

To overcome these challenges, our second ingredient is the parallel DFS algorithm of Fischer, Mitrovic, and Uitto \cite{FMU22}. In \cite{FMU22}, they developed a procedure for finding an almost maximal set of $k$-augmenting paths in $\poly(k)$ rounds, where a $k$-augmenting path is an augmenting path of length at most $k$.  We show that the path-cutting technique  for bipartite graphs can be combined seamlessly with a tweaked, {\it vertex-weighted version} of \cite{FMU22} to overcome these challenges for general graphs.

The central idea of \cite{FMU22} is parallel DFS \cite{GPV93}. A rough description of the approach of \cite{FMU22} is the following: Start a bounded-depth DFS from each free vertex where the depth is bounded by $O(k)$ and each search maintains a cluster of vertices. The clusters are always vertex-disjoint. In each step, each search tries to enlarge the cluster by adding the next edge from its active path. If there is no such edge, the search will back up one edge on its active path. If the search finds an augmenting path that goes from one cluster to the other, then the two clusters are removed from the graph. Note that this is a very high-level description for the purpose of understanding our usage,  the actual algorithm of \cite{FMU22} is much more involved. For example, it could be possible that the search from one cluster overtakes some portion of another cluster.

The key property shown in \cite{FMU22} is that at any point of the search all the remaining $k$-augmenting paths must pass one of the edges on the active paths, so removing the edges on active paths of the searches (in addition to the removal of clusters where augmenting paths are found) would cut all $k$-augmenting paths. Moreover,  after searching for $\poly(k)$ steps, it is shown at most $1/\poly(k)$ fraction of searches remain active.
Since each DFS will only search up to a depth of $O(k)$, the number of edges on the active paths is at most $O(k)\cdot 1/\poly(k) = 1/\poly(k) $ fraction of the searches. In addition, we note that the process has an extra benefit that, roughly speaking, if a blossom is ought to be contracted in the second step of \cite{DuanP14}, it will lie entirely within a cluster or it will be far away from any free vertices.

To better illustrate how we use \cite{FMU22} to overcome these challenges, we first describe our procedure for the first iteration of \cite{DuanP14}, where $\Omega = \emptyset$. In this case, we run several iterations \cite{FMU22} to find a collection of $k$-augmenting paths, where $k = \poly(1/\epsilon, \log n)$, until the number of $k$-augmenting paths found is relatively small. Then remove (1) the clusters where augmenting paths have been found and (2) the active paths in the still active searches. By removing a structure, we meant using the weight modifier technique from \cite{CS22} to remove the matched edges and free vertices inside the structure. 

At this point, all the $k$-augmenting paths either overlap within the collection of $k$-augmenting paths or have been cut. The remaining augmenting paths must have lengths more than $k$.
To cut them, we contract all the blossoms found within each cluster. As the search only runs for $\poly(k)$ steps, each cluster has at most $\poly(k)$ vertices so these blossoms can be contracted in each cluster on a vertex locally by aggregating the topology to the vertex in $\poly(k)$ rounds.
The key property we show is that after the contraction, if we assign each blossom a weight proportional to its size, the weighted $O(k)$-neighborhood of the free vertices becomes bipartite.
The reason why this is correct is that the weighted distance is now an overestimate of the actual distance, and there are no full blossoms reachable within distance $k$ from the free vertices in the graph now.
Since the weighted $O(k)$-neighborhood from the free vertices are bipartite, we can run the aforementioned, but a weighted version, path-cutting technique on it to remove some edges augmenting paths of weighted length more than $k$. The weight assignment to the blossoms ensures that we will only remove a small fraction of the edges. 

Starting from the second iteration of \cite{DuanP14}, the set of active blossoms $\Omega$ may not be empty anymore.
We will need to be careful to not form any large nested blossoms after the Fischer-Mitrovic-Uitto parallel DFS algorithm (FMU-search), where the size of a blossom is measured by the number of vertices it contains in the original graph. To this end, when running the FMU-search, we run a weighted version of it, where each contracted vertex in $G_{elig}/\Omega$ is weighted proportional to the number of vertices it represents in the original graph. This way we can ensure the weight of each cluster is $\poly(k)$ and so the largest blossom it can form will be $\poly(k)$.

In order to generalize the properties guaranteed by FMU-search, one may have to open up the black-box and redo the whole sophisticated analysis of \cite{FMU22}.
However, we show that the properties can be guaranteed by a blossom-to-path simulation analysis, where each weighted blossom is replaced by an unweighted path. The properties guaranteed by FMU-search from the transformed unweighted graph can then be carried back to the blossom-weighted graph.

\paragraph{Organization} In \cref{sec:preliminaries}, we define the basic notations and give a brief overview of the scaling approach of \cite{DuanP14} as well as the modification of \cite{CS22}. In \cref{sec:scaling}, we describe our modified scaling framework. In \cref{sec:FMU}, we describe how \cite{FMU22} can be augmented to run in contracted graphs where vertices are weighted. In \cref{sec:MWMcongest}, we describe our \textsc{Approx\_Primal} procedure for achieving the primal blocking conditions.

\section{Preliminaries and Assumptions}\label{sec:preliminaries}

Throughout the paper, we denote $G=(V, E, \hat{w})$ to be the input weighted undirected graph, with an integer weight function $\hat{w}:E\to\{1, 2, \ldots, W\}$.

\paragraph{Matchings and Augmenting Paths} Given a matching $M$, a vertex is {\it free} if it is not incident to any edge in $M$. An {\it alternating path} is a path whose edges alternate between $M$ and $E \setminus M$.
An {\it augmenting path} $P$ is an alternating path that begins and ends with free vertices. Given an augmenting path $P$, let $M\oplus P = (M \setminus P) \cup (P \setminus M)$ denote the resulting matching after we augment along $P$. Note that we must have $|M \oplus P| = |M| + 1$.

\paragraph{Linear Program for \MWM{}}
Edmonds~\cite{Edmonds65} first formulated the matching polytope for general graphs.
On top of the bipartite graph linear programs, there are additional exponentially many constraints over $\mathcal{V}_{odd}$ --- all odd sized subsets of vertices. 
In this paper, we follow Edmonds'~\cite{Edmonds65} linear program formulation for  \MWM{} for the graph $(V, E, w)$:
\[
\begin{array}{c|c}
    \begin{array}{c}
    \textbf{Primal}\\[8pt]
    \begin{array}{rll}
        \text{max } &\multicolumn{2}{l}{ \sum_{e\in E} w(e)x(e) } \\[5pt]
        \text{st. } & \forall u\in V, & \sum_{v} x(uv)\le 1 \\[5pt]
        & \forall B\in \mathcal{V}_{odd}, & \sum_{u, v\in B} x(uv) \le  \frac{|B|-1}{2}\\[5pt]
        &\multicolumn{2}{l}{ x(e)\ge 0\ \forall e\in E }
    \end{array}
    \end{array}
    &
    \begin{array}{c}
    \textbf{Dual}\\[8pt]
    \begin{array}{rll}
        \text{min } &\multicolumn{2}{l}{ 
            \sum_{u\in V} y(u) + \sum_{B\in\mathcal{V}_{odd}} \frac{|B|-1}{2}z(B)
         } \\[5pt]
        \text{st. } & \forall uv\in E, & \\
        & \multicolumn{2}{l}{
        \ \ \ \ y(u)+y(v) + \sum_{B\ni u, v} z(B) \ge w(uv) } \\[10pt]
        & \multicolumn{2}{l}{ y(u) \ge 0, z(B) \ge 0 }
    \end{array} 
\end{array}
\end{array}
\]

\paragraph{Dual Variables} The variables $y(u)$ and $z(B)$ are called the {\it dual variables}.
For convenience, given an edge $e = uv$, we define $$yz(e)=y(u)+y(v)+\sum_{B \in \mathcal{V}_{odd} : e \in E(B)} z(B).$$

\paragraph{Blossoms} A blossom is specified with a vertex set $B$ and an edge set $E_{B}$.  A trivial blossom is when $B = \{v \}$ for some $v \in V$ and $E_{B}=\emptyset$. A non-trivial blossom is defined recursively: If there are an odd number of blossoms $B_0 \ldots B_{\ell}$ connected as an odd cycle by $e_i \in B_{i} \times B_{[(i+1)\mod (\ell+1)]}$ for $0 \leq i \leq \ell$, then $B = \bigcup_{i=0}^{\ell} B_i$ is a blossom with $E_B = \{e_0 \ldots, e_{\ell} \} \cup \bigcup_{i=0}^{\ell} E_{B_i} $. It can be shown inductively that $|B|$ is odd and so $B \in \mathcal{V}_{odd}$. A blossom is {\it full} if $|M \cap E_{B}| = (|B| -1 )/2$. The only vertex that is not adjacent to the matched edges in a full blossom is called the {\it base} of $B$. Note that $E(B) = \{(u,v) \mid u,v \in B\}$ may contain edges not in $E_{B}$. 

\paragraph{Active Blossoms}
A blossom is \emph{active} whenever $z(B) > 0$. We use $\Omega$ to denote the set of active blossoms throughout the execution of the algorithm.
Throughout the execution, we maintain the property that
only full blossoms will be contained in $\Omega$. Moreover, the set of active blossoms $\Omega$ forms a laminar (nested) family, which can be represented by a set of rooted trees. The leaves of the trees are the trivial blossoms. If a blossom $B$ is defined to be the cycle formed by $B_0,\ldots, B_{\ell}$, then $B$ is the parent of $B_0,\ldots, B_{\ell}$. The blossoms that are represented by the roots are called the {\it root blossoms}.

\paragraph{Blossom-Contracted Graphs} Given $\Omega$, let $G / \Omega$ denote the unweighted simple graph obtained by contracting all the root blossoms in $\Omega$. 
A vertex in $G/\Omega$ is {\it free} if the vertices it represents in $G$ contain a (unique) free vertex. 
The following lemma guarantees that the contraction of the blossoms does not tuck away all augmenting paths.
\begin{lemma}(\cite[Lemma 2.1]{DuanP14}) Let $\Omega$ be a set of full blossoms with respect to a matching $M$.
\begin{enumerate}[leftmargin=*,itemsep=-1ex, topsep = 0pt,partopsep=1ex,parsep=1ex]
  \item If $M$ is a matching in $G$, then $M / \Omega$ is a matching in $G/ \Omega$.
  \item Every augmenting path $P'$ with respect to $M/\Omega$ in $G /\Omega$ extends to an augmenting path $P$  with respect to $M$ in $G$.
  \item Let $P'$ and $P$ be as in (2). Then $\Omega$ remains a valid set of full blossoms after the augmentation $M\gets M\oplus P$.
\end{enumerate}
\end{lemma}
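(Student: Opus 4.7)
The plan is to treat the three parts separately but to do the structural spadework once: since each $B\in\Omega$ is full with base $b=b(B)$, I would first record that all vertices of $B$ except $b$ are matched within $E_B$, so the only matched edge of $M$ that could leave $B$ under contraction must be incident to $b$. This single observation drives everything that follows.

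For part (1), I would fix a root blossom $B$ and contract it to a super-vertex $[B]$ in $G/\Omega$. Any matched edge of $M$ incident to $B$ but not in $E_B$ must be incident to $b$, and there is at most one such edge in $M$. Hence $[B]$ has at most one matched edge in $M/\Omega$; combined with the fact that on uncontracted vertices $M/\Omega$ inherits the matching condition of $M$, this gives that $M/\Omega$ is a matching in $G/\Omega$. As a byproduct, $[B]$ is free in $G/\Omega$ exactly when $b$ is free in $G$.

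For part (2), I would state and prove by induction on the laminar depth an auxiliary lemma: for any full blossom $B$ with base $b$ and any $v\in B$, there is an alternating walk $Q_{B,v}$ inside $E_B$ from $b$ to $v$ whose first edge at $b$ is unmatched and whose last edge at $v$ is matched. The base case is a trivial blossom where $v=b$ and $Q_{B,v}$ is empty; in the inductive step, $B$ is an odd cycle $B_0,\ldots,B_\ell$ with $b\in B_0$ and $v\in B_j$, and I would pick the direction around the cycle so that the first cycle edge leaving $B_0$ is unmatched, then splice together the inductively guaranteed sub-walks using the matched/unmatched alternation at each sub-blossom boundary. With this lemma in hand, I walk along $P'$ and expand each super-vertex $[B]$ it visits: when $[B]$ is interior to $P'$, the matched external edge must meet $B$ at $b$ by the structural observation, so inserting $Q_{B,v}$ (or its reverse, depending on orientation) between the two external edges yields a locally alternating piece; when $[B]$ is an endpoint of $P'$, the super-vertex is free so $b$ is free in $G$, and prepending $Q_{B,v}$ extends $P'$ through $B$ to the free vertex $b$. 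Concatenating all these pieces gives an augmenting path $P$ in $G$.

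For part (3), I would observe that the only blossoms whose internal edges are touched by $M\gets M\oplus P$ are those whose super-vertices lie on $P'$, and inside such a $B$ the flipped edges are exactly those of $Q_{B,v}$; since $Q_{B,v}$ has equal numbers of matched and unmatched edges, $|M\cap E_B|$ is unchanged and $B$ stays full, with its base shifted from $b$ to $v$ (or to the endpoint of $P'$ inside $B$). The laminar structure of $\Omega$ depends only on the vertex sets and is untouched by the augmentation, so $\Omega$ remains a valid set of full blossoms. The main obstacle, and where I would spend the bulk of the write-up, is the inductive construction of $Q_{B,v}$ together with the parity bookkeeping in (2): choosing the correct orientation around each sub-blossom cycle so that the $b$-ended and $v$-ended parities align is routine at a single level but easy to mis-state when the blossom is deeply nested.
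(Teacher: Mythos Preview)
The paper does not provide its own proof of this lemma; it is imported verbatim as \cite[Lemma 2.1]{DuanP14}, so there is no in-paper argument to compare against. Your proposal is the standard proof and is essentially what appears in Duan--Pettie: the key structural fact that only the base of a full blossom can carry an external matched edge, the inductive construction of an even-length alternating $b$-to-$v$ path inside $E_B$, and the observation that flipping along such a path preserves $|M\cap E_B|$ while relocating the base.

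Two small points worth tightening in a full write-up. First, in part (2) you call $Q_{B,v}$ an alternating \emph{walk}; you should note (or build into the induction) that it is in fact a simple path, since the sub-blossoms are vertex-disjoint and each is traversed at most once along the chosen direction of the odd cycle --- otherwise the concatenated $P$ is not obviously a simple augmenting path. Second, in part (3) your argument as stated explicitly covers only the root blossoms whose super-vertices lie on $P'$; you should remark that the same even-length property holds for the portion of $Q_{B,v}$ lying inside every nested sub-blossom (this is already built into your inductive construction), so that every blossom in the laminar family $\Omega$, not just the roots, retains $|M\cap E_B|=(|B|-1)/2$ after augmentation. Neither of these is a genuine gap --- both follow directly from the induction you have set up --- but they are exactly the parity bookkeeping you yourself flagged as easy to mis-state.
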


\begin{definition}\label{def:mapping-vertices}
Let $v$ be a vertex in $G/\Omega$, we use $\hat{v}$ to denote the set of vertices in $G$ that contract to $v$. Also, given a set of vertices $S$, define $\hat{S} = \bigcup_{v \in S} \hat{v}$.
For a free vertex $f$ in $G/\Omega$, we define $\dot f$ to be the unique free vertex in $\hat f$.
Given a matched edge $e\in M/\Omega$, we use $\hat{e}$ to denote its corresponding matched edge in $M$.

Conversely, given a set of vertices $S \in \Omega$, let $v^{\Omega}_{S}$ be the vertex in $G/\Omega$ obtained by contracting $S$ in $G$. Given a free vertex in $G$, let $f^{\Omega}$ denote the unique free vertex in $G/\Omega$ that contains $f$. Given a set of free vertices $F$ of $G$, define $F^{\Omega} = \{f^{\Omega} \mid f \in F \}$. Similarly, given a matched edge $e\in M$, if both endpoints belong to different blossoms in $\Omega$, then we define $e^\Omega$ to be the corresponding matched edge in $M/\Omega$.
\end{definition}

\begin{definition}
  Let $H$ be a subgraph of $G$ with a matching $M$.
We denote the set of free vertices in $H$ by $F(H)$ and the set of matched edges in $H$ by $M(H)$.
\end{definition}

\begin{definition}[Inner, outer, and reachable vertices]
Let $F$ be a set of free vertices in a graph $H$ with matching $M$. Let $V^{H,M}_{in}(F)$ and $V^{H,M}_{out}(F)$ denote the set of vertices that are reachable from $F$ with odd-length augmenting paths and even-length augmenting paths respectively. Define $R^{H,M}(F) = V^{H,M}_{in}(F) \cup V^{H,M}_{out}(F)$. When the reference to $H$ and $M$ are clear, we will omit the superscripts and write $R(F)$, $V_{in}(F)$, and $V_{out}(F)$ respectively.
\end{definition}

Notice that using \cref{def:mapping-vertices}, we have $\hat{V}_{in}(F)=\bigcup_{v\in V_{in}(F)}\hat{v}$ and $\hat{V}_{out}(F)=\bigcup_{v\in V_{out}(F)}\hat{v}$.

\subsection{Assumptions to Edge Weights and Approximate Ratio}

Since we are looking for a $(1-\epsilon)$-approximation, we can always re-scale the edge weights to be $O(n/\epsilon)$ while introducing at most $(1-\Theta(\epsilon))$ error (see \cite[Section 2]{DuanP14}). Therefore, we can assume that $\epsilon > 1/n^2$ and so $W \leq n^3$ and $O(\log W) = O(\log n)$; for otherwise we may aggregate the whole network at a node in $O(1 /\epsilon) = O(n^2)$ rounds and have it compute a \MWM{} locally. Let $\epsilon' = \Theta(\epsilon)$ be a parameter that we will choose later. 
We also assume without loss of generality that both $W$ and $\epsilon'$ are powers of two.

\subsection{Assumption of $O((1/\epsilon)\log^3n)$ Weak Diameter}

To begin, we process our input graph by applying a diameter reduction theorem developed by \cite{FFK21} to claim that we may assume that the graph we are considering has a broadcast tree of depth $O((1/\epsilon)\log^3n)$ that can be used to aggregate and propagate information.

\begin{theorem}[\cite{FFK21}, Theorem 7]\label{thm:diameter} Let $T^{\alpha}_{\textsf{SC}}(n,D)$ be the time required for computing an $\alpha$-approximation for the \MWM{} problem in the \suportedcongest model with a communication graph of diameter $D$. Then, for every $\epsilon \in (0,1]$, there is a $\poly(\log n, 1/\epsilon) + O(\log n \cdot T^{\alpha}_{\textsf{SC}}(n,O((1/\epsilon)\log^3 n)))$-round \congest algorithm to compute a $(1-\epsilon)\alpha$-approximation of \MWM{} in the \congest model. If the given \suportedcongest model algorithm is deterministic, then the resulting \congest model algorithm is also deterministic.
\end{theorem}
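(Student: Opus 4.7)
The plan is to reduce the problem on an arbitrary-diameter graph $G$ to invocations of the \suportedcongest algorithm on low-diameter subgraphs, using a deterministic low-diameter network decomposition as the bridge. First I would compute a network decomposition of $G$ with $O(\log n)$ color classes in which every cluster has weak diameter $O((1/\epsilon)\log^3 n)$. Rozho\v{n}--Ghaffari's deterministic network decomposition, combined with a shrinkage/refinement step that forces the cluster diameter down to the target $O((1/\epsilon)\log^3 n)$, yields such a decomposition in $\poly(\log n, 1/\epsilon)$ \congest{} rounds; the decomposition also supplies, for every cluster, a BFS-style skeleton whose depth matches the cluster diameter, realizing the ``communication graph of diameter $O((1/\epsilon)\log^3 n)$'' required by the \suportedcongest algorithm.

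Next, I would process the $O(\log n)$ color classes one at a time. Within a single color all clusters are vertex-disjoint, so the \suportedcongest $\alpha$-approximate \MWM{} algorithm can be simulated inside every cluster in parallel; this costs $T^{\alpha}_{\textsf{SC}}(n, O((1/\epsilon)\log^3 n))$ rounds per color and $O(\log n \cdot T^{\alpha}_{\textsf{SC}})$ rounds in total. Between colors I would freeze the already-matched vertices and reveal the next color's clusters, so that the successive matchings are consistent and their union is a valid matching of $G$.

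For correctness I would split the optimum matching $M^{*}$ into the portion lying fully inside some color's cluster, which is recovered up to a factor of $\alpha$ by the inner \suportedcongest invocation, and the portion cut by every color class. Running a separate decomposition for each of $O(\log W)=O(\log n)$ geometric weight buckets and taking a common refinement bounds the \emph{weighted} boundary by an $\epsilon$-fraction of $w(M^{*})$, delivering the claimed $(1-\epsilon)\alpha$-approximation. The main obstacle is exactly this weighted edge-cut bound: standard network decompositions control only the cardinality, not the weight, of inter-cluster edges, and it is the weight-bucketing refinement that introduces the extra $\log^{2} n$ factor visible in the target diameter $O((1/\epsilon)\log^3 n)$. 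Once the decomposition is in place, the rest of the reduction is bookkeeping, and determinism of every step follows because Rozho\v{n}--Ghaffari's construction, the refinement, and the inner \suportedcongest invocations can each be taken to be deterministic.
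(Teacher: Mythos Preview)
The paper does not prove this theorem at all: it is quoted verbatim as Theorem~7 of \cite{FFK21} and used as a black box, so there is no ``paper's own proof'' to compare against. Your proposal is therefore an attempt to reconstruct the argument of \cite{FFK21}, not of the present paper.

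On the merits of the sketch itself, the high-level architecture (deterministic Rozho\v{n}--Ghaffari style network decomposition, then run the \suportedcongest algorithm inside each cluster, color class by color class) is indeed the backbone of the \cite{FFK21} reduction. However, two points deserve caution. First, your correctness argument hinges on bounding the \emph{weight} of optimum matching edges cut by the clustering, and you propose to handle this by a separate decomposition per geometric weight class followed by a ``common refinement''; but a common refinement of $O(\log n)$ decompositions need not itself have $O(\log n)$ colors or the stated diameter, so this step as written does not obviously work. Second, the ``freeze already-matched vertices between colors'' strategy can in principle block a large fraction of $M^\ast$ in later colors even if each single color only loses an $\epsilon$-fraction, so you would need a more careful charging argument (e.g., showing that each edge of $M^\ast$ survives intact in all but an $\epsilon$-fraction of the colors, and taking the best color rather than the union). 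If you want the actual mechanism, consult \cite{FFK21} directly; the present paper treats it as established.
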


The \suportedcongest model is the same as the \congest model except that the input graph can be a subgraph of the communication graph.
The above theorem implies that we can focus on solving the problem on $G$ as if we were in the \congest model, except that we have access to a broadcast tree (potentially outside $G$) where an aggregation or a broadcast takes $O((1/\epsilon)\log^3n)$ rounds. We slightly abuse the notation and say that $G$ has a {\it weak diameter} of $O((1/\epsilon)\log^3n)$.

With~\cref{thm:diameter}, 
we may broadcast $W$ (the upper bound on edge weights) to every node in $O((1/\epsilon)\log^3n)$ rounds.
We remark that the assumption to the weak diameter is required not only in our algorithm but also in the $(1-\epsilon)$-\MCM{} \congest algorithm described in \cite{FMU22}\footnote{The application of \cref{thm:diameter} can also tie up loose ends left in \cite{FMU22}, where they presented a semi-streaming algorithm first and then described the adaption to other models. One of the primitives, \textsc{Storage} in item (v) in Section 6 assumed a memory of $\Omega(n \poly 1/\epsilon)$ is available to all nodes.  This may be needed in some of their procedures, e.g.~counting $|M_H|$ in Algorithm 7.  The running time was not analyzed, but it may take $O(\mathit{diameter})$ rounds to implement in the \congest model.}.

\subsection{Duan and Pettie's Scaling Framework}

The scaling framework for solving \MWM{} using the primal-dual approach was originally proposed by Gabow and Tarjan~\cite{GT91}.
Let $L=\lfloor \log_2W\rfloor$.
A typical algorithm under this scaling framework consists of $L+1$ scales.
In each scale $i$, such an algorithm puts its attention to the graph with \emph{truncated weights} (whose definition varies in different algorithms).
As $i$ increases, these truncated weights typically move toward the actual input weight.

Duan and Pettie~\cite{DuanP14} introduced a scaling algorithm to solve the $(1-\epsilon)$-\MWM{} problem.
They
proposed a new \emph{relaxed complementary slackness criterion} (see \Cref{prop:dp-relaxed-complementary-slackness}).
The criterion changes between iterations. At the end of the algorithm, the criterion can be used to certify the desired approximation guarantee of the maintained solution.
Unlike Gabow and Tarjan's framework~\cite{GT91}, Duan and Pettie's framework~\cite{DuanP14} allows the matching found in the previous scale to be carried over to the next scale without violating the feasibility, thereby improving the efficiency.
In order to obtain this carry-over feature, Duan and Pettie also introduce the \emph{type $j$ edges} in their complementary slackness criterion.

\begin{definition}[Type $j$ Edges]
A matched edge or a blossom edge is of \emph{type  $j$} if it was last made a matched edge or a blossom edge in scale $j$. 
\end{definition}

Let $\delta_0 = \epsilon'W$ and $\delta_i = \delta_0/2^i$ for all $i\in [0, L]$.
At each scale $i$, 
the \emph{truncated weight} of an edge $e$ is defined as $w_i(e)=\delta_i\lfloor \hat{w}(e)/\delta_i\rfloor$.
The relaxed complementary slackness criteria are based on the truncated weight at each scale.

\begin{lemma}[Relaxed Feasibility and Complementary Slackness {\cite[Property 3.1]{DuanP14}}]\label{prop:dp-relaxed-complementary-slackness}
After each iteration $i=[0, L]$, the algorithm explicitly maintains the set of currently matched edges $M$, the dual variables $y(u)$ and $z(B)$, and the set of active blossoms $\Omega\subseteq \mathcal{V}_{odd}$.
The following properties are guaranteed:
\begin{enumerate}[itemsep=0pt]
    \item {\bf{Granularity.}} For all $B\in\mathcal{V}_{odd}$, $z(B)$ is a nonnegative multiple of $\delta_i$. For all $u\in V(G)$, $y(u)$ is a multiple of $\delta_i/2$. 
    \item {\bf{Active Blossoms.}} $\Omega$ contains all $B$ with $z(B)>0$ and all root blossoms $B$ have $z(B)>0$.
    \item {\bf{Near Domination.}} For all $e\in E$, $yz(e)\ge w_i(e)-\delta_i$.
    \item {\bf{Near Tightness.}} If $e$ is a type $j$ edge, then $yz(e)\le w_i(e)+2(\delta_j-\delta_i)$.
    \item {\bf{Free Vertex Duals.}} If $u\in F(G)$ and $v\notin F(G)$ then $y(u)\le y(v)$.
\end{enumerate}
\end{lemma}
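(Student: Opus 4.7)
Since this lemma asserts a set of invariants maintained throughout the scaling algorithm, the plan is a nested induction: an outer induction on the scale index $i\in\{0,1,\ldots,L\}$ and, within a single scale, an inner induction over the atomic operations the algorithm performs (initialization, dual shifts on inner/outer vertices and root blossoms, blossom formation, blossom dissolution, and augmentation along an alternating path). For each atomic step I would verify directly that all five properties continue to hold. The reason this is a reasonable approach is that each listed property is ``local'' in the sense that it can be checked per-edge, per-blossom, or per-vertex, so it suffices to control how each atomic update changes $yz(e)$, the laminar family $\Omega$, and the $y$-values of free vertices.

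\textbf{Base case and scale transition.} For the base case at $i=0$ I would initialize $M=\emptyset$, $\Omega=\emptyset$, $z\equiv 0$, and $y(u)=W/2$ for every $u\in V$. Because both $W$ and $\epsilon'$ are powers of two, $W/2$ is a multiple of $\delta_0/2=\epsilon' W/2$, giving Granularity; Active Blossoms and Near Tightness are vacuous; Near Domination follows from $yz(e)=W\ge w_0(e)-\delta_0$; and Free Vertex Duals is immediate since every $y$-value is equal. For the transition $i-1\to i$, the step size halves, so any multiple of $\delta_{i-1}$ or $\delta_{i-1}/2$ remains a multiple of $\delta_i$ or $\delta_i/2$, preserving Granularity. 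Near Domination and Free Vertex Duals carry over unchanged, and the Near Tightness slack $2(\delta_j-\delta_i)$ grows strictly relative to $2(\delta_j-\delta_{i-1})$, so previously verified bounds only get easier. Any blossom whose $z$ has dropped to $0$ is removed from $\Omega$ to restore Active Blossoms.

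\textbf{Within-scale updates.} Within scale $i$, the algorithm raises $y(u)$ by $\delta_i/2$ on outer vertices, lowers it by $\delta_i/2$ on inner vertices, and shifts $z(B)$ by $\pm\delta_i$ on outer/inner root blossoms; Granularity is preserved by construction. Free Vertex Duals holds because free vertices are always outer in the search structure, so their $y$-values move up at least as quickly as any other vertex's. Near Domination is exactly the invariant that constrains how far a dual shift can go; the eligibility rules of the scaling framework are engineered to halt or restrict shifts before any edge violates $yz(e)\ge w_i(e)-\delta_i$. For Near Tightness, the key observation to justify is that whenever an edge $e$ becomes matched or is added to a blossom in scale $i$ it is ``eligible'' at that moment, i.e., $yz(e)\le w_i(e)$, so it enters as a type-$i$ edge with zero slack; I would then show that subsequent $y$ and $z$ adjustments within scale $i$ can only enlarge $yz(e)$ by at most $2(\delta_i-\delta_i)=0$ at scale's end (and bounded appropriately in the interior by the fact that matched edges have endpoints of opposite parity and blossom edges are shielded by their enclosing root blossom).

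\textbf{Main obstacle.} I expect the most delicate step to be propagating Near Tightness across many scales for ``old'' type-$j$ edges with $j\ll i$. When such an edge crosses successive scale boundaries, the slack budget $2(\delta_j-\delta_i)$ grows, but $w_i(e)$ can also grow relative to $w_j(e)$ by up to $\delta_j-\delta_i$, so the inequality $yz(e)-w_i(e)\le 2(\delta_j-\delta_i)$ has to be tracked with care. My plan for this step is a potential-argument charging each legal dual shift that can affect $e$ to the slack budget, verifying that as long as $e$ remains matched or inside a blossom its accumulated $yz(e)-w_i(e)$ never exhausts $2(\delta_j-\delta_i)$; this is precisely the accounting that makes the relaxed complementary slackness usable across scales and is the technical heart of the Duan--Pettie framework.
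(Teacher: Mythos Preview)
This lemma is not proved in the paper: it is quoted verbatim from Duan and Pettie~\cite{DuanP14} as background (``Property 3.1''), and the paper's own analysis in Appendix~B is for its \emph{modified} complementary slackness (\cref{prop:RCS}), not for this statement. So there is no proof in the paper to compare against directly; the closest analogue is the appendix proof of \cref{prop:RCS}, which follows the same nested-induction structure you propose (induction over scales, and within a scale over the Augmentation/Blossom Shrinking, Dual Adjustment, and Blossom Dissolution steps).

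That said, your write-up contains a concrete sign error that would make the argument fail. You state that the algorithm ``raises $y(u)$ by $\delta_i/2$ on outer vertices, lowers it by $\delta_i/2$ on inner vertices,'' and that ``free vertices are always outer \ldots so their $y$-values move up at least as quickly as any other vertex's.'' In the Duan--Pettie framework (and in the version recapitulated in this paper, see the Dual Adjustment step in Figure~\ref{fig:edmondssearch}), the convention is the opposite: outer vertices have $y$ \emph{decreased} by $\delta_i/2$ and inner vertices have $y$ \emph{increased} by $\delta_i/2$. This is essential for the Free Vertex Duals property $y(u)\le y(v)$ for free $u$: free vertices are outer and therefore have their duals driven \emph{down}, not up. With your directions reversed, Near Domination would actually be threatened by inner vertices (whose $y$ you lower) rather than protected by them, and your justification of Free Vertex Duals is exactly backwards. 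The high-level induction skeleton is fine, but the per-step verification of Properties~3 and~5 has to be redone with the correct signs; the appendix proof of \cref{prop:RCS} shows how the case analysis on eligible/ineligible matched/unmatched edges goes once the signs are right.
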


\paragraph{Eligible Graph} 
To achieve~\Cref{prop:dp-relaxed-complementary-slackness} efficiently, at each scale an \emph{eligible graph} $G_{elig}$ is defined. An edge $e$ is said to be \emph{eligible}, if (1) $e\in E_B$ for some $B\in\Omega$, (2) $e\notin M$ and $yz(e)=w(e)-\delta_i$, or (3) $e\in M$ and $yz(e)-w_i(e)$ is a nonnegative integer multiple of $\delta_i$.
$G_{elig}$ is the graph that consists of all edges that are currently eligible.

The algorithm initializes with an empty matching $M\gets\emptyset$, an empty set of active blossoms $\Omega\gets \emptyset$, and high vertex duals $y(u)\gets W/2-\delta_0/2$ for all $u\in V$.
Then, in each scale $i=0, 1, \ldots, L$ the algorithm repeatedly searches for a maximal set $\Psi$ of vertex disjoint augmenting paths in $G_{elig}$, augments these paths, searches for new blossoms, adjust dual variables, and dissolves zero-valued inactive blossoms.
These steps are iteratively applied for $O(1/\epsilon')$ times until the free vertex duals $y(v)$ reach $W/2^{i+2}-\delta_i/2$ whenever $i\in [0, L)$ or $0$ whenever $i=L$.
At the end of scale $L$, \Cref{prop:dp-relaxed-complementary-slackness} guarantees a matching with the desired approximate ratio.
We emphasize that the correctness of~\Cref{prop:dp-relaxed-complementary-slackness} relies on the fact that $\Psi$ is maximal in $G_{elig}$, and the subroutine that searches for $\Psi$ is a modified depth first search from Gabow and Tarjan~\cite{GT91} (see also \cite{MV80,Vazirani94}.)
Unfortunately, some returned augmenting paths in $\Psi$ could be very long.
We do not immediately see an efficient parallel or distributed implementation of this subroutine.

\subsection{Chang and Su's Scaling Framework}

Chang and Su~\cite{CS22} noticed that it is possible to relax Duan and Pettie's framework further,
by introducing the \emph{weight modifiers} $\Delta w(e)$ that satisfy the following new invariants (appended to~\Cref{prop:dp-relaxed-complementary-slackness} with some changes to the other properties) after each iteration: 

{\it{
\begin{enumerate}[itemsep=0pt]
\item[6.] {\bf{Bounded Weight Change.}} The sum of $|\Delta w(e)|$ is at most $\epsilon'\cdot \hat{w}(M^*)$, where $M^*$ is a maximum weight matching with respect to $\hat{w}$.
\end{enumerate}
}}

Chang and Su showed that 
it is possible to efficiently
obtain a maximal set $\Psi$ of augmenting paths from $G_{elig}$ 
in an expander decomposed $H$-minor-free graph.
By carefully tweaking the definition of the eligibility of an edge,
their modified Duan-Pettie framework fits well into the expander decomposition in the \congest model.
Notice that Chang and Su's scaling algorithm depends on a ``center process'' in each decomposed subgraph.
The center process in each subgraph can obtain the entire subgraph topology within $\polylog(n)$ rounds, with the assumption that the underlying graph is $H$-minor-free.
Thus, a maximal set of augmenting paths in each subgraph can then be computed sequentially in each center process.
This explains two non-trivial difficulties: 
First, it is not clear if the same framework can be generalized to general graphs.
Furthermore, the sequential subroutine searching for augmenting paths may return long augmenting paths. It is not clear how to efficiently implement this subroutine in the PRAM model or the semi-streaming model.

Our scaling framework for general graphs is modified from 
both Duan-Pettie~\cite{DuanP14} and Chang-Su~\cite{CS22}.
In~\cref{sec:scaling} we present our modified scaling framework.
With the adaption of~\cite{FMU22},
we believe our framework is simpler compared with Chang and Su~\cite{CS22}.
Benefiting from~\cite{FMU22} searching for short augmenting paths, the new framework can now be efficiently implemented in the PRAM model  and the semi-streaming model.

\section{Our Modified Scaling Framework}\label{sec:scaling}

Our modified scaling framework maintains the following variables during the execution of the algorithm:

\begin{center}
\begin{tabular}{lll}
$M$:& The set of matched edges. &\\
$y(u)$: & The dual variable defined on each vertex $u \in V$. &  \\
$z(B)$: & The dual variable defined on each $B \in \mathcal{V}_{odd}$. & \\
$\Omega$:& $\Omega \subseteq \mathcal{V}_{odd}$ is the set of \emph{active blossoms}. & \\
$\Delta w (e):$ & The weight modifier defined on each edge $e \in E$. &
\end{tabular}
\end{center}

Our algorithm runs for
 $L+1=\lfloor\log_2 W\rfloor+1$ scales.
In each scale $i$, the same granularity value $\delta_i = \delta_0/2^i$ is used, where $\delta_0 := \epsilon'W$.
Moreover, the truncated weight $w_i(e)$ is now derived from the \emph{effective weight} $w(e) := \hat{w}(e)+\Delta w(e)$, namely $w_i(e)=\delta_i \lfloor w(e)/\delta_i\rfloor$.
There will be $O(1/\epsilon')$ iterations within each scale.
Within each iteration, the algorithm subsequently performs augmentations, updates to weight modifiers, dual adjustments, and updates to active blossoms (see \Cref{sec:iterations-within-each-scale}).

Similar to Chang and Su's framework~\cite{CS22} but opposed to Duan and Pettie's framework~\cite{DuanP14},
the $y$-values of free vertices are no longer the same during the execution.
To make sure that there are still $O(1/\epsilon')$ iterations in each scale,
a special quantity $\tau$ is introduced.
Within each scale $i$,
the quantity $\tau$ will be decreased from $W/2^{i+1}-\delta_{i+1}/2$ to a specified target value $W/2^{i+2}-\delta_{i}/2$ (or $0$ if $i=L$).
Intuitively, 
$\tau$ is the desired free vertex dual which gets decreased by $\delta_i/2$ after every dual adjustment as in \cite{DuanP14} (hence $O(1/\epsilon')$ iterations per scale).
However, in both \cite{CS22} and our framework, 
some free vertices will be isolated from the eligible graph.
This isolation is achieved by
increasing the $y$-value of the free vertex by $\delta_i$.
Therefore, $\tau$ can be seen as a lower bound to all free vertex duals.
Our modified relaxed complementary slackness  (\Cref{prop:RCS}) guarantees that the sum of such gaps will be small.

\paragraph{Eligible Graph} 
The eligible graph $G_{elig}$ is an unweighted subgraph of $G$ defined dynamically throughout the algorithm execution.
The edges in $G_{elig}$ are \emph{eligible edges}.
Conceptually, eligible edges are ``tight'', which are either blossom edges or the ones that nearly violate the complementary slackness condition.
The precise definition of such eligible edges is given in \Cref{def:eligible}.
  
\begin{definition}\label{def:eligible} At scale $i$, an edge is {\it eligible} if at least one of the following holds.
\begin{enumerate}[topsep=0.5ex,itemsep=-.5ex]
  \item $e \in E_B$ for some $B \in \Omega$.
  \item $e \notin M$ and $yz(e) = w_i(e) - \delta_i$.
  \item \label{elig:3} $e \in M$ is a type $j$ edge and $yz(e) = w_i(e) + 2(\delta_j - \delta_i)$.
\end{enumerate}
\end{definition}

We remark that (\ref{elig:3}) is more constrained compared to the Duan-Pettie framework~\cite{DuanP14}.
With the new definition of (\ref{elig:3}), 
an eligible edge can be made ineligible by adjusting its weight modifier $\Delta w(e)$.
Now we describe the relaxed complementary slackness properties:

\begin{property} (Relaxed Complementary Slackness)\label{prop:RCS}
    \begin{enumerate}[topsep=0.5ex,itemsep=-.2ex]
      \item \label{RCS:1} {\bf Granularity.} $z(B)$ and $w_i(e)$ are non-negative multiples of $\delta_i$ for all $B\in \mathcal{V}_{odd}, e \in E$ and $y(u)$ is a non-negative multiple of $\delta_i/2$ for all $u \in V$.
      \item {\bf Active Blossoms.}  \label{RCS:2} All blossoms in $\Omega$ are full.
      If $B \in \Omega$ is a root blossom then $z(B)>0$; if $B \notin \Omega$ then $z(B) = 0$.  Non-root active blossoms may have zero $z$-values.
      \item {\bf Near Domination.} \label{RCS:3} For all edges $e\in E$, $yz(e)\geq w_i(e)-\delta_i$. 
      \item \label{RCS:4}{\bf Near Tightness.} If $e$ is a type $j$ edge, then $yz(e) \leq w_i(e) + 2(\delta_j - \delta_i)$.
      \item \label{RCS:5}{\bf Accumulated Free Vertex Duals.} The $y$-values of all free vertices have the same parity as multiples of $\delta_i / 2$. Moreover, $\sum_{v\in F(G)} y(v) \le \tau \cdot |F(G)| + \epsilon' \cdot \hat{w}(M^{*})$, where $M^{*}$ is a maximum weight matching w.r.t.~$\hat{w}$ and $\tau$ is a variable where $y(v) \geq \tau$ for every $v$.
      Also, $\tau$ decreases to $0$ when the algorithm ends. 
       \item \label{item:boundedw} {\bf Bounded Weight Change.} The weight modifiers are nonnegative. Moreover, the sum of $\Delta w(e)$ is at most $\epsilon' \cdot \hat{w}(M^{*})$.
    \end{enumerate}
    \end{property}


With the modified relaxed complementary slackness properties, the following~\Cref{lem:approximateMWM} guarantees the desired approximate ratio of the matching at the end of the algorithm.
As the proof technique is similar to~\cite{DuanP14} and \cite{CS22}, we defer the proof of \Cref{lem:approximateMWM} to \Cref{apx:DPanalysis}.

\begin{lemma}\label{lem:approximateMWM}
Suppose that $y, z, M, \Omega$, and $\Delta w$ satisfy the relaxed complementary slackness condition (\cref{prop:RCS}) at the end of scale $L$. Then $\hat{w}(M) \geq (1-\epsilon) \cdot \hat{w}(M^{*})$.
\end{lemma}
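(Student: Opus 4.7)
The plan is to follow the classical primal--dual analysis of Duan and Pettie~\cite{DuanP14}, augmented to handle two new features of the Chang--Su framework: the weight modifiers $\Delta w$ (Property~\ref{item:boundedw}) and the non-uniform free-vertex duals tracked by $\tau$ (Property~\ref{RCS:5}). The heart of the argument is to use weak LP duality to compare $w(M)$ and $w(M^{*})$ with respect to the effective weights $w = \hat w + \Delta w$, and then convert back to $\hat w$ using the bounded-weight-change property.

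First I would apply near-domination (Property~\ref{RCS:3}) to each $e \in M^{*}$: since $w(e) - w_L(e) < \delta_L$, we get $yz(e) \ge w(e) - 2\delta_L$. Summing over $M^{*}$ and using that $M^{*}$ is a matching (so $|M^{*}\cap E(B)| \le (|B|-1)/2$ for every odd set $B$) yields $w(M^{*}) \le \sum_{v\in V} y(v) + \sum_{B\in\Omega}\tfrac{|B|-1}{2} z(B) + 2\delta_L |M^{*}|$. Symmetrically, applying near-tightness (Property~\ref{RCS:4}) to each $e \in M$ of type $j$, together with the fact that every blossom in $\Omega$ is full (Property~\ref{RCS:2}, giving $|M\cap E_B|=(|B|-1)/2$), gives $w(M) \ge \sum_{v\in V(M)} y(v) + \sum_{B\in\Omega}\tfrac{|B|-1}{2} z(B) - 2\sum_{e\in M} \delta_{j(e)}$. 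Subtracting bounds the duality gap by $\sum_{v\in F(G)} y(v) + 2\delta_L|M^{*}| + 2\sum_{e\in M}\delta_{j(e)}$. Then I bound each error term: with $\tau=0$, Property~\ref{RCS:5} gives $\sum_{v\in F(G)} y(v) \le \epsilon'\hat w(M^{*})$; since $\delta_L \le 2\epsilon'$ (using $W \le 2^{L+1}$) and $\hat w(e) \ge 1$ everywhere, the second term is $O(\epsilon'\hat w(M^{*}))$; and the third term is amortized by a charging argument inherited from Duan--Pettie, which exploits that an edge newly matched at scale $j$ must have $w_j(e) = \Omega(\delta_j/\epsilon')$, yielding $\sum_{e\in M}\delta_{j(e)} = O(\epsilon'\,\hat w(M))$. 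Finally, Property~\ref{item:boundedw} converts between $w$ and $\hat w$: since $\Delta w\ge 0$, we have $\hat w(M^{*}) \le w(M^{*})$ and $w(M) - \hat w(M) \le \epsilon'\hat w(M^{*})$. Chaining everything gives $\hat w(M^{*}) - \hat w(M) \le O(\epsilon')\hat w(M^{*})$, so choosing $\epsilon' = \Theta(\epsilon)$ small enough gives the claimed $(1-\epsilon)$-approximation.

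The main obstacle is the charging argument for $\sum_{e\in M}\delta_{j(e)}$, which is the genuine combinatorial content of the Duan--Pettie analysis. Bounding it requires opening up how types evolve across scales and showing that an edge is not inserted into $M$ at an inappropriately coarse scale relative to its truncated weight. The Chang--Su weight-modifier extension does not interact with this subargument in any delicate way, because $\Delta w \ge 0$ only tightens the relevant inequalities; the only genuinely new piece of bookkeeping is handling free-vertex duals that vary across $F(G)$, which is absorbed cleanly by Property~\ref{RCS:5}. All other steps are straightforward manipulations of the LP slackness inequalities.
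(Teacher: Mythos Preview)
Your proposal is correct and follows essentially the same primal--dual route as the paper. The one cosmetic difference is that the paper packages your ``charging argument'' for $\sum_{e\in M}\delta_{j(e)}$ as a single multiplicative bound $yz(e)\le (1+4\epsilon')\,w(e)$ (their \cref{lem:edge_weight_lower_bound}, using that a type-$j$ edge has $w(e)\ge W/2^{j+1}$ and hence $\delta_j\le 2\epsilon' w(e)$), which lets them run the whole chain as $w(M)\ge (1+4\epsilon')^{-1}\sum_{e\in M}yz(e)\ge\cdots$ rather than carrying an additive $\sum\delta_{j(e)}$ term; this is exactly the observation you sketch, so the ``obstacle'' you flag is in fact a one-line lemma rather than a delicate amortization. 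The paper also routes through $M_2^\ast$, the optimum for the effective weight $w$, and then invokes a separate stability lemma (\cref{lem:dif}) to pass back to $\hat w(M^\ast)$, whereas you apply near-domination directly to the $\hat w$-optimum $M^\ast$ and use $\Delta w\ge 0$ to get $\hat w(M^\ast)\le w(M^\ast)$; your shortcut is slightly more direct and avoids the auxiliary lemma.
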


\begin{figure}[h!b]
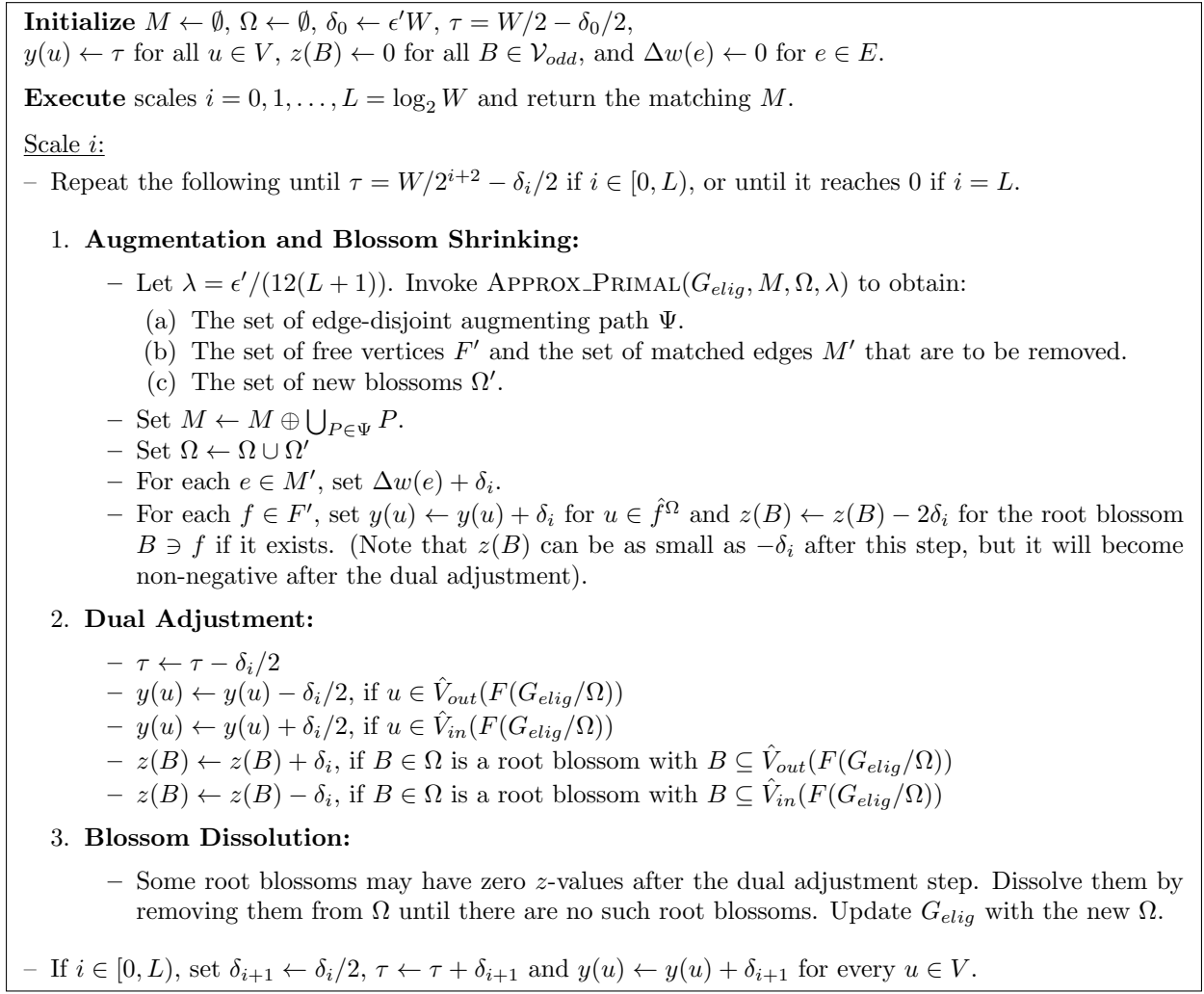

  \centering
  \framebox{
  \begin{minipage}{0.97\textwidth}
  \small
  {\textbf{Initialize}} $M \leftarrow \emptyset$, $\Omega \leftarrow \emptyset$, $\delta_0 \leftarrow \epsilon' W$, $\tau = W/2 - \delta_0/2$,\\
  $y(u) \leftarrow \tau$ for all $u \in V$,
  $z(B) \leftarrow 0$ for all $B \in \mathcal{V}_{odd}$, 
  and $\Delta w(e) \leftarrow 0$ for $e \in E$.
  
  \medskip 
  
  {\textbf{Execute}} scales $i = 0, 1, \ldots, L=\log_{2} W$ and return the matching $M$.
  
  \medskip
  
  \underline{Scale $i$:}
  \begin{itemize}[leftmargin=*]
  \item[--] Repeat the following until $\tau = W/2^{i+2} - \delta_i / 2$ if $i \in [0,L)$, or until it reaches $0$ if $i = L$.
  
  \begin{enumerate}[itemsep=0pt, leftmargin=*]
        
  \item {\bf Augmentation and Blossom Shrinking:} \label{step:2} \label{step:start}
  \begin{itemize}[itemsep=-3pt]
  \item Let $\lambda = \epsilon'/(12(L+1))$. Invoke \textsc{Approx\_Primal}$(G_{elig},M,\Omega, \lambda)$ to obtain:
  \vspace*{-3pt}
  \begin{enumerate}[leftmargin=*,itemsep=-3pt]
    \item The set of edge-disjoint augmenting path $\Psi$.
    
    \item The set of free vertices $F'$ and the set of matched edges $M'$ that are to be removed.
    
    \item The set of new blossoms $\Omega'$.
    \end{enumerate} 
    
  \item Set $M \leftarrow M \oplus \bigcup_{P \in \Psi} P$.  
  \item Set $\Omega \leftarrow  \Omega \cup \Omega'$
  
  \item  For each $e \in M'$, set $\Delta w(e) + \delta_i$. 
  \item For each $f \in F'$, set $y(u) \leftarrow y(u) + \delta_i$ for  $u \in \hat{f}^{\Omega}$ and $z(B) \leftarrow z(B) - 2\delta_i $ for the root blossom $B \ni f$ if it exists. (Note that $z(B)$ can be as small as $-\delta_i$ after this step, but it will become non-negative after the dual adjustment).

  \end{itemize}
  
  \item {\bf Dual Adjustment:} 
  \begin{itemize}[itemsep=-3pt]
    \item $\tau \leftarrow \tau - \delta_i / 2$
      \item $y(u) \leftarrow y(u) - \delta_i / 2$, if $u \in \hat{V}_{out}(F(G_{elig}/\Omega))$
      \item $y(u) \leftarrow y(u) + \delta_i / 2$, if $u \in \hat{V}_{in}(F(G_{elig}/\Omega))$
      \item $z(B) \leftarrow z(B) + \delta_i$, if $B \in \Omega$ is a root blossom with $B \subseteq \hat{V}_{out}(F(G_{elig}/\Omega))$
      \item $z(B) \leftarrow z(B) - \delta_i$, if $B \in \Omega$ is a root blossom with $B \subseteq \hat{V}_{in}(F(G_{elig}/\Omega))$
  \end{itemize}
  \item {\bf Blossom Dissolution:} \label{step:end}
  \begin{itemize}
      \item Some root blossoms may have zero $z$-values after the dual adjustment step. Dissolve them by removing them from $\Omega$ until there are no such root blossoms. Update $G_{elig}$ with the new $\Omega$.
  \end{itemize}
  \end{enumerate}
  \item[--] If $i \in [0,L)$, set $\delta_{i+1} \leftarrow \delta_{i} / 2$, $\tau \leftarrow \tau + \delta_{i+1}$ and $y(u) \leftarrow y(u)+ \delta_{i+1}$ for every $u \in V$.
  \end{itemize}
  \end{minipage}
  }
  \caption{The modified scaling framework.}\vspace*{-1em}
  \label{fig:edmondssearch}
  \end{figure}

\subsection{Iterations in each Scale}
\label{sec:iterations-within-each-scale}

There will be $O(1/\epsilon')$ iterations within each scale.
In each scale $i$, the ultimate goal of the algorithm is to make progress on primal $(M)$ and dual $(y, z)$ solutions such that they meet the complementary slackness properties (\Cref{prop:RCS}). 
This can be achieved by iteratively seeking for a set of augmenting paths $\Psi$, updating the matching $M\gets M\oplus\cup_{P\in\Psi}P$, and then performing dual adjustments on $y$ and $z$ variables.
However, in order to ensure that dual variables are adjusted properly,
we enforce the following \emph{primal blocking conditions} for $\Psi$:

\begin{property}[{\it Primal Blocking Conditions}]\label{prop:PBC}~
\begin{enumerate}[(1),itemsep=0pt]
  \item \label{item:first_block} No augmenting paths exist in $G_{elig}/\Omega$.
  \item  \label{item:second_block} No full blossoms can be reached from any free vertices in $G_{elig}/\Omega$ via alternating paths.
\end{enumerate} 
\end{property}

Here we briefly explain why \Cref{prop:PBC} leads to satisfactory dual adjustments.
In a dual adjustment step, 
the algorithm decreases the $y$-values of inner vertices in $\hat{V}_{in}(F(G_{elig}/\Omega))$ by $\delta_i/2$ and increases the $y$-values of outer vertices in $\hat{V}_{out}(F(G_{elig}/\Omega))$ by $\delta_i/2$.
\Cref{prop:PBC} ensures that $\hat{V}_{in}(F(G_{elig}/\Omega)) \cap \hat{V}_{out}(F(G_{elig}/\Omega)) = \emptyset$ and so the duals can be adjusted without ambiguity.
 
As mentioned in \cref{sec:tech_sum}, it is difficult to 
achieve the primal blocking conditions efficiently in \congest and PRAM due to long augmenting paths and large blossoms.
Fortunately, with the weight modifiers $\Delta w$ introduced from~\cite{CS22},
now we are allowed to remove some matched edges and free vertices from $G_{elig} / \Omega$, which enables a trick of \emph{retrospective eligibility modification}:
after \emph{some} set of augmenting paths $\Psi$ is returned, we modify $G_{elig}$ such that $\Psi$ satisfies \Cref{prop:PBC}.

To remove a matched edge $e$ from $G_{elig}$, we simply add $\delta_i$ to $\Delta w(e)$ and so $e$ becomes ineligible.
To remove a free vertex $f$, we add $\delta_i$ to the $y$-values of vertices in $\hat{f^\Omega}$ and decrease $z(B)$ by $2\delta_i$ where $B$ is the root blossom containing $f$.
By doing so, the vertex $f^{\Omega}$ is isolated from all the other vertices in $G_{elig}/\Omega$ since all the edges incident to $f^{\Omega}$ become ineligible (note that all these edges must be unmatched).  Additionally, 
all the internal edges inside $\hat{f}^{\Omega}$ will have their $yz$-values unchanged.
Note that the reason that we increase the $y$-values by $\delta_i$ instead of $\delta_i/2$ is that we need to synchronize the parity of the $y$-values (as a multiple of $\delta_i/2$) as a technicality required for the analysis.

We present the details of the entire scaling algorithm in \Cref{fig:edmondssearch}.
The {\textbf{augmentation and blossom shrinking step}} is the step that adjusts the primal variables $M$ (and also $\Omega$) and removes some matched edges and free vertices to achieve the primal blocking conditions.
It uses procedure \textsc{Approx\_Primal}, which we describe in \cref{sec:MWMcongest}, that runs in $\poly(1/\epsilon, \log n)$ rounds and returns a set of matched edges $M'$ and free vertices $F'$ of sizes $O((\epsilon / \log n)\cdot |M|)$ as well as a set of augmenting paths $\Psi$ and a set of blossoms $\Omega'$ such that the primal blocking conditions hold in $(G_{elig} - F' - M' - \Psi) / (\Omega \cup \Omega') $.
Assuming such a procedure exists, we give a full analysis in \cref{apx:DPanalysis} to show that the algorithm runs in $\poly(1/\epsilon, \log n)$ rounds and outputs a $(1-\epsilon)$-\MWM{}.

\paragraph{Implementation in {\textsf{CONGEST} model}}
In \congest model, all quantities $M, y(u), y(B), \Omega$, and $\Delta w(e)$ shall be stored and accessed locally.
We present one straightforward implementation in \cref{appendix:data-structures}.
We remark that there is no need to store $\tau$ as a variable since the number of iterations per scale can be pre-computed at the beginning of the algorithm.

\paragraph{Implementation in PRAM and semi-streaming models}
We can simulate the \congest implementation mentioned above in the CREW PRAM model and the semi-streaming model.
The CREW PRAM implementation of 
\textsc{Approx\_Primal} will be described in \cref{sec:pram-approx-primal}.
The semi-streaming implementation will be described in \cref{sec:semi-streaming-impl}.

\section{The Parallel Depth-First Search}\label{sec:FMU}
Fischer, Mitrovic, and Uitto~\cite{FMU22} give a deterministic algorithm for $(1-\epsilon)$-approximate \MCM{} in the semi-streaming model as well as in other models such as \congest and the massive parallel computation (MPC) model.
The core of their algorithm is the procedure \textsc{Alg-Phase}, which searches for an almost maximal set of (short) augmenting paths.
In particular, 
\textsc{Alg-Phase} runs a parallel DFS from every free vertex and returns a set of augmenting paths $\Psi$ and two small-sized sets of vertices $V'$ and $V_A$ such that there exists no augmenting paths of length $O(1/\epsilon)$ on $G-\Psi-V'-V_A$.
The DFS originating from a free vertex $\alpha$ defines a search \emph{structure}, denoted as $S_\alpha$.
For the efficiency purpose, the algorithm imposes several restrictions to the DFS on these structures which are parametrized by a \emph{pass limit} $\tau_{\max}$, a \emph{size limit} $\limit$, and a \emph{depth limit} $\Lmax$.
We provide an overview of the \textsc{Alg-Phase} algorithm of~\cite{FMU22} in~\cref{sec:fmu-overview}.

Unfortunately, directly running \AlgPhase
on $G$ for an almost maximal set of augmenting paths without considering the blossoms would break the scaling framework.
For example, the framework does not allow the search to return two disjoint augmenting paths that pass through the same blossom.
We now describe the modified \AlgPhase that works for the contracted graph $G/\Omega$.

\subsection{{\normalfont\VertexAlgPhase} on the Contracted Graph $G/\Omega$}\label{sec:our-fmu}

Our goal of the modified \AlgPhase is clear: all we need to do is to come up with an almost maximal set $\Psi^\Omega$ of short augmenting paths on $G/\Omega$.
After $\Psi^\Omega$ is found, the algorithm recovers $\Psi$, the actual corresponding augmenting paths on $G$.
Moreover, the algorithm returns two small sets of vertices $V'$ and $V_A$ so that no short augmenting path can be found in $(G-\Psi-V'-V_A)/\Omega$.

Observe that
the lengths of the paths in $\Psi$ on $G$ could be much longer than the paths in $\Psi^\Omega$ on $G/\Omega$, due to the size of blossoms in $\Omega$.
This observation motivates us to consider a \emph{vertex-weighted} version of \textsc{Alg-Phase}.
When computing lengths to an augmenting path on $G/\Omega$, each contracted root blossom now has a weight corresponding to the number of matched edges inside the blossom.
Now we define the \emph{matching length} and \emph{matching distance} in a contracted graph.

\begin{definition}
	Given $u \in G/\Omega$, define $\|u\|=|\hat{u}|$ to be the number of vertices represented by $u$ in the original graph $G$. Given a set of vertices $S$, define $\|S\| = \sum_{u\in S} \|u\|$.\end{definition}

\begin{definition}\label{def:new_matching_length}
Let $M$ be a matching on $G$ and $\Omega$ be a set of full blossoms with respect to $M$ on $G$.
Define $\MContract := M/\Omega$ to be the set of corresponding matched edges of $M$ on $G/\Omega$.
Given an alternating path $P = (u_1, \ldots, u_k)$ in $G / \Omega$,  define the matching length of $P$, $\|P\|_{M} = |P \cap \MContract| + \sum_{i=1}^{k} (\|u_i\|-1)/2$.
For any matched edge $e=uv\in \MContract$ we define $\|e\|_M=(\|u\|+\|v\|)/2$ which corresponds to the total number of matched edges in the blossoms $\hat{u}$ and $\hat{v}$ as well as the edge $e$ itself.
\end{definition}

In the DFS algorithm searching for augmenting paths, a search process may visit a matched edge $e$ in both directions. We distinguish these two situations by giving an orientation to $e$, denoting them as \emph{matched arcs} $\vec{e}$ and $\cev{e}$.
\cref{def:new_matching_length} gives a natural generalization of matching distances on $G/\Omega$:

\begin{definition}\label{def:new_distance} Given a subgraph $H \subseteq G/\Omega$, a set of free vertices $F$, a matching $M$, and a matched arc $\vec{e}$, the matching distance to $\vec{e}$, $d_{H,M}(F,\vec{e})$, is defined to be the shortest matching length of all alternating paths in $H$ that start from a free vertex $F$ and end at $\vec{e}$.  When the first parameter is omitted, $d_{H,M}(\vec{e})$ is the shortest matching length among all alternating paths in $H$ that start from any free vertex in $H$ and end at $\vec{e}$. 
\end{definition}

Throughout this paper,
if an alternating path starts with a free vertex $u_0$ but ends at a non-free vertex, we conveniently denote this alternating path by $(u_0, \vec{e_1}, \vec{e_2}, \ldots, \vec{e_t})$, where $u_0$ is the starting free vertex and $e_1, e_2, \ldots, e_t$ is the sequence of the matched edges along the path.
Also for convenience we define $\|\vec{e}\|_M=\|e\|_M$ for each matched arc $\vec{e}$.

Let $\lambda$ be a parameter\footnote{For the purpose of fitting this subroutine into the scaling framework shown in~\Cref{fig:edmondssearch} and not to be confused with the already-defined parameter $\epsilon$,
we introduce
the parameter $\lambda$ for the error ratio.}.
Similar to the \AlgPhase algorithm, our \VertexAlgPhase returns a collection of disjoint augmenting paths $\mathcal{P}$ where each augmenting path has a matching length at most $O(\poly(1/\lambda))$; two sets of vertices to be removed $V'$ and $V_A$ with their total weight bounded by $\|V'\|=O(\lambda|M|)$ and $\|V_A\|=O(\lambda |M|)$; and the collection of search structures $\mathcal{S}$ where each search structure $S_\alpha\in\mathcal{S}$ has weight $\|S_\alpha\|=O(\poly(1/\lambda))$.
We summarize the vertex-weighted FMU algorithm below:

\begin{lemma}\label{lemma:vertex-weighted-FMU}
  Let $\lambda$ be a parameter.
  Let $G$ be the network with weak diameter $\poly(1/\lambda)$ and $M$ be the current matching.
  Let $\Omega$ be a laminar family of vertex subsets (e.g., the current collection of blossoms) such that 
  each set $B\in\Omega$ contains at most $\Bmax=O(1/\lambda^7)$ vertices.
  Define the DFS parameters $\tau_{\max}:=1/\lambda^4$, $\limit:=1/\lambda^2$, and $\Lmax:=1/\lambda$.
  Then, there exists a \congest and a \crewpram algorithm \VertexAlgPhase such that in $\poly(1/\lambda)$ time, returns $(\mathcal{P}, V', V_A, \mathcal{S})$ that satisfies the following:

\begin{enumerate}[itemsep=0pt]

\item $\|S_{\alpha}\| \leq C_{\max}$ for each structure $S_\alpha\in \mathcal{S}$, \hfill where $\Cmax := \tau_{\max}\cdot (\Lmax+1)\cdot \limit$.

\item $\|V'\| \leq \Cmax\cdot (\Lmax+1)  \cdot (2|\mathcal{P}| + \lambda^{32}\tau_{\max}|M|) $.

\item $\|V_{A}\| \leq h(\lambda)\cdot (2\Lmax) \cdot |M|$, \hfill where $h(\lambda):=\frac{4+2/\lambda}{\lambda\cdot \tau_{\max}} + \frac{2}{\limit}$.

\item No augmenting path $P$ with $\|P\|_M \leq \ell_{\max}$ exists in $(G/\Omega) \setminus (V' \cup V_{A})$.

\item For each matched arc $\vec{e}$, if $d_{(G/\Omega)\setminus (V'\cup V_A), M}(\vec{e}) \le \Lmax$, then there exists a $S_\alpha\in \mathcal{S}$ such that $\vec{e}$ belongs to $S_\alpha$.
\end{enumerate}

Furthermore, \VertexAlgPhase can be simulated in the \crewpram model in $\poly(1/\lambda)$ time with $O(m)$ processors.

\end{lemma}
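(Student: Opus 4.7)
The plan is to reduce the vertex-weighted problem on the contracted graph $G/\Omega$ to an unweighted instance by replacing each already-contracted blossom with a length-calibrated gadget and then invoking the original \AlgPhase of \cite{FMU22} as a black box on the resulting unweighted graph. The five stated properties will then be obtained by projecting the unweighted guarantees back through the gadget map.

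More concretely, I would build an auxiliary unweighted graph $G'$ from $G/\Omega$ by replacing every contracted supernode $u$ with $\|u\|\ge 3$ by a path-like gadget internal to the blossom $\hat u$. The gadget has one \emph{port} per $G/\Omega$-edge incident to $u$ (the port being the vertex of $\hat u$ on which that edge is actually anchored in the original graph), and its matched/unmatched structure is chosen so that any alternating walk entering on one port and leaving on another uses exactly $(\|u\|-1)/2$ matched edges. Such a gadget exists because $\hat u$ is a full blossom. This construction is consistent with \Cref{def:new_matching_length}: every alternating path $P$ in $G/\Omega$ lifts to an alternating path $P'$ in $G'$ with $|P'\cap M'|=\|P\|_M$, where $M'$ denotes the matching on $G'$, and conversely every alternating path in $G'$ projects to one in $G/\Omega$. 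Free vertices and matched arcs correspond in the obvious way.

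Next, I would invoke \AlgPhase on $G'$ with parameters $\tau_{\max}, \limit, \Lmax$ as stated, obtaining an unweighted tuple $(\Psi',V''_1,V''_A,\mathcal{S}')$, and define $(\mathcal{P},V',V_A,\mathcal{S})$ by projecting each object through the gadget-to-supernode map. The size bound $|S'_\alpha|\le \Cmax$ from the original \AlgPhase becomes the weight bound $\|S_\alpha\|\le\Cmax$ in $G/\Omega$, because each unit of $\|S_\alpha\|$ charged by the projection corresponds to a distinct vertex of $S'_\alpha$ inside the gadget. The analogous pullback gives items 2 and 3 of the lemma from the cardinality bounds on $V''_1$ and $V''_A$ in $G'$. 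Items 4 and 5 follow because matching distance in $G/\Omega$ coincides with the matched-edge distance in $G'$, so the absence of short augmenting paths and the reachability property for matched arcs transfer verbatim.

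The main obstacle is to implement the simulation of \AlgPhase-on-$G'$ without materializing $G'$, since each supernode of $G/\Omega$ is a physical vertex of the network and the gadget is purely conceptual. The saving grace is that $\Bmax = O(1/\lambda^7)$ bounds every blossom and hence every gadget. I would use the $\poly(1/\lambda)$ weak-diameter assumption of \Cref{thm:diameter} to aggregate the entire internal topology of each blossom at a designated representative (say, the base) in $\poly(1/\lambda)$ rounds, and then carry out each parallel-DFS step of \AlgPhase that enters a gadget by a local $\poly(1/\lambda)$-time computation at that representative, forwarding the outcome back through the ports. Because the blossoms partition the original vertex set and the work inside each blossom is proportional to its size, the same simulation yields the claimed $\poly(1/\lambda)$-time \crewpram implementation with $O(m)$ processors.
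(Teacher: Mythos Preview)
Your high-level strategy matches the paper's: reduce the weighted problem on $G/\Omega$ to an unweighted instance, run \AlgPhase as a black box, and project back. However, the specific reduction you propose has a genuine gap, and it differs from the paper's in a way that matters.

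You expand each \emph{supernode} $u$ into a multi-port gadget and assert that ``its matched/unmatched structure is chosen so that any alternating walk entering on one port and leaving on another uses exactly $(\|u\|-1)/2$ matched edges,'' justified by ``$\hat u$ is a full blossom.'' This is the crux, and it is not correct. In a full blossom, the even alternating path from an arbitrary boundary vertex to the base uses anywhere between $0$ and $(\|u\|-1)/2$ matched edges, depending on the vertex; the blossom does \emph{not} realize a uniform port-to-port matched length. Nor do you construct any alternative gadget with this property; with ports pinned to the actual anchoring vertices of $\hat u$ (as you specify), such a uniform-length gadget generally does not exist. Without an \emph{exact} correspondence between $\|P\|_M$ in $G/\Omega$ and the unweighted matched length in $G'$, the transfer of Items~4 and~5 breaks: one direction of the inequality gives you Item~4 but not Item~5, and the other direction gives the reverse.

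The paper sidesteps the multi-port issue entirely by expanding \emph{matched edges} rather than supernodes: each matched edge $e=uv$ in $G/\Omega$ is replaced by an alternating path with exactly $\|e\|_M=(\|u\|+\|v\|)/2$ matched edges, and each non-trivial free supernode is extended by a path of $(\|\alpha\|-1)/2$ matched edges. Since a matched edge has only two endpoints, a plain path suffices and the length correspondence is exact by construction. The paper also first \emph{prunes} all matched edges with $\|e\|_M>\Lmax$ and free supernodes with $(\|\alpha\|-1)/2>\Lmax$; this pruning is what yields the $(\Lmax+1)$ and $2\Lmax$ factors in Items~2 and~3. Your proposal omits any such pruning, so even if the gadget worked, your projection would incur a $\Bmax$ blowup per vertex of $V''$ rather than an $\Lmax$ blowup, and the stated bounds on $\|V'\|$ and $\|V_A\|$ would not follow.
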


In \cref{appendix:proof-vertex-weighted-FMU}
we prove \cref{lemma:vertex-weighted-FMU}.
Specifically, we show that 
it is possible to apply a \VertexAlgPhase algorithm on $G/\Omega$ that can be simulated on the underlying network $G$ with an additional $O(\Bmax^2)$ factor in the round complexity.
Interestingly, the \VertexAlgPhase itself is implemented via a black-box reduction back to the unweighted \AlgPhase procedure of \cite{FMU22}.

\section{Augmentation and Blossom Shrinking}
\label{sec:MWMcongest}

The main goal of this section is to prove the following theorem:

\begin{theorem}\label{thm:augment_and_shrink} 
Let $\lambda$ be a parameter.
Given a graph $G$, a matching $M$, a collection of active blossoms $\Omega$ where each active blossom $B\in\Omega$ has size at most $\Cmax=O(1/\lambda^7)$ vertices. There exists a $\poly(1/\lambda, \log n)$-time algorithm in the \congest model that identifies the following:

\begin{enumerate}[(1),itemsep=0pt]
\item\label{enum:main-aug-path-length} A set $\Psi$ of vertex-disjoint augmenting paths with matching lengths at most $\poly(1/\lambda)$.

\item\label{enum:main-small-blossoms} A set of new blossoms $\Omega'$ in $G/\Omega$, where the size of each blossom is at most $\Bmax$.

\item\label{enum:main-not-many-removed-edges}  A set of at most $O(\lambda \cdot |M|)$ matched edges $M'$ and free vertices $F'$. 
\end{enumerate}

\noindent Let $\GContract$ be the contracted graph $\GContract := (G-\Psi-M'-F')/(\Omega\cup \Omega')$ after new blossoms are found and $\FRemain := F(\GContract)$ be the remaining free vertices.
Then, the algorithm also obtains:

\begin{enumerate}[(1),itemsep=0pt]
\setcounter{enumi}{3}

\item\label{enum:main-no-aug-path} \label{enum:main-no-outer-outer}
The sets $\hat{V}_{in}(\FRemain)$ and $\hat{V}_{out}(\FRemain)$.

\end{enumerate}

\noindent These objects are marked locally in the network (see \cref{appendix:data-structures}).
Moreover,
$V_{out}(\FRemain) \cap V_{in}(\FRemain) = \emptyset$. 
\end{theorem}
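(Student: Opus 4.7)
The plan is to prove \cref{thm:augment_and_shrink} by a multi-phase procedure that iteratively calls \VertexAlgPhase (\cref{lemma:vertex-weighted-FMU}) to handle short augmenting paths and nearby blossoms, then uses a weighted path-cutting step to dispose of the remaining long augmenting paths. Throughout, we track the accumulated ``removed weight'' from the weight modifier applications and show it stays within the allowed $O(\lambda\cdot |M|)$ budget.

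\textbf{Phase 1: iterated} \VertexAlgPhase. First I would call \VertexAlgPhase on $G/\Omega$ repeatedly, each time adding the returned augmenting paths to a cumulative set $\Psi$, marking the returned $V'$ and $V_A$ as removed, and augmenting $M$ along $\mathcal{P}$ between calls (conceptually, in the analysis). By properties (2)--(3) of \cref{lemma:vertex-weighted-FMU}, a call that returns $|\mathcal{P}|$ paths contributes $\|V'\| = O(\Cmax\Lmax\cdot |\mathcal{P}|) + O(\lambda^{32}\tau_{\max}\Cmax\Lmax \cdot |M|)$ and $\|V_A\| = O(\lambda\cdot |M|)$ after plugging in the chosen parameters. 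Since each augmentation shrinks $|F|$ by two, a standard potential argument terminates the loop after $O(\log(1/\lambda))$ rounds with the guarantee that either no short augmenting paths remain, or so few remain per call that the cumulative $\|V'\|$ and $\|V_A\|$ stay $O(\lambda\cdot |M|)$. At the end of this phase, property (4) of \cref{lemma:vertex-weighted-FMU} tells us that no augmenting path of matching length at most $\Lmax$ survives in $(G/\Omega)\setminus(V'\cup V_A)$.

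\textbf{Phase 2: blossom formation inside structures.} Next I would use property (5) of \cref{lemma:vertex-weighted-FMU}: every matched arc within matching distance $\Lmax$ from a remaining free vertex lies inside some structure $S_\alpha\in\mathcal{S}$. Since $\|S_\alpha\|\le\Cmax$, the structure's topology fits in $\poly(1/\lambda)$ words and can be aggregated at the root free vertex of $S_\alpha$ in $\poly(1/\lambda)$ \congest rounds using the weak-diameter assumption. At the root, we locally run a sequential blossom-detection routine on $S_\alpha$ (analogous to the blossom contraction step of \cite{DuanP14}) and add any detected new full blossoms to $\Omega'$. Because these blossoms are sub-structures of some $S_\alpha$, each has size at most $\Cmax \le \Bmax$, giving conclusion \ref{enum:main-small-blossoms}. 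After this contraction, no full blossom is reachable from any remaining free vertex within matching distance $\Lmax$ in the updated graph.

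\textbf{Phase 3: weighted path-cutting for the long remainder.} After phases 1--2, every surviving augmenting path in the contracted graph has matching length strictly greater than $\Lmax$, and within the first $\Lmax$ matching-distance layers from any free vertex there are no full blossoms; equivalently, this neighborhood is ``effectively bipartite'' under the weighted metric. I would layer the matched edges/blossom-vertices by $d_{G/(\Omega\cup\Omega'),M}(F,\cdot)$ into $\Lmax = 1/\lambda$ layers and choose the layer $k^{\ast}$ of minimum total weight. A pigeonhole argument gives total weight at most $\lambda\cdot\|M\|$, and cutting this layer (by bumping $\Delta w$ on its matched edges) disconnects every remaining augmenting path of matching length exceeding $\Lmax$. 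The computation of the layers and the minimum can be done by a weighted BFS of depth $\Lmax=O(1/\lambda)$ from the free vertices, which runs in $\poly(1/\lambda,\log n)$ rounds on the weak-diameter tree.

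\textbf{Finalization and main obstacle.} After cutting layer $k^{\ast}$, compute $\hat V_{\mathrm{in}}(\FRemain)$ and $\hat V_{\mathrm{out}}(\FRemain)$ by an alternating BFS from $\FRemain$ of depth $O(\Lmax)$, which is now the full diameter of the relevant neighborhood since all remaining augmenting paths and all reachable full blossoms have been eliminated; this BFS terminates in $\poly(1/\lambda,\log n)$ rounds. Disjointness $V_{\mathrm{out}}(\FRemain)\cap V_{\mathrm{in}}(\FRemain)=\emptyset$ follows directly from the primal blocking conditions of \cref{prop:PBC}: a common vertex would yield either an odd--even alternating walk from two free vertices (an augmenting path, contradicting phase 1/3) or a full blossom reachable from $\FRemain$ (contradicting phase 2). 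The main obstacle I anticipate is tight accounting of the removal budget across Phase 1's iterations combined with Phase 3's cut, so that the sum $\|M'\|+\|F'\| = O(\lambda\cdot|M|)$ holds; in particular, one must verify that $\lambda^{32}\tau_{\max}\Cmax\Lmax$ and the pigeonhole term $\lambda\|M\|$ combine with the right constants and that the weak-diameter-induced overhead in aggregation does not disturb the round complexity.
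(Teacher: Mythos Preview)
Your overall four-phase plan matches the paper's \textsc{Approx\_Primal} (Algorithm~\ref{alg:main}) closely: iterate \VertexAlgPhase, detect blossoms inside the returned structures, then cut a thin weighted layer. Phases~2 and~3 are essentially what the paper does in Steps~3 and~4, and your Finalization argument is the right intuition (the paper makes it precise as \cref{lemma:no-outer-outer}).

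The genuine gap is in Phase~1. You accumulate $V'$ and $V_A$ across \emph{every} call and claim termination after $O(\log(1/\lambda))$ rounds via ``a standard potential argument'' based on $|F|$ dropping by two per augmentation. That potential does not yield $O(\log(1/\lambda))$ iterations: the number of augmentations needed depends on $|F|$ and on how many paths each call returns, not on $\lambda$ alone. Worse, even granting $O(\log(1/\lambda))$ iterations, each call already contributes $\|V_A\|=\Theta(\lambda\,|M|)$ by property~(3) of \cref{lemma:vertex-weighted-FMU}, so the cumulative removal is $\Theta(\lambda\log(1/\lambda)\,|M|)$, not $O(\lambda\,|M|)$; and if the iteration count is actually $\poly(1/\lambda)$ (which is what one can prove), the accumulated $V_A$ blows up to $\poly(1/\lambda)\cdot\lambda\,|M|$.

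The paper avoids this by \emph{not} accumulating $V'$ and $V_A$. In the repeat loop it only removes the augmenting paths $\mathcal{P}$ from $G$ and stops as soon as the last call returns $|\mathcal{P}|\le \lambda\,|M|/(\Cmax(\Lmax+1))$; then it takes $V'$ and $V_A$ from the \emph{final} call only. With this stopping rule, property~(2) gives $\|V'\|\le \Cmax(\Lmax+1)\cdot(2|\mathcal{P}|+\lambda^{32}\tau_{\max}|M|)=O(\lambda\,|M|)$, and property~(3) gives $\|V_A\|=O(\lambda\,|M|)$ directly (\cref{lemma:main-bound-removed-edges-1}). The loop still runs at most $\poly(1/\lambda)$ times because every non-terminal call finds more than $\lambda\,|M|/(\Cmax(\Lmax+1))$ paths, each removing at least one matched edge from $G$. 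Adopting this ``last-call-only'' bookkeeping with the explicit stopping threshold fixes your Phase~1 and makes the budget go through.

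One smaller point: in your Finalization, the assertion that a vertex in $V_{\mathrm{out}}\cap V_{\mathrm{in}}$ forces either an augmenting path or a reachable blossom needs the observation that an outer--outer unmatched edge with both endpoints at matching distance $\le \Lmax/2$ must have both endpoints inside the \emph{same} structure $S_\alpha$ (otherwise property~(4) would give a short augmenting path), so Phase~2 would have contracted them. This is exactly the content of \cref{lemma:no-outer-outer}; your sketch is correct but skips this case analysis.
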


Notice that the last statement of \Cref{thm:augment_and_shrink} implies that 
 no blossoms can be detected in $\GContract$ from a free vertex, and thus there is no augmenting path from $\FRemain$ on $\GContract$.
 I.e., $(G_{elig}-M'-F')/(\Omega\cup \Omega')$ meets the primal blocking conditions (\Cref{prop:PBC}) after augmenting all the paths in $\Psi$.

 \medskip

\begin{algorithm}[htb]
\caption{\textsc{Approx\_Primal}$(G,M,\Omega, \lambda)$}\label{alg:main}
\begin{algorithmic}[1]\small
\Require Unweighted graph $G$ with weak diameter $O(\frac{\log^3 n}{\epsilon})$, a matching $M$, a collection of blossoms $\Omega$, and $\lambda$ (recall from \Cref{fig:edmondssearch} we set $\lambda=\Theta(\frac{\epsilon'}{\log W})$).
\Ensure Collection of augmenting paths $\Psi$, a set of blossoms $\Omega'$, a set of matched edges $M'$ and a set of free vertices $F'$. These objects as well as $\hat{V}_{in}$ and $\hat{V}_{out}$ are represented locally (see \cref{appendix:data-structures}.)

\State Compute $|M|$.
\vspace*{0.5em}\LineComment{\textbf{Step 1:} Repeatedly search for augmenting paths.}
\Repeat
	 \label{ln:repeat-begin}
		\State $(\mathcal{P}, V', V_{A}, \mathcal{S}) \leftarrow\VertexAlgPhase(G/\Omega, M, \lambda)$ \Comment{See \cref{alg:vertex-weighted-fmu}.}
		\State $\Psi \leftarrow \mathcal{P} \cup \Psi$.
		\State $G\gets G\setminus \mathcal{P}$.
\Until{$|\mathcal{P}|
\leq
\lambda \cdot |M| / (\Cmax(\Lmax+1))  $}\label{ln:repeat-end}

\vspace*{0.5em}
\LineComment{\textbf{Step 2:} Remove matched edges and free vertices returned by the last call to \AlgPhase.}

\State Set $M' \leftarrow M(V') \cup M(V_{A})$ and $F' \leftarrow F(V') \cup F(V_{A})$.\label{ln:collect-active-path}

\vspace*{0.5em}
\LineComment{\textbf{Step 3:} Detect new blossoms.}

\For{each $S_{\alpha} \in \mathcal{S}$}
\label{ln:detect-blossom-for-loop}
			\State Detect all the (possibly nested) blossoms in $S_{\alpha}$ and add them to $\Omega'$.\label{ln:detect-blossom-step}
\EndFor

\vspace*{0.5em}
\LineComment{\textbf{Step 4:} Remove some matched edges and free vertices such that all long augmenting paths disappear.}
\State
Define $\GContract := (G-\Psi-M'-F')/(\Omega\cup \Omega')$, $\FRemain=F(\GContract)$, and $\MContract:=M(\GContract)$.
\State
Use \cref{lemma:simulate-bfs} to compute distance labels $\ell(\vec{e})\in \{d_{\GContract, \MContract}(\FRemain, \vec{e}), \infty \}$
for each matched arc $\vec{e}\in\MContract$.
\label{ln:bfs} 
\State Let $E_i, F_i\gets \emptyset$ for all $i\in 1, 2, \dots, \lfloor\Lmax/2\rfloor$.
\For{each matched arc $\vec{e}\in\MContract$ such that $\ell(\vec{e}) \neq \infty$}
\State Add the corresponding matched edge $\hat{e}\in M$ to $E_i$ for all $i\in [\ell(\vec{e}) - \|e\|_M, \ell(\vec{e})]\cap [1, \Lmax/2]$.
\EndFor
\For{each free vertex $f \in \FRemain$}
\State Add the corresponding free vertex $\dot f\in F$ to $F_i$ for all $i\in [1, (\|f\|-1)/2]$.
\EndFor
\State Let $i^{*} = \arg \min_{1 \leq i \leq \Lmax/2} \{|E_i| + |F_i|\}$.
\label{ln:partition-edges}
\State Add all the matched edges in $E_{i^{*}}$ to $M'$ and all free vertices in $F_{i^*}$ to $F'$.
\label{ln:lastline}
\State\Return $\Psi, \Omega', M'$, and $F'$.

\end{algorithmic}
\end{algorithm}

To prove \cref{thm:augment_and_shrink}, we propose 
the main algorithm \cref{alg:main}.
The algorithm consists of four steps.
In the first step (\cref{ln:repeat-begin} to \cref{ln:repeat-end}), the algorithm repeatedly invokes the \AlgPhase{} procedure and obtains a collection $\mathcal{P}$ of (short) augmenting paths.
These augmenting paths, once found, are temporarily removed from the graph.
The loop ends once the number of newly found augmenting paths becomes no more than $\lambda\cdot |M|/\Cmax$.
Notice that we are able to count $|\mathcal{P}|$ in $\poly(1/\lambda, \log n)$ time 
because $G$ has a weak diameter $O(\log^3 n/\epsilon)$.
The algorithm then utilizes the output $(\mathcal{P}, V', V_A, \mathcal{S})$ from the last execution of $\VertexAlgPhase{}$ in the subsequent steps.

In step two (\cref{ln:collect-active-path}) the algorithm removes edges in $M(V')\cup M(V_A)$ and free vertices in $F(V')\cup F(V_A)$. This ensures that no short augmenting paths can be found from the remaining free vertices.

In the third step (\cref{ln:detect-blossom-for-loop} to \cref{ln:detect-blossom-step}), the algorithm detects and contracts $\Omega'$ ---
the set of all blossoms within any part of $\mathcal{S}$.
We remark that there could still be an edge connecting two outer vertices in $V_{out}(\GContract)$ after the contraction of the blossoms from $\Omega'$, leading to an undetected augmenting path.
However, in this case, we are able to show that at least one endpoint of the edge must be far enough from any remaining free vertex, so
after the fourth step, such an \emph{outer-outer edge} no longer belongs to any augmenting path.

The fourth step (\cref{ln:bfs} to \cref{ln:lastline}) of the algorithm assembles the collection of matched edges $\{E_{i}\}_{i=1,2,\ldots, \floor{\Lmax/2}}$ and free vertices $\{F_i\}_{i=1,2,\ldots, \floor{\Lmax/2}}$.
Each pair of sets $(E_{i}, F_i)$ has the property that after removing the all matched edges in $E_i$ and free vertices in $F_i$, there will be no more far-away outer-outer edges (and thus no augmenting path).
Therefore, to eliminate all far away outer-outer edges, the algorithm chooses the index $i^*$ with the smallest  $|E_{i^*}|+|F_{i^*}|$ and then removes all matched edges in $E_{i^*}$ and free vertices in $F_{i^*}$.
Intuitively, any long enough alternating paths starting from a free vertex will be intercepted at the matching distance $i^*$ by $E_{i^*}$ and $F_{i^*}$.

Let $\GContract$ be the current contracted graph $(G-\Psi-M'-F')/(\Omega\cup \Omega')$ after removing a set of augmenting paths $\Psi$, a set of matched edges $M'$ and a set of free vertices $F'$.
To form the collection of matched edges $\{E_{i}\}$ and free vertices $\{F_i\}$, the algorithm runs a Bellman-Ford style procedure that computes distance labels to each matched arc $\ell(\vec{e})$.
The goal is to obtain $\ell(\vec{e})=d_{\GContract, \MContract}(\FRemain, \vec{e})$ whenever this matching distance is no more than $\Lmax + \|e\|_M$, and $\ell(\vec{e})=\infty$ otherwise.
We note that the labels can be computed efficiently because the (weighted) vicinity of the free vertices, after the blossom contractions, is now bipartite.

For each matched arc $\vec{e}$ with a computed matching length $\ell(\vec{e}) = d_{\GContract, \MContract}(\FRemain, \vec{e})\le \Lmax + \|e\|_M$, we add the corresponding matched edge $\hat{e}\in M$ to $E_{i}$ for all integers $i\in [\ell(\vec{e})-\|e\|_M, \ell(\vec{e})] \cap [1, \Lmax/2]$.
In addition, for each free vertex $f\in \FRemain$, its corresponding free vertex $\dot f\in F$ is added to $F_i$ for all $i\in [1, (\|f\|-1)/2]$. 
Finally, $i^* = \arg\min_{i}\{|E_i|+|F_i|\}$ can be computed and then $E_{i^*}$ and $F_{i^*}$ are removed from $G$.

The rest of the section proves \cref{thm:augment_and_shrink}.

\subsection{Correctness of \cref{thm:augment_and_shrink}}

First of all, we notice that $\Psi$ is comprised of augmenting paths returned by \AlgPhase, which is parametrized to return augmenting paths of matching length at most $2\Cmax = \poly(1/\lambda)$. Thus \cref{enum:main-aug-path-length} of \cref{thm:augment_and_shrink} holds.
Moreover, since in Step 3 (\cref{ln:detect-blossom-step}) the algorithm only searches for blossoms within each part in $\mathcal{S}$, the size of each blossom must be at most $\Cmax$. Thus \cref{enum:main-small-blossoms} holds.

Now, we turn our attention to \cref{enum:main-not-many-removed-edges}.
We notice that the set of removed matched edges $M'$ and free vertices $F'$
 are only affected in \cref{ln:collect-active-path} and \cref{ln:lastline}.
 \cref{lemma:main-bound-removed-edges-1} 
  focuses on \cref{ln:collect-active-path}, and \cref{lemma:size-of-layered-matched-edges} focuses on \cref{ln:lastline}:

\begin{lemma}\label{lemma:main-bound-removed-edges-1}
After the execution of \cref{ln:collect-active-path}, both the size of the sets $M'$ and $F'$ are $O(\lambda\cdot |M|)$. 
\end{lemma}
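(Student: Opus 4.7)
The plan is to apply \cref{lemma:vertex-weighted-FMU} to the final call of \VertexAlgPhase{} in the repeat loop, using the parameter settings $\tau_{\max}=1/\lambda^4$, $\limit=1/\lambda^2$, $\Lmax=1/\lambda$, and $\Cmax=\tau_{\max}(\Lmax+1)\limit=\Theta(1/\lambda^7)$, in order to bound the weighted sizes $\|V'\|$ and $\|V_A\|$, and then convert these weighted bounds to bounds on the cardinalities $|M(V')\cup M(V_A)|$ and $|F(V')\cup F(V_A)|$.

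First I would bound $\|V'\|\le \Cmax(\Lmax+1)\cdot(2|\mathcal{P}|+\lambda^{32}\tau_{\max}|M|)$. The repeat loop in \cref{ln:repeat-begin}--\cref{ln:repeat-end} exits only when $|\mathcal{P}|\le \lambda|M|/(\Cmax(\Lmax+1))$, so the first summand contributes at most $2\lambda|M|$. For the second summand, $\Cmax(\Lmax+1)\cdot\lambda^{32}\tau_{\max}=\Theta(1/\lambda^{8})\cdot\lambda^{32}/\lambda^{4}=\Theta(\lambda^{20})$, so this part is $O(\lambda^{20}|M|)=O(\lambda|M|)$. Therefore $\|V'\|=O(\lambda|M|)$.

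Next, for $\|V_A\|\le h(\lambda)\cdot 2\Lmax\cdot|M|$, I compute $h(\lambda)=(4+2/\lambda)/(\lambda\tau_{\max})+2/\limit=(4+2/\lambda)\lambda^{3}+2\lambda^{2}=O(\lambda^{2})$, and since $2\Lmax=2/\lambda$, this gives $\|V_A\|=O(\lambda|M|)$.

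Finally, since every vertex of $G/\Omega$ carries weight at least one, $|V'|+|V_A|\le \|V'\|+\|V_A\|=O(\lambda|M|)$. Each matched edge counted in $M(V')\cup M(V_A)$ consumes two distinct endpoints in $V'\cup V_A$ (and corresponds to a unique matched edge of the original $M$), so $|M'|\le(|V'|+|V_A|)/2=O(\lambda|M|)$; each free vertex in $F(V')\cup F(V_A)$ is itself a vertex of $V'\cup V_A$ (and corresponds via \cref{def:mapping-vertices} to a unique free vertex of $G$), so $|F'|\le|V'|+|V_A|=O(\lambda|M|)$. The main obstacle, such as it is, is purely the parameter bookkeeping: the exponent $32$ in the $\lambda^{32}\tau_{\max}|M|$ term of \cref{lemma:vertex-weighted-FMU} is deliberately chosen large enough to dominate the $\Theta(1/\lambda^{8})$ prefactor $\Cmax(\Lmax+1)$ with room to spare, so no argument beyond careful substitution is needed.
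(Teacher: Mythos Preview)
Your proposal is correct and follows essentially the same approach as the paper: both invoke \cref{lemma:vertex-weighted-FMU} on the final call of \VertexAlgPhase, plug in the loop-exit bound $|\mathcal{P}|\le \lambda|M|/(\Cmax(\Lmax+1))$ together with the parameter values to get $\|V'\|,\|V_A\|=O(\lambda|M|)$, and then pass from weighted size to cardinality. The paper splits the four quantities $|M(V')|$, $|F(V')|$, $|M(V_A)|$, $|F(V_A)|$ and exploits the active-path structure to write $|F(V_A)|+2|M(V_A)|=|V_A|$, whereas you simply use $|V'|+|V_A|\le\|V'\|+\|V_A\|$ and bound $M'$ and $F'$ jointly, but this is a cosmetic difference.
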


\begin{proof} 
	It suffices to upper bound the four quantities $|M(V')|$, $|M(V_A)|$, $|F(V')|$, and $|F(V_A)|$ individually.
	Since the repeat loop (\cref{ln:repeat-begin} to \cref{ln:repeat-end}) stops whenever the number of augmenting paths $|\mathcal{P}|$ is upper bounded by $\lambda\cdot |M|/(\Cmax(\Lmax+1))$, we have

	\begin{equation}
		\begin{aligned}
		|M(V')| &\le \|V'\|/2\\
		&\le \Cmax (\Lmax+1) \cdot (|\mathcal{P}|
		+ \lambda^{32} \tau_{\max} |M|/2) \ \ \ \ \ \ \ \ \ \ \ \text{(By property 2 of \cref{lemma:vertex-weighted-FMU})}\\
		&\le \Cmax (\Lmax+1) \cdot \lambda\cdot |M| / (\Cmax(\Lmax+1)) + O(\lambda^{32} \Cmax\Lmax\tau_{\max} |M|) \\
		&= O(\lambda\cdot |M|) \hspace{8em} (\Cmax=O(1/\lambda^7), \Lmax=1/\lambda, \text{ and }\tau_{\max}=1/\lambda^4)
		\end{aligned}
	\end{equation}

In addition, we have $|F(V')| \le \|V'\| = O(\lambda\cdot |M|)$. 
Now, because each active path can be decomposed to exactly one free vertex and several matched edges, we have
\begin{equation}
	\begin{aligned}
		|F(V_A)| + 2|M(V_A)| &= |V_A| \le \|V_A\| \\
		&\le h(\lambda)\cdot (2\Lmax)\cdot |M| & \text{(By property 3 of \cref{lemma:vertex-weighted-FMU})} \\
		&= h(\lambda)\cdot (2/\lambda)\cdot |M| \\
		&= O(\lambda\cdot |M|) & \text{(Notice that $h(x)=O(x^2)$ for all small $x$)}\\
	\end{aligned}
\end{equation}

Therefore, we conclude that
$|M'|= O(\lambda\cdot|M|)$ and
$|F'|= O(\lambda\cdot|M|)$ after \cref{ln:collect-active-path}. 
\end{proof}

Now, we claim that in \cref{ln:lastline}, the algorithm removes at most $O(\lambda\cdot |M|)$ matched edges and free vertices.
The claim is implied by the following \cref{lemma:size-of-layered-matched-edges}, which states that the total size of the collection does not exceed $2|M|$.
Therefore, by the fact that $i^*$ minimizes the total size $|E_{i^*}|+|F_{i^*}|$, \cref{ln:lastline} adds at most $2|M|/\floor{\Lmax/2} = O(\lambda\cdot |M|)$ matched edges and free vertices to $M'$ and $F'$.

\begin{lemma}\label{lemma:size-of-layered-matched-edges}
$\sum_{i=1}^{\floor{\Lmax/2}} |E_{i}| + |F_{i}| \le 2|M|$.
\end{lemma}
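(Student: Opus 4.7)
The plan is to bound $\sum_i |E_i|$ and $\sum_i |F_i|$ separately and combine them via the identity
\begin{equation*}
  2|M| \;=\; \sum_{u\text{ matched in }G/\Omega}\|u\| \;+\; \sum_{f\in\FRemain}(\|f\|-1),
\end{equation*}
which I would derive from $|M|=|\MContract|+\sum_u (\|u\|-1)/2$ (summed over every root blossom of $G/\Omega$) together with the fact that $G/\Omega$ has $2|\MContract|$ matched and $|\FRemain|$ free vertices. The $F_i$ side is the easy one: by construction each $f\in\FRemain$ places $\dot f$ into $F_i$ for precisely the integers $i\in[1,(\|f\|-1)/2]$, so $\sum_i |F_i|=\sum_{f\in\FRemain}(\|f\|-1)/2$. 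The work lies in controlling $\sum_i |E_i|$ edge-by-edge.

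The key claim will be that every $e=uv\in\MContract$ contributes at most $\|e\|_M+1$ to $\sum_i |E_i|$. Writing $I_{\vec e}:=[\ell(\vec e)-\|e\|_M,\ell(\vec e)]$ (empty when $\ell(\vec e)=\infty$), this amounts to $|(I_{\vec e}\cup I_{\cev e})\cap [1,\Lmax/2]|\le \|e\|_M+1$. If at most one arc of $e$ has a finite label the claim is immediate, because a single interval already has size $\le \|e\|_M+1$. In the two-arc case I would concatenate the two shortest alternating paths $P_1\to\vec e$ and $P_2\to\cev e$ via the edge $e$; a direct accounting from \cref{def:new_matching_length}, in which $u$ and $v$ are counted only once in the combined walk, shows that the walk has matching length exactly $\ell(\vec e)+\ell(\cev e)-\|e\|_M$, and from it one extracts either a simple augmenting path (if $P_1,P_2$ originate at distinct free vertices) or a blossom (if they share their free endpoint) of matching length at most that value. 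By property~4 of \cref{lemma:vertex-weighted-FMU} Step~2 forbids augmenting paths of matching length $\le\Lmax$ in $(G/\Omega)\setminus(V'\cup V_A)$, while Step~3 contracts every blossom reachable within matching distance $\Lmax$---which would in particular absorb $e$ into $\Omega'$ and remove $e$ from $\MContract$. Both outcomes are excluded, forcing $\ell(\vec e)+\ell(\cev e)>\Lmax+\|e\|_M$. Combined with the non-emptiness constraint $\ell(\vec e),\ell(\cev e)\le \Lmax/2+\|e\|_M$, the upper-cap truncation at $\Lmax/2$ saves at least $(\ell(\vec e)-\Lmax/2)_+ + (\ell(\cev e)-\Lmax/2)_+ \ge \ell(\vec e)+\ell(\cev e)-\Lmax>\|e\|_M$ elements from the a-priori bound $|I_{\vec e}|+|I_{\cev e}|\le 2(\|e\|_M+1)$, leaving at most $\|e\|_M+1$.

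Summing the per-edge bound gives $\sum_i |E_i|\le \sum_{e\in\MContract}(\|e\|_M+1)=\tfrac{1}{2}\sum_{u\text{ matched}}\|u\|+|\MContract|\le \sum_{u\text{ matched}}\|u\|$, where the last step uses $\sum_{u\text{ matched}}\|u\|\ge 2|\MContract|$ (each of the $2|\MContract|$ matched vertices has $\|u\|\ge 1$). Adding $\sum_i |F_i|=\sum_f(\|f\|-1)/2\le \sum_f(\|f\|-1)$ closes $\sum_i(|E_i|+|F_i|)\le 2|M|$. The principal obstacle is making the extraction step rigorous when $P_1$ and $P_2$ share internal vertices beyond $u$ and $v$: one must argue that a simple augmenting path or blossom of matching length at most $\ell(\vec e)+\ell(\cev e)-\|e\|_M$ can always be carved out of $P_1\cup P_2\cup\{e\}$, so that the Step~2 and Step~3 guarantees still apply. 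I expect this extraction to be a clean graph-theoretic move thanks to the bipartite-like structure of the $\le\Lmax$ matching-distance neighborhood of $\FRemain$ in $\GContract$ produced by Step~3's contractions.
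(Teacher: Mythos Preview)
Your approach is substantially more elaborate than the paper's and takes a genuinely different route. The paper does not try to show that each \emph{edge} contributes at most $\|e\|_M+1$ to $\sum_i|E_i|$; it simply bounds each \emph{arc}'s contribution by the full length $\|e\|_M+1$ of the closed interval $[\ell(\vec e)-\|e\|_M,\ell(\vec e)]$, sums over both arcs of every $e\in\MContract$, and then collapses everything using $\sum_{v\in\GContract}(\|v\|-1)/2\le |M|-|\MContract|$. No structural input---no concatenation of alternating paths, no appeal to the absence of short augmenting paths or uncontracted blossoms, no truncation analysis at $\Lmax/2$---is used at all. (Strictly speaking the paper's arithmetic in the passage from the first displayed line to the second implicitly treats each arc's contribution as $\|e\|_M$ rather than $\|e\|_M+1$; the honest outcome of the crude count is $2|M|+2|\MContract|\le 4|M|$, which is still ample for the averaging over $i$ that follows.)

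Your plan, by contrast, earns the sharper constant by proving $|(I_{\vec e}\cup I_{\cev e})\cap[1,\Lmax/2]|\le\|e\|_M+1$ per edge via the concatenate-and-extract argument. That argument is correct in spirit, and the ``Remark'' after \cref{lemma:simulate-bfs} already records the structural fact you need in the generic case ($\ell(\vec e)<\Lmax$ forces $\ell(\cev e)=\infty$), which dissolves the extraction difficulty you flag for all but the boundary edges. So what you gain is the exact constant $2$ at the price of invoking the Step~2 and Step~3 guarantees; what the paper's approach buys is a proof that is pure double counting, requiring nothing beyond the definition of the $E_i$ and $F_i$, at the cost of a harmless constant-factor slack.
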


\begin{proof}
\begin{align*}
	\sum_{i=1}^{\floor{\Lmax/2}} |E_i|+|F_i| &\le \sum_{e\in \MContract} (\|\vec{e}\|_M+1+\|\cev{e}\|_M+1) + \sum_{f\in \FRemain} (\|f\|-1)/2 \tag{the total size equals to all occurrences of each arc and free vertex in the collection}\\
	&\le 2|\MContract| + \sum_{e=uv\in \MContract} 2\left(\frac{\|u\|-1}{2}+\frac{\|v\|-1}{2}\right) + \sum_{f\in \FRemain} \frac{\|f\|-1}{2} \\
	&\le 2|\MContract| + 2\sum_{v\in \GContract} \frac{\|v\|-1}{2} \tag{every vertex is incident to at most one edge in $\MContract$}\\
	&\le 2|\MContract| + 2(|M|-|\MContract|)\\
	&=2|M| \qedhere
\end{align*}
\end{proof}

To prove \cref{enum:main-no-outer-outer}, we first show that any shortest alternating path with a matching length at least $\Lmax=1/\lambda$ must intersect either $E_{i}$ or $F_{i}$ for any integer $i\in [1, \Lmax/2]$.

\begin{lemma}\label{lem:cut}
Consider an alternating path $P = (v_0, \vec{e}_1, \ldots, \vec{e}_t)$ on $\GContract$, where $v_0\in \FRemain$ is a free vertex and $e_j$ is a matched edge for $1 \leq j \leq t$.
Assume $\ell(\vec{e}_t)=d_{\GContract, \MContract}(\FRemain, \vec{e}_t) \ge \Lmax$.
Then, for any $i\in [1, \Lmax/2]$,
$ \{\dot{v_0}, \hat{e_1}, \cdots, \hat{e}_t\}\cap (E_{i}\cup F_{i})\neq \emptyset$.
 \end{lemma}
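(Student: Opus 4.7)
The plan is to track the matching distance $d_j := d_{\GContract, \MContract}(\FRemain, \vec{e}_j)$ as $j$ runs from $1$ to $t$, and compare its growth with the accumulated weights $\|e_j\|_M$ along $P$. Adopting the convention $d_0 := (\|v_0\|-1)/2$ (the ``internal'' distance inside the free blossom $v_0$), the proof rests on a single key inequality:
\[
  d_j \;\le\; d_{j-1} + \|e_j\|_M \qquad (1 \le j \le t).
\]
I would establish this by extending any shortest alternating path $Q$ from $\FRemain$ to $\vec{e}_{j-1}$ with the unmatched edge $(\text{head}(\vec{e}_{j-1}), \text{tail}(\vec{e}_j))$, which appears on $P$ and hence exists in $\GContract$, followed by the matched arc $\vec{e}_j$; this extension contributes exactly $\|e_j\|_M$ to the matching length. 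If the extension fails to be a simple alternating path, the obstruction must be $\text{head}(\vec{e}_j)$ already appearing on $Q$; but then the alternation property together with the fact that $e_j$ is the unique matched edge at $\text{head}(\vec{e}_j)$ forces the relevant prefix of $Q$ itself to end at $\vec{e}_j$ with matching length at most $d_{j-1}$, which is an even stronger bound.

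Given the jump bound, I would fix $i \in [1, \Lmax/2]$ and split into two cases. If $i \le (\|v_0\|-1)/2$, then the rule defining $F_i$ places $\dot{v}_0$ into $F_i$, and the desired intersection is immediate. Otherwise $d_0 < i$, and since $d_t = \ell(\vec{e}_t) \ge \Lmax \ge i$, I let $j^*$ be the smallest index with $d_{j^*} \ge i$. Then $d_{j^*-1} < i \le d_{j^*}$, and the jump bound yields
\[
  d_{j^*} - \|e_{j^*}\|_M \;\le\; d_{j^*-1} \;<\; i \;\le\; d_{j^*},
\]
so $i$ lies in $[d_{j^*} - \|e_{j^*}\|_M, d_{j^*}] \cap [1, \Lmax/2]$.

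It remains to verify that $\ell(\vec{e}_{j^*}) = d_{j^*}$, not $\infty$, so that the algorithm actually deposits $\hat{e}_{j^*}$ into $E_i$. From $d_{j^*-1} < i \le \Lmax/2$ and a second application of the jump bound,
\[
  d_{j^*} \;\le\; d_{j^*-1} + \|e_{j^*}\|_M \;<\; \Lmax/2 + \|e_{j^*}\|_M \;\le\; \Lmax + \|e_{j^*}\|_M,
\]
which falls within the BFS cutoff specified by \cref{lemma:simulate-bfs}; hence $\ell(\vec{e}_{j^*}) = d_{j^*}$, and $\hat{e}_{j^*} \in E_i$, giving the required nonempty intersection. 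The main obstacle is the careful verification of the jump bound, in particular handling the case where the natural extension of the shortest path is non-simple; once that is resolved via the uniqueness of the matched edge at $\text{head}(\vec{e}_j)$, the rest reduces to an elementary interval-chaining argument over the intervals $[d_j - \|e_j\|_M, d_j]$.
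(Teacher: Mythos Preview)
Your approach is essentially the same as the paper's: both establish the jump bound $d_j \le d_{j-1} + \|e_j\|_M$ (with $d_0=(\|v_0\|-1)/2$) and then run an interval-chaining argument to conclude that $[1,\Lmax/2]$ is covered by $[1,d_0]\cup\bigcup_j [d_j-\|e_j\|_M, d_j]$. You are in fact more careful on one point: the paper asserts ``since $\ell(\vec{e}_t)\neq\infty$, we know $\ell(\vec{e}_j)=d_j$ for all $j$'' without justification, whereas you correctly isolate the single index $j^*$ that matters and verify $d_{j^*}<\Lmax+\|e_{j^*}\|_M$ so that the BFS label is finite there.

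One small wrinkle in your non-simple-extension argument: the obstruction need not be $\mathrm{head}(\vec{e}_j)$ --- it could equally be $\mathrm{tail}(\vec{e}_j)$, and in either case the alternation forces $e_j$ to lie on $Q$, possibly as $\cev{e}_j$ rather than $\vec{e}_j$; the latter direction does not immediately yield a prefix ending at $\vec{e}_j$. The paper does not spell out this case either (it simply invokes the triangle inequality ``by definition of matching length''), and in the present context the local bipartiteness of $\GContract$ after Step~3 (cf.\ the correctness argument of \cref{lemma:simulate-bfs}) rules out the $\cev{e}_j$ scenario within the relevant radius. So this is a shared looseness, not a defect specific to your write-up.
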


\begin{proof}
By definition, $\dot{v_0}$ occurs in $F_i$ for all $i\in [1, (\|v_0\|-1)/2]$ and all matched edges $e_j$ occurs in $E_i$ for all $i\in [\ell(\vec{e}_j) - \|e_j\|_M, \ell(\vec{e}_j)] \cap [1, \Lmax/2]$.
Since $\ell(\vec{e}_t) \neq \infty$, we know that $\ell(\vec{e}_j) = d_{\GContract,\MContract}(\FRemain, \vec{e}_j)$ for all $j=1,2,\ldots, t$.
Moreover, $P$ is an alternating path so by definition of matching length we know that whenever $j>1$ we have $\ell(\vec{e}_j) \le \ell(\vec{e}_{j-1}) + \|e_j\|_M$ and
when $j=1$ we have $\ell(\vec{e}_1) \le (\|v_0\|-1)/2 + \|e_1\|_M$.
Therefore, using the assumption that $\ell(\vec{e}_t)\ge \Lmax$ we obtain 
$$
[1, \Lmax/2]\subseteq [1, (\|v_0\|-1)/2] \cup \bigcup_{j\ge 1} [\ell(\vec{e}_j) - \|e_j\|_M, \ell(\vec{e}_j)].
$$
Thus, for any $i\in [1, \Lmax/2]$ either $\dot{v_0}\in F_i$ or there are some $j$ such that $\hat{e_j}\in E_i$.
\end{proof}

\cref{lem:cut} implies that there is no augmenting path leaving from any of the free vertex $\FRemain$ on $\GContract$.
But it does not imply \cref{enum:main-no-outer-outer} since there could be an edge connecting two outer vertices in $V_{out}(\FRemain)$ without an augmenting path.
The next lemma (\cref{lemma:no-outer-outer}) shows that such a situation does not happen after contracting blossoms (Step 3) and removing the thinnest layer of matched edges and free vertices (Step 4).

\begin{lemma}\label{lemma:no-outer-outer}
	Fix any integer $i\in [1, \Lmax]$.
	Let $H = (G-\Psi-M'-F'- E_i-F_i)/(\Omega\cup\Omega')$.
	Then there is no unmatched edge connecting two vertices in $V_{out}^{H, \MContract}(\FRemain\setminus F_{i}^{\Omega\cup \Omega'})$.
\end{lemma}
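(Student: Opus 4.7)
The plan is to proceed by contradiction. Suppose there exists an unmatched edge $uv$ in $H$ with both endpoints in $V_{out}^{H,\MContract}(\FRemain\setminus F_i^{\Omega\cup\Omega'})$. By definition of $V_{out}$, there are free vertices $f_u,f_v\in \FRemain\setminus F_i^{\Omega\cup\Omega'}$ together with (shortest) even-length alternating paths $P_u:f_u\to u$ and $P_v:f_v\to v$ in $H$. Stitching $P_u$, the edge $uv$, and the reverse of $P_v$ together gives either an augmenting path (when $f_u\ne f_v$) or an odd flower producing a blossom (when $f_u=f_v$), and I will rule out each case.

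In both regimes, the core dichotomy is whether some matched arc $\vec{e}_t$ along the constructed sub-structure has label $\ell(\vec{e}_t)=d_{\GContract,\MContract}(\FRemain,\vec{e}_t)\ge \Lmax$. If it does, I apply \cref{lem:cut} to the alternating sub-path $(f_u,\vec{e}_1,\dots,\vec{e}_t)$, which forces one of $\{\dot{f_u},\hat{e}_1,\dots,\hat{e}_t\}$ to lie in $E_i\cup F_i$. But by hypothesis $\dot{f_u}\notin F_i$ (since $f_u\notin F_i^{\Omega\cup\Omega'}$), and since $H=\GContract-E_i-F_i$, every matched arc of $H$ lifts to a matched edge outside $E_i$. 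This is the desired contradiction.

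In the complementary branch every matched arc of the sub-structure has matching distance $<\Lmax$ from $\FRemain$ in $\GContract$. For the augmenting-path case $(f_u\ne f_v)$, traversing the aug path $P$ from \emph{both} endpoints and using the monotonicity bound $\ell(\vec{e}_j)\le \ell(\vec{e}_{j-1})+\|e_j\|_M$ that underlies the proof of \cref{lem:cut}, I will show $\|P\|_M=O(\Lmax)$. Uncontracting the blossoms of $\Omega'$ preserves matching length by \cref{def:new_matching_length}, so $P$ gives an augmenting path of matching length at most $\Lmax$ in $(G/\Omega)\setminus(V'\cup V_A)$, contradicting property~4 of \cref{lemma:vertex-weighted-FMU}. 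For the blossom case $(f_u=f_v)$, the odd cycle $B$ containing $u$ and $v$ has every arc within matching distance $<\Lmax$ of $\FRemain$, so property~5 of \cref{lemma:vertex-weighted-FMU} places each such arc inside some structure $S_\alpha$; by chasing short alternating paths from $f_u$ around $B$ these arcs all land in the single structure $S_{f_u}$. Step~3 of \cref{alg:main} would then have detected $B$ as a nested blossom of $S_{f_u}$ and added it to $\Omega'$, forcing $u$ and $v$ to collapse to the same vertex in $H/(\Omega\cup\Omega')$ and contradicting the existence of the edge $uv$ between distinct vertices.

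The main obstacle is the ``short'' branch of the dichotomy. Concretely, I must reconcile the matching-distance metric on $\GContract$ on which the labels $\ell$ are defined with the metric on $(G/\Omega)\setminus(V'\cup V_A)$ on which properties~4 and~5 of \cref{lemma:vertex-weighted-FMU} are stated, and verify that the witnessing sub-paths and blossom avoid the vertex set $V'\cup V_A$ that Step~2 removed via $M'=M(V')\cup M(V_A)$ and $F'=F(V')\cup F(V_A)$. Formalizing these relations — that contractions by $\Omega'$ preserve matching length and that matched detours introduced by the additional removals cannot decrease the effective matching distance below $\Lmax$ — is the most delicate part of the argument; the rest is a straightforward application of \cref{lem:cut} and the \VertexAlgPhase guarantees.
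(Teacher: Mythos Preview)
Your overall strategy is the paper's, but the dichotomy threshold is off by a factor of two and this breaks the ``short'' branch. You split according to whether some matched arc along $P_u$ or $P_v$ has label $\ge \Lmax$; in the complementary case you can conclude only that the matched arcs incident to $u$ and to $v$ sit at matching distance $<\Lmax$ from $\FRemain$ in $\GContract$ (and hence in $(G-\Psi-M'-F')/\Omega$ after un-contracting $\Omega'$). Summing the two sides gives an alternating walk of matching length $<2\Lmax$, not $\le \Lmax$, so property~4 of \cref{lemma:vertex-weighted-FMU} --- which only rules out augmenting paths with $\|P\|_M\le \Lmax$ --- does not apply. Your own text slides from ``$\|P\|_M=O(\Lmax)$'' to ``at most $\Lmax$'' without justification. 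The paper avoids this by placing the threshold at $\Lmax/2$: it argues that at least one of $u,v$ must have matching distance $\ge \Lmax/2$ to its incident matched arc in $\GContract$; otherwise both are $<\Lmax/2$, the concatenated walk has matching length $<\Lmax$, and now properties~4 and~5 of \cref{lemma:vertex-weighted-FMU} together with Step~3 finish. For the ``far'' branch at threshold $\Lmax/2$ one needs the observation that the proof of \cref{lem:cut} already works under the weaker hypothesis $\ell(\vec{e}_t)\ge \Lmax/2$, since the intervals there cover $[1,\ell(\vec{e}_t)]\supseteq [1,\Lmax/2]$.

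A secondary issue: the case split $f_u=f_v$ versus $f_u\ne f_v$ (free endpoints of $P_u,P_v$ in $H$) is not the right dichotomy for deploying properties~4 and~5. The structures $S_\alpha$ live in $(G/\Omega)\setminus(V'\cup V_A)$, not in $H$, and the relevant question is whether $u^\Omega$ and $v^\Omega$ lie in the \emph{same} structure --- this is what the paper checks. The concatenation $P_u\cdot uv\cdot P_v^{-1}$ need not be a simple path or a clean flower when $P_u$ and $P_v$ intersect, and even when $f_u\ne f_v$ the short witnesses for $\vec{e}_u$ and $\vec{e}_v$ in $(G-\Psi-M'-F')/\Omega$ may emanate from a single free vertex. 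The paper handles all of this uniformly by first reducing to ``same structure'' via property~4 and then invoking the blossom detection of Step~3. The metric-reconciliation issues you flag as the main obstacle are real but routine; the factor-of-two gap above is what actually blocks your argument as written.
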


\begin{proof}
	Let 
	$uv$ be an unmatched edge on $G$ with $u^{\Omega\cup\Omega'}, v^{\Omega\cup\Omega'}\in V_{out}^{\GContract, \MContract}(\FRemain)$ but $u^{\Omega\cup\Omega'}\neq v^{\Omega\cup\Omega'}$.

	With \cref{lem:cut} in mind, it suffices to prove the following claim: there exists one of the vertices $x\in\{u, v\}$ such that either 
	\begin{enumerate}[itemsep=0pt]
		\item $x^{\Omega\cup \Omega'}\in \FRemain$ and  $(\|x^{\Omega\cup \Omega'}\|-1)/2 \ge \Lmax/2$, or
		\item $x^{\Omega\cup \Omega'}\notin \FRemain$ and the matching distance $d_{\GContract, \MContract}(\FRemain, \vec{e}_x) \ge \Lmax/2$ where $e_x\in M$ is the matched edge incident to $x^{\Omega\cup \Omega'}$ on $\GContract$.
	\end{enumerate}	
	Once we have the above claim for some vertex $x\in\{u, v\}$, we obtain a contradictory argument because now either $x^{\Omega\cup\Omega'}\in F_{i^*}^{\Omega\cup \Omega'}$, or
  there exists a shortest augmenting path from a free vertex in $\FRemain$ to $x^{\Omega\cup\Omega'}$ of matching distance at least $\Lmax$, which is cut off by some set $E_{i^*}^{\Omega\cup \Omega'}$ in the middle by \cref{lem:cut} and thus $x^{\Omega\cup\Omega'}\notin V_{out}^{\GContract, \MContract}(\FRemain)$.

	Now 
	we prove the claim by another contradiction.
	Suppose the statements 1. and 2. in the claim are all false for both $x=u$ and $x=v$.
	Using the fact that the contraction never decreases the matching distances, we know that for \emph{both} $x\in\{u, v\}$,
	either 
	\begin{enumerate}
	\item $x^\Omega\in F^\Omega$ but $(\|x^\Omega\|-1)/2 < \Lmax/2$, or 
	\item $x^\Omega \notin F^\Omega$ but $d_{(G-\Psi-M'-F')/\Omega, M}(F^\Omega, \vec{e}_x) < \Lmax/2$ where $e_x\in M$ is the matched edge incident to $x^\Omega$.
	\end{enumerate}
	
	Now, both $u^\Omega$ and $v^\Omega$  do not belong to any active path from the last execution of \VertexAlgPhase.
	Furthermore, both $u^\Omega$ and $v^\Omega$ must belong to some structure by Item 5 in \cref{lemma:vertex-weighted-FMU}.
	If $u^\Omega$ and $v^\Omega$ belong to different structures, then there must be an augmenting path of matching length $<2(\Lmax/2)=\Lmax$ which contradicts with Item 4 of \cref{lemma:vertex-weighted-FMU}.
	Hence, 
	we conclude that both $u^\Omega$ and $v^\Omega$ belongs to the same structure $S_\alpha\in \mathcal{S}$.
	However, the assumption states that both $u^\Omega$ and $v^\Omega$ are outer vertices in $V_{out}^{(G-\Psi-M'-F')/\Omega, M}(F^\Omega)$.
	Hence, in Step 3 the algorithm creates a blossom that contains both $u^\Omega$ and $v^\Omega$, which implies $u^{\Omega\cup \Omega'} = v^{\Omega\cup\Omega'}$, a contradiction.
\end{proof}

The proof of \cref{enum:main-no-aug-path} in \cref{thm:augment_and_shrink} now follows immediately from \cref{lemma:no-outer-outer}.

\subsection{Implementation Details in \cref{alg:main} in \textsf{CONGEST}}

There are 3 tasks in \cref{alg:main} that are unclear for implementation in the \congest model.
These tasks are (1) obtaining the correct counts of $|\mathcal{P}|$ (\cref{ln:repeat-end}), $|E_i|$ and $|F_i|$ (\cref{ln:partition-edges}),
(2) correctly identifying and forming blossoms within each $\mathcal{S}_\alpha$ (\cref{ln:detect-blossom-step}), and
(3) computing distance labels for matched arcs on $\GContract$ (\cref{ln:bfs}).

Task (1) can be solved in $O(\log n)$ rounds per set using the underlying communication network.

For Task (2), we simulate the naive sequential algorithm for formulating blossoms in each structure $S_\alpha\in \mathcal{S}$:

\begin{lemma}\label{lemma:forming-blossoms}
Let $S_\alpha$ be a structure returned from an execution of \VertexAlgPhase.
Then, there exists an algorithm in \congest that detects (hierarchy of) blossoms within $S_\alpha$ in $\poly(1/\lambda)$ rounds. 
\end{lemma}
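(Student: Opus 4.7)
The plan is to aggregate each structure $S_\alpha$ at a designated leader vertex, perform the blossom detection locally via a sequential algorithm, and then broadcast the resulting (laminar) blossom assignment back to every vertex inside $S_\alpha$. Since the size of $S_\alpha$ in $G$ is bounded by $\|S_\alpha\| \leq \Cmax = O(1/\lambda^7)$, any computation that is polynomial in $\|S_\alpha\|$ can be carried out locally in $\poly(1/\lambda)$ time once the topology has been aggregated.

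First I would choose as the leader of $S_\alpha$ its originating free vertex $\alpha$. Each structure $S_\alpha$ is a rooted DFS search tree of depth at most $\Lmax = O(1/\lambda)$ in $G/\Omega$, and each ``node'' in $G/\Omega$ corresponds to a root blossom of $\Omega$ of size at most $\Bmax = O(1/\lambda^7)$ in $G$. Hence $S_\alpha$ spans at most $\Cmax$ actual vertices of $G$ and has diameter at most $O(\Lmax \cdot \Bmax) = \poly(1/\lambda)$ in the underlying communication graph. Since distinct structures are vertex-disjoint by the guarantee of \VertexAlgPhase, the aggregation along each $S_\alpha$ toward its leader runs in parallel across the network without interference. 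Aggregating the adjacency list (including the incident matched-edge orientations, the tentative inner/outer labels, and the identifiers of the sub-blossoms in $\Omega$ that appear as contracted nodes of $S_\alpha$) amounts to moving $O(\Cmax^2)$ words of data up a tree of diameter $\poly(1/\lambda)$; standard pipelining in \congest accomplishes this in $\poly(1/\lambda)$ rounds.

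Once $\alpha$ holds the full topology of $S_\alpha$, it runs Edmonds' classical sequential blossom-detection procedure: while traversing the DFS/alternating structure, every non-tree edge joining two outer vertices creates a candidate odd cycle whose LCA is the base of a new blossom; newly created blossoms may in turn sit on the cycle of a larger one, so the process is iterated to produce the entire laminar family of new root blossoms within $S_\alpha$. This runs in time $\poly(\|S_\alpha\|) = \poly(1/\lambda)$ at the leader, requires no further network access, and produces for each actual vertex $v \in \hat{S_\alpha}$ the identifier of the innermost new blossom containing $v$ together with pointers encoding its position in the laminar hierarchy. The leader then disseminates these labels back through the same aggregation tree in $\poly(1/\lambda)$ rounds.

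The main obstacle is not the blossom detection itself (which is a local computation on a small object), but ensuring that the aggregation and broadcast can be simultaneously executed across all structures without blowing up the round complexity; this is handled by vertex-disjointness of the structures together with the bound $\|S_\alpha\| = O(1/\lambda^7)$, which keeps both the amount of transmitted data and the diameter of each aggregation tree polynomial in $1/\lambda$. A minor bookkeeping point is that the newly discovered blossoms must be added to $\Omega'$ together with the edge set $E_B$ that witnesses each of them, so the leader should include the relevant blossom-cycle edges in its broadcast; since each such edge is already stored locally at one of its endpoints in $S_\alpha$, the endpoints can activate their copies upon receiving the label and no extra communication is needed.
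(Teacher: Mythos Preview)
Your proposal is correct and follows essentially the same approach as the paper: aggregate the entire induced subgraph of $S_\alpha$ (which has at most $\Cmax$ vertices in $G$) at the free vertex $\alpha$ in $O(\Cmax^2)=\poly(1/\lambda)$ rounds, detect the blossoms locally, and broadcast the result back. One minor inaccuracy: a structure $S_\alpha$ is not necessarily a DFS tree of depth $\Lmax$ in $G/\Omega$ (the \textsc{Overtake} operation can make it larger), but this does not affect your argument since you ultimately rely only on the bound $\|S_\alpha\|\le \Cmax$, which suffices to bound both the diameter and the data volume.
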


\begin{proof}
	Since each $S_\alpha$ has size at most $\Cmax$, it suffices to spend $O(\Cmax^2)$ rounds to aggregate the entire induced subgraph of vertices in $S_\alpha$ to the free node $\alpha$.
	After the node $\alpha$ identifies all blossoms locally, it broadcasts this information to all the vertices within $S_\alpha$ using another $O(\Cmax^2)$ rounds.
\end{proof}

For Task (3), we give a Bellman-Ford style algorithm on $\GContract$ in a straightforward way:

\begin{lemma}\label{lemma:simulate-bfs}
Assume every node knows the blossom size and the root of its associated root blossom (if there is one), and the incident matched edge (if there is one).
Assume that each blossom has its diameter at most $O(\Lmax)$.
Then, after Step 3 is performed, there exists an algorithm in \congest that computes the distance labels $\ell(\vec{e})$ to matched arcs with $\ell(\vec{e}) = d_{\GContract, \MContract}(\FRemain, \vec{e})$ whenever $d_{\GContract, \MContract}(\FRemain, \vec{e})\le \Lmax + \|e\|_M$.
This algorithm finishes
  in $O(\Lmax^2) = \poly(\Lmax)$ rounds.
\end{lemma}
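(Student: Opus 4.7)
The plan is to simulate a weighted Bellman--Ford algorithm on the contracted graph $\GContract$ by carrying out the relaxation steps on the underlying \congest network $G$. Each matched arc $\vec{e}$ carries a label $\ell(\vec{e})$, initially $\infty$, stored at a designated endpoint of $\hat{e}$ in $G$.

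The first key observation bounds the iteration count. Consider any alternating path $P=(v_0,\vec{e}_1,\ldots,\vec{e}_t)$ in $\GContract$ from $v_0\in\FRemain$ realizing the shortest matching distance to $\vec{e}_t$. By \cref{def:new_matching_length}, its matching length satisfies
\[
\|P\|_M \;=\; \frac{\|v_0\|-1}{2} \;+\; \sum_{i=1}^{t}\|e_i\|_M.
\]
If $\|P\|_M \le \Lmax + \|e_t\|_M$, then $\sum_{i=1}^{t-1}\|e_i\|_M \le \Lmax$, and since each $\|e_i\|_M\ge 1$ we obtain $t \le \Lmax + 1$. Hence every matched arc of interest is reached by a shortest path using at most $\Lmax+1$ arcs, and $O(\Lmax)$ Bellman--Ford iterations suffice.

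Next I would implement each iteration in $O(\Lmax)$ \congest rounds. A relaxation sets $\ell(\vec{e})$ to the minimum of its current value, of $\tfrac{\|v_0\|-1}{2} + \|e\|_M$ for each $v_0\in\FRemain$ whose root blossom has an unmatched edge into the tail blossom of $\vec{e}$, and of $\ell(\vec{e'})+\|e\|_M$ for each predecessor matched arc $\vec{e'}$ whose head blossom has such an edge. Because every root blossom in $\Omega\cup\Omega'$ is full, each carries at most one incident matched edge in $\GContract$, so per blossom there is at most one label to broadcast and at most one to refresh. Each iteration proceeds in three sub-phases: (i) the owner of $\ell(\vec{e'})$ broadcasts it through $\vec{e'}$'s head blossom along a BFS tree of depth $O(\Lmax)$; (ii) every vertex of that blossom forwards the value across its unmatched incident edges to neighbors in other root blossoms; (iii) within each receiving root blossom, the minimum among all received candidates is aggregated at the designated endpoint of its matched edge by convergecasting along the blossom's BFS tree. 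The initial relaxation from $\FRemain$ is handled analogously via a one-time broadcast from each free vertex. Each sub-phase takes $O(\Lmax)$ rounds.

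After $\Lmax+1$ iterations, I would truncate any $\ell(\vec{e}) > \Lmax + \|e\|_M$ to $\infty$; by the first observation every surviving finite label equals $d_{\GContract,\MContract}(\FRemain,\vec{e})$. This yields the claimed $O(\Lmax) \cdot O(\Lmax) = O(\Lmax^2) = \poly(\Lmax)$ round complexity. The main obstacle I anticipate is the \congest bandwidth constraint when many arcs relax simultaneously; this is overcome precisely by the fullness of root blossoms, which limits each blossom to a single incident matched arc and thereby ensures that the per-blossom broadcast and convergecast transmit only $O(\log n)$ bits per tree edge per round.
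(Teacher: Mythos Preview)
Your Bellman--Ford setup and the hop bound are fine, and the \congest implementation sketch is reasonable. The gap is in the correctness direction you dismiss in one line: you assert that ``every surviving finite label equals $d_{\GContract,\MContract}(\FRemain,\vec{e})$'', but your argument only establishes $\ell(\vec{e}) \le d_{\GContract,\MContract}(\FRemain,\vec{e})$ (any short alternating \emph{path} is found within $\Lmax+1$ relaxations). You never show $\ell(\vec{e}) \ge d_{\GContract,\MContract}(\FRemain,\vec{e})$.

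The issue is that each Bellman--Ford relaxation extends an alternating \emph{walk} in $\GContract$, not necessarily a simple alternating path; the matching distance in \cref{def:new_distance} is defined over alternating paths. In a general (non-bipartite) graph these can differ: a walk may traverse both $\vec{e}$ and $\cev{e}$ of some matched edge, effectively exploiting an uncontracted odd structure to reach an arc with a strictly smaller label than any genuine alternating path allows. Positive edge weights do not save you here, because removing a repeated vertex from an alternating walk does not in general yield an alternating path. What makes the labels correct is precisely that Step~3 has already contracted every blossom reachable within matching distance $\Lmax$ inside some structure $S_\alpha$; consequently, any alternating walk of matching length at most $\Lmax$ that is not a path would witness an unmatched edge between two outer vertices, contradicting \cref{lemma:no-outer-outer} (or, more directly, the claim in its proof that such vertices would have been merged into a blossom in Step~3). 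The paper's proof invokes exactly this bipartite-like structure of the close neighborhood of $\FRemain$ to rule out the walk-versus-path discrepancy. You need to add this argument; without it, the claim that finite labels equal true distances is unjustified.
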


\begin{proof}
We first describe the algorithm.
Initially, using $O(\Lmax)$ rounds, all matched arcs that can be reached by a free vertex $f$ obtain the distance label $\ell(\vec{e})=\|f\|+\|e\|_M$.
For all other matched arcs the label is set to be $\ell(\vec{e})=\infty$.
Then, 
the rest steps of the algorithm are split into $\Lmax$ iterations $t=1, 2, \ldots, \Lmax$.

In each iteration $t$, each matched arc $\vec{uv}$ with a label $\ell(\vec{uv})=t$ informs the neighbors of $v$ about this label.
For each neighbor $x$ of $v$ who receives this information attempts to update the associated matched arc $\vec{xy}$ by setting $\ell(\vec{xy}) \gets \min\{\ell(\vec{xy}), t+\|xy\|_M\}$, similar to a relaxation step in the Bellman-Ford algorithm.

\paragraph{Correctness}
It is straightforward to see that $\ell(\vec{e})\le d_{\GContract, \MContract}(\FRemain)$ whenever $\ell(\vec{e})<\infty$.
Moreover, whenever $\ell(\vec{e}) < \infty$ we must have $\ell(\vec{e})\le \Lmax + \|e\|_M$.

Now we prove that for each matched arc $\vec{e}$ with $\ell(\vec{e})<\infty$,
there exists an alternating path of matching length exactly $\ell(\vec{e})$, so that $\ell(\vec{e})\ge d_{\GContract, \MContract}(\FRemain)$ which implies the equality.

Suppose this is not true, 
then there exists a matched arc $\vec{e}$ such that $\ell(\vec{e}) < d_{\GContract, \MContract}(\FRemain)$.
According to the algorithm, there exists a walk of interleaving unmatched and matched arcs such that the walk ends with $\vec{e}$. Moreover, the sum of all weighted matched arcs in the walk is exactly $\ell(\vec{e})$.
Since $\ell(\vec{e}) < d_{\GContract, \MContract}(\FRemain)$, the walk must \emph{not} be an alternating path.
This implies that a matched edge whose both directional arcs appear in the walk.
Hence, there must exist an unmatched edge connecting two outer vertices must be observed.
However, this contradicts again the claim in the proof of \cref{lemma:no-outer-outer} since a blossom should have been formed during Step 3.

\paragraph{Remark}
It can be shown that if an arc $\vec{e}$ has $\ell(\vec{e}) < \Lmax$ then $\ell(\cev{e})=\infty$. That is, the graph induced by finitely labeled matched edges (as well as those unmatched edges used for relaxation) is very close to a bipartite graph, in the sense that all matched arcs are most likely to be visited from an inner vertex to an outer vertex.
The proof on the bipartite graph becomes trivial as every distance label corresponds to an alternating path on bipartite graphs.
However, there could be some edges ``at the boundary'' where both $\ell(\vec{e})$ and $\ell(\cev{e})$ are within $[\Lmax, \infty)$ so the graph we are concerning is not quite bipartite.

\paragraph{Runtime}
Finally, we analyze the runtime.
Since there are $\Lmax$ iterations and each iteration takes $O(\Lmax)$ rounds to propagate the labels, the total runtime
is $O(\Lmax^2)$.
\end{proof}

\subsection{Runtime Analysis in \cref{thm:augment_and_shrink}}

The following lemma summarizes the runtime analysis to \cref{thm:augment_and_shrink}.

\begin{lemma}\label{lemma:main-alg-runtime}
\cref{alg:main} finishes in $\poly(1/\lambda,\log n)$ rounds.
\end{lemma}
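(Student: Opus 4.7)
The plan is to bound each of the four steps of \cref{alg:main} individually and sum them. Steps 2--4 fall out of earlier lemmas with only routine work, so the bulk of the effort will go into bounding the iteration count of the Step 1 repeat loop.

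For Step 1, each iteration will cost $\poly(1/\lambda)$ rounds for the call to \VertexAlgPhase by \cref{lemma:vertex-weighted-FMU}, plus an $O(\log^3 n/\epsilon)$-round aggregation on the weak-diameter broadcast tree of \cref{thm:diameter} to count $|\mathcal{P}|$ and evaluate the termination test. To bound the number of iterations, I would note that every non-terminating iteration appends at least $\lambda|M|/(C_{\max}(L_{\max}+1))$ vertex-disjoint augmenting paths to $\Psi$ which are immediately deleted from $G$, so paths collected across iterations remain mutually vertex-disjoint in $G/\Omega$. Since each such path consumes two free vertices, the cumulative number of paths produced across all iterations is at most $|F(G/\Omega)|/2$. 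Combining this with the framework-level bound $|F(G)| = O(|M|)$ (available once $M$ is a constant fraction of $|M^*|$) yields an $O(C_{\max}(L_{\max}+1)/\lambda) = \poly(1/\lambda)$ bound on the number of iterations.

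The remaining steps are short. Step 2 is local bookkeeping of $M(V')\cup M(V_A)$ and $F(V')\cup F(V_A)$, costing $O(1)$ rounds. Step 3 invokes \cref{lemma:forming-blossoms} inside each structure $S_\alpha$; because the structures are vertex-disjoint and each has weighted size at most $C_{\max}$, all of them can be processed in parallel in $\poly(1/\lambda)$ rounds. Step 4 runs the Bellman--Ford-style labeling of \cref{lemma:simulate-bfs} in $\poly(L_{\max}) = \poly(1/\lambda)$ rounds, and then aggregates the $\floor{L_{\max}/2}$ pairs $(|E_i|,|F_i|)$ over the weak-diameter broadcast tree in another $\poly(1/\lambda, \log n)$ rounds to determine $i^*$. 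Summing across the four steps delivers the claimed $\poly(1/\lambda, \log n)$ bound.

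The main obstacle will be the Step 1 iteration count: the naive disjointness argument alone only gives $O(|F(G)|\cdot C_{\max}(L_{\max}+1)/(\lambda|M|))$ iterations, which is $\poly(1/\lambda)$ only when $|F(G)|=O(|M|)$. Handling the very first calls of the scaling framework, where $|M|$ may be small and the termination threshold nearly vanishes, will likely require exploiting the fact that all short augmenting paths remaining after one \VertexAlgPhase call lie inside the small returned sets $V'\cup V_A$, so $|\mathcal{P}|$ decays quickly across subsequent iterations even without an a priori lower bound on $|M|$.
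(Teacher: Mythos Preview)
Your treatment of Steps~2--4 is correct and matches the paper. The genuine gap is exactly where you flag it: the Step~1 iteration count.

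The assertion ``$|F(G)|=O(|M|)$ once $M$ is a constant fraction of $|M^*|$'' is simply false. Take a star on $n$ vertices: $|M^*|=1$, so as soon as $|M|=1$ you are already at optimum, yet $|F(G)|=n-2$. Your bound of $|F(G/\Omega)|\cdot \Cmax(\Lmax{+}1)/(\lambda|M|)$ iterations is then $\Theta(n/\lambda)$, not $\poly(1/\lambda)$. Your fallback (that $|\mathcal{P}|$ must decay because surviving short paths live in $V'\cup V_A$) also does not work as written: in the repeat loop only $\mathcal{P}$ is deleted from $G$, while $V'$ and $V_A$ are discarded and recomputed from scratch in the next call, so a short augmenting path covered by $V'\cup V_A$ in one iteration is perfectly eligible to be returned by \VertexAlgPhase in the next.

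The paper sidesteps the whole issue by counting \emph{matched edges} consumed rather than free vertices. The quantity $|M|$ is computed once at Line~1 and stays fixed throughout the loop. The key observation is that from the second repeat-iteration onward, every augmenting path returned by \VertexAlgPhase contains at least one edge of $M$ (a path with zero matched edges of $M$ is a single unmatched edge between two weight-$1$ free vertices, and those are handled in the first iteration). Since the paths collected across all iterations are vertex-disjoint in $G$, at most $|M|$ of them can be produced after the first iteration; every non-terminating iteration produces more than $\lambda|M|/(\Cmax(\Lmax{+}1))$ of them, so the loop runs at most
\[
2 \;+\; \frac{|M|}{\lambda|M|/(\Cmax(\Lmax{+}1))} \;=\; 2 + \frac{\Cmax(\Lmax{+}1)}{\lambda} \;=\; \poly(1/\lambda)
\]
times, with no dependence on $|F(G)|$ whatsoever.
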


\begin{proof}
In Step 1, each iteration in a repeated loop involves one execution of \VertexAlgPhase and $O((1/\epsilon)\log^3 n)$ additional rounds for counting augmenting paths.
Moreover, there can be at most  $$2+|M| \frac{1}{\lambda\cdot|M|/(\Cmax(\Lmax+1))} =\poly(1/\lambda)$$ iterations.
The reason is,
starting from the second iteration, all augmenting paths found include at least one matched edge, so at least $\lambda |M|/\Cmax$ matched edges will be removed from the graph.
By \cref{lemma:vertex-weighted-FMU}, each \VertexAlgPhase takes $\poly(1/\lambda, \log n)$ rounds (see also~\cref{lemma:simulate-fmu}).
Thus, Step 1 takes $\poly(1/\lambda, \log n)$ rounds.

Step 2 takes $O(1)$ rounds.

Step 3 takes up to $\poly(1/\lambda)$ rounds by \cref{lemma:forming-blossoms}.

Step 4 has two parts, performing a Bellman-Ford style algorithm takes $\poly(1/\lambda)$ rounds by \cref{lemma:simulate-bfs}. In the second part, the algorithm computes the size of each set $|E_i|$ and $|F_i|$, which takes $O(\log n + 1/\lambda)$ rounds.
Therefore, the total runtime to \cref{alg:main} is $\poly(1/\lambda) + O(\log n) = \poly(1/\lambda, \log n)$ as desired.
\end{proof}

\subsection{\Cref{alg:main} in CREW PRAM model}\label{sec:pram-approx-primal}

We simulate the previous \congest implementation of \Cref{alg:main}, so it suffices to show that all local decisions can be done efficiently (with an extra factor of $O(\Cmax^2+\log n)$ parallel time with $O(m)$ processors).
We remark that the \crewpram implementation does not require the assumptions of the $O((1/\epsilon)\log^3 n)$ weak diameter.

\begin{itemize}
	\item Obtaining the matching size $|M|$ and the set size of augmenting paths $|\mathcal{P}|$ can be done in $O(\log n)$ time via the standard parallel prefix sum operation. 
	\item In Step 1, each \VertexAlgPhase takes $\poly(1/\lambda, \log n)$ time using $O(m)$  processors by \cref{lemma:vertex-weighted-FMU} (see also \cref{lemma:simulate-fmu-pram}).
	\item In Step 2, marking matched edges and free vertices takes $O(1)$ time.
	\item In Step 3, identifying new blossoms within each structure $S_\alpha$
	can be done sequentially in $O(\Cmax^2)$ time~\cite{GT91} as each structure contains at most $O(\Cmax^2)$ edges.
	\item In Step 4, the parallel implementation to \Cref{lemma:simulate-bfs} requires relaxing the distances of a matched arc in parallel.
	Each Bellman-Ford step of simultaneously relaxing a set of matched arcs takes $O(\log n)$ time with $O(m)$  processors.
\end{itemize}
Therefore, \Cref{thm:main-parallel} follows immediately.

\section*{Acknowledgements}

The authors would like to thank Aaron Bernstein and Aditi Dudeja for their invaluable discussions on solving $(1-\epsilon)$-MWM in the semi-streaming model, and, in particular, for pointing out the reduction of \cite{GuptaP13} to us. The authors also thank the anonymous reviewers for their helpful comments.

{\small
\bibliographystyle{alpha}
\bibliography{mwm}
}

\clearpage
\appendix



\section{Data Structures for Executing Algorithms with Blossoms in {\normalfont\textsf{CONGEST}}}\label{appendix:data-structures}

This section discusses the data structures stored on each node $v\in V$ over the network $G$, as well as the implementation details for operations in a blossom.

\paragraph{Vertices and Edges}
Each vertex has its own ID, the list of its neighbors' ID, and its own $y$-value.
The information of an edge $uv\in E$ is accessible (and stored) via both endpoints $u$ and $v$.
For example, if an edge $uv$ is a matched edge, then both $u$ and $v$ \emph{knows} that $uv\in M$; similarly if $uv$ is an unmatched edge but belongs to some augmenting path $P$, then both $u$ and $v$ knows that $uv\in P$.
In particular, if an arc (or an edge) is labeled with any information, this information is accessible (and stored) in both endpoints. $M$ and $\Delta w(\cdot)$ are stored in this way. The $y$-values are stored on each vertex.

\paragraph{Markers}
A \emph{marker} is a string with an ID stored on a vertex that certifies that this vertex belongs to some object with the corresponding ID.
These objects include the matching $M$, the sets $\hat{V}_{in}$ and $\hat{V}_{out}$, the structures $S_\alpha\in\mathcal{S}$, and the blossoms $B\in\Omega$.
For the sets $V_{in}$ and $V_{out}$, each vertex in the set has a marker with the ID of the set. We now discuss about the blossoms.

\paragraph{Blossoms}
We only need to maintain full blossoms.
Each full blossom $(B, E_B)$ has a unique ID and every vertex $v\in B$ has a marker for the blossom.
Recall from \cref{sec:scaling} that 
for each non-trivial blossom there is an odd number of $2k+1$ full sub-blossoms $B_0, B_1, \ldots, B_{2k}$
and edges $e_0, e_1, \ldots, e_{2k}$ such that 
$B=\bigcup_i B_i$ and $e_i\in B_i\times B_{(i+1)\bmod (2k+1)}$.
Moreover, these edges are alternating: $e_0, e_2, \ldots, e_{2k}\notin M$ but $e_1, e_3, \ldots, e_{2k-1}\in M$.
The \emph{base} $b$ of a blossom is the vertex in $B_0$ that is either a free vertex or has an incident matched edge that does not belong to $E_B$.
A root blossom is a blossom that is not a subset of any other blossom from $\Omega$.
In our algorithm, any blossom maintained in our algorithm never exceeds $\Bmax$ vertices (see~\Cref{thm:augment_and_shrink}\ref{enum:main-small-blossoms}).

\subsection{Lazy Implementation} \label{sec:lazy-implementation}
We use a lazy approach to implement the changes to a blossom.
That is, the base $b$ of a root blossom $B$ controls everything within $B$ and \emph{all non-root blossoms} in $B$.
In particular, $b$ stores the entire induced subgraph $E(B)$, the matched edge sets $E(B)\cap M$, and all dual variables $z(B')$ where $B'\in \Omega$ and $B'\subseteq B$.
A communication tree $T_B$ within $B$ (rooted at $b$) is also used for broadcasting the messages.
Each non-base vertex $v\in B$, $v\neq b$ stores the parent of $v$ in $T_B$ that helps for communication.

Now we show that every operation needed from the algorithm can be implemented in $O(\Bmax^2)$ rounds:

\paragraph{Forming a Blossom}
A blossom in our algorithm can only be formed via Step 3 of \cref{alg:main}.
By \cref{lemma:forming-blossoms} the blossoms are computed and then broadcast from a structure $S_\alpha$.
After all blossoms are identified within $S_\alpha$, the free vertex $\alpha$ sends the sets of blossoms to each root blossom's base.
Finally, an additional $O(\Bmax^2)$ rounds are used for gathering all information about each root blossom within the structure.

\paragraph{Blossom Dissolution}
A blossom can only be dissolved in
 Step 3 of \cref{fig:edmondssearch}.
 Whenever a root blossom dissolves (according to the criteria $z(B)=0$), the base $b$ computes locally and decides which blossoms remain to be root blossoms.
 Then, in $O(\Bmax^2)$ rounds the base $b$ passes the information ($z$-values) to each new root blossoms.

\paragraph{Traversal and DFS}
Another challenge is to support DFS operations on $G/\Omega$.
In particular, a \textsc{Next-Incident-Edge}$(B, uv\text{ or }\perp)$ should return the \emph{next} edge incident to the blossom $B$, where $uv$ is the previously considered incident edge. (If $\perp$ is the argument then the first edge is returned.)
Since we can impose a total order to these edges, this operation can be done in $O(\diamB)$ rounds
by (1) broadcast to every node in $B$ saying that $uv$ is the current edge,
(2) every vertex returns the lexicographically smallest ``outgoing'' edge that is greater than $uv$ or $\perp$ if such an edge does not exist, and
(3) sending the result back to $b$ via an \textsc{Aggregation}.

\paragraph{Augmentation and Change Base}
After obtaining the augmenting paths in $G/\Omega$, the corresponding augmenting path can be computed locally at each blossom's base.
After the augmentation, the old base $b$ informs the new base $b'$, and using $O(\Bmax^2)$ rounds to send all information that $b$ currently possesses.

\section{The Analysis of the Modified Scaling Algorithm}\label{apx:DPanalysis}

The algorithm consists of $1 + \log_{2} W$ scales, where each scale consists of $O(1/\epsilon)$ iterations. We maintain the variables $M, y, z, \Omega,$ and $\Delta w$ so that they satisfy the relaxed complementary slackness conditions modified from \cite{DuanP14, CS22} at the end of each iteration of each scale $i$:

\begin{lemma}\label{lem:dif}
    Let $G_1 = (V,E,\hat{w})$ and $G_2 = (V,E,w)$. Let $M_1^{*}$ and $M_2^{*}$ be optimal matchings in $G_1$ and in $G_2$. Suppose that $\sum_{e \in E}|\hat{w}(e) - w(e)| \leq \gamma \hat{w}(M_1^{*})$ for some $\gamma$. We have $\hat{w}(M_2^{*}) \geq (1-2\gamma)\cdot \hat{w}(M_1^{*})$.
    \end{lemma}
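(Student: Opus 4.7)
The plan is to bound $\hat{w}(M_2^*)$ by comparing it to $w(M_2^*)$ and then to $w(M_1^*)$, using the bound on $\sum_e |\hat{w}(e) - w(e)|$ at each switch between the two weight functions. The total weight change along any matching is at most $\sum_{e \in E} |\hat{w}(e) - w(e)| \le \gamma \hat w(M_1^*)$, since a matching is a subset of $E$ and each edge contributes at most its weight difference.

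First I would write
\[
\hat{w}(M_2^*) \;\ge\; w(M_2^*) \;-\; \sum_{e \in M_2^*}|\hat{w}(e) - w(e)| \;\ge\; w(M_2^*) \;-\; \gamma\, \hat{w}(M_1^*),
\]
which swaps from $\hat w$ to $w$ on the matching $M_2^*$. Next, since $M_2^*$ is an optimal matching under $w$, we have $w(M_2^*) \ge w(M_1^*)$; and swapping back the other direction on $M_1^*$ gives
\[
w(M_1^*) \;\ge\; \hat{w}(M_1^*) \;-\; \sum_{e \in M_1^*}|\hat{w}(e) - w(e)| \;\ge\; (1-\gamma)\,\hat{w}(M_1^*).
\]
Chaining the three inequalities yields $\hat{w}(M_2^*) \ge (1-\gamma)\hat w(M_1^*) - \gamma\hat w(M_1^*) = (1-2\gamma)\hat w(M_1^*)$, which is the claim.

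There is no real obstacle here; the only subtlety is to be careful that the bound $\sum_{e \in E}|\hat{w}(e)-w(e)| \le \gamma \hat{w}(M_1^*)$ applies to \emph{any} matching (in particular $M_1^*$ and $M_2^*$) because we can restrict the sum to that matching's edges without increasing it. The proof uses only the optimality of $M_2^*$ under $w$ and two triangle-style inequalities, so it should fit in a few lines.
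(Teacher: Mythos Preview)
Your proof is correct and follows essentially the same approach as the paper: swap $\hat w$ to $w$ on $M_2^*$ (losing at most $\gamma\hat w(M_1^*)$), use optimality of $M_2^*$ under $w$ to pass to $w(M_1^*)$, then swap $w$ back to $\hat w$ on $M_1^*$ (losing another $\gamma\hat w(M_1^*)$). The paper's proof is the same chain of inequalities written as a single aligned display.
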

\begin{proof}
We have
\begin{align*}
    \hat{w}(M_2^{*}) &= \sum_{e \in M_2^{*}} \hat{w}(e) \\
    &= \sum_{e\in M_2^{*}} (w(e) - (w(e) - \hat{w}(e) )) \\
    &\geq \sum_{e\in M_2^{*}} (w(e) - |w(e) - \hat{w}(e) | )\\
    &= w(M_2^{*}) - \gamma \hat{w}(M^{*}_1) \\
    &\geq w(M^{*}_1) - \gamma \hat{w}(M^{*}_1)  & \mbox{$M^{*}_2$ optimal w.r.t.~$w$}\\ 
    &\geq \hat{w}(M^{*}_1) - 2 \gamma \hat{w}(M^{*}_1) = (1-2\gamma)\cdot \hat{w}(M^{*}_1)  &&&& \qedhere
\end{align*}
\end{proof}

\begin{lemma}\label{lem:edge_weight_lower_bound} If $e$ is of type $j$, it must be the case that $w(e) \geq W/ 2^{j+1} + \delta_j$. At any point, if $e$ is a matched edge or a blossom edge and satisfies the near tightness condition then $yz(e) \leq (1+4\epsilon')w(e)$. If $e$ satisfies the near domination condition at the end of scale $L$, then $yz(e) \geq  w(e) - \epsilon'$.
\end{lemma}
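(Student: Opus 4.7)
The plan is to handle the three claims in sequence, feeding each into the next.

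For the first claim, I would trace back to the most recent event that stamped $e$ as type $j$: either $e$ became a matched edge via augmentation during scale $j$, or $e$ became a newly-added blossom edge during blossom formation at scale $j$. When $e$ was unmatched immediately before the event (so it was eligible as an unmatched edge in the sense of \Cref{def:eligible}(2)), we have $yz(e)=w_j(e)-\delta_j$; combined with $y(u), y(v)\ge \tau$ from the free-vertex dual invariant (item 5 of \Cref{prop:RCS}) together with the pointwise bound $\tau\ge W/2^{j+2}$ at every augmentation step of scale $j<L$ (this follows by unrolling the schedule: $\tau$ begins scale $j\ge 1$ at $W/2^{j+1}$ after the cross-scale adjustment and decreases by $\delta_j/2$ per iteration, reaching the target $W/2^{j+2}-\delta_j/2$ only \emph{after} the last dual adjustment, so at every augmentation $\tau\ge W/2^{j+2}$), this gives $yz(e)\ge 2\tau\ge W/2^{j+1}$, whence $w(e)\ge w_j(e)\ge W/2^{j+1}+\delta_j$. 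The remaining subcase is that $e$ was matched with earlier type $j'<j$ and then became a blossom edge at scale $j$; here I induct on $j'$, using $W/2^{j'+1}\ge W/2^{j+1}$ and $\delta_{j'}\ge \delta_j$. Since weight modifiers only grow (item 6 of \Cref{prop:RCS}), $w(e)$ never decreases after the event, so the bound persists at present. The edge case $j=L$ (where $\tau$ can drop as low as $\delta_L/2$ during augmentations) is handled by the trivial inequality $w(e)\ge \hat w(e)\ge 1 > \tfrac12+\epsilon' = W/2^{L+1}+\delta_L$, valid for $\epsilon'<\tfrac12$.

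For the second claim, near tightness directly yields $yz(e)\le w_i(e)+2(\delta_j-\delta_i)\le w(e)+2\delta_j$. Using the first claim, $w(e)\ge W/2^{j+1}$, so $2\delta_j = 2\epsilon' W/2^j = 4\epsilon'\cdot W/2^{j+1} \le 4\epsilon'\cdot w(e)$, giving $yz(e)\le (1+4\epsilon') w(e)$.

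For the third claim, the key observation is that at scale $L$ we have $\delta_L=\epsilon'$, since $W=2^L$ and $\epsilon'$ is a power of two, so both $1/\epsilon'$ and $W/\epsilon'$ are integers. Original weights $\hat w(e)$ are integers, hence multiples of $\delta_L$; every addition to $\Delta w(e)$ is of the form $\delta_i$ for some $i\le L$, each of which is an integer multiple of $\delta_L$. Therefore $w(e)=\hat w(e)+\Delta w(e)$ is a multiple of $\delta_L$, which forces $w_L(e)=w(e)$. Near domination at the end of scale $L$ then gives $yz(e)\ge w_L(e)-\delta_L=w(e)-\epsilon'$.

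The main obstacle I anticipate is in the first claim: precisely pinning down the minimum value of $\tau$ at augmentation/blossom-forming steps under the scaling schedule, and threading the induction through the matched-to-blossom transitions. Once the first claim is in place, the other two reduce to direct algebra.
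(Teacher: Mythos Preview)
Your proposal is correct and follows essentially the same three-part argument as the paper's proof: lower-bound $w(e)$ via the $y$-value floor at scale $j$, plug that into near tightness for the $(1+4\epsilon')$ bound, and observe $w_L(e)=w(e)$ since $\delta_L=\epsilon'$. You are in fact more careful than the paper in two places: you explicitly handle the matched-edge-becoming-blossom-edge subcase by induction on $j'$ (the paper simply asserts ``it must have become eligible at scale $j$ while unmatched''), and you patch the $j=L$ corner case via the trivial integer-weight bound, whereas the paper's blanket claim that the minimum $y$-value at scale $j$ is at least $W/2^{j+2}$ does not literally hold at scale $L$ where $\tau$ drops toward zero.
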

\begin{proof}
 If $e$ is of type $j$, it must have become eligible at scale $j$ while unmatched. An unmatched edge can only become eligible in scale $j$ if $w_j(e) - \delta_j \geq yz(e)$. The minimum $y$-value over the vertices at scale $j$ is at least $W/2^{j+2}$. Therefore, $w(e)\geq w_{j}(e) \geq yz(e) + \delta_j \geq 2\cdot (W/2^{j+2}) + \delta_j = W/2^{j+1} + \delta_j$. 
 
 If $e$ is of type $j$ and satisfies the near tightness condition, then we must have $yz(e) \leq w_j(e) + 2\delta_j \leq w(e) + 2\delta_j \leq w(e) + 2\epsilon' W / 2^{j} \leq w(e) + 4\epsilon' w(e) = (1+4\epsilon')w(e)$. 
 
 If $e$ satisfies the near domination condition at the end of scale $L$, then $yz(e) \geq w_L(e) - \delta_L = w(e) - \epsilon'$.
\end{proof}

\begin{observation}\label{obs:aug_gone}
    If we augment along an augmenting path $P$ in $G_{elig}/\Omega$, all the edges of $P$ will become ineligible.
    \end{observation}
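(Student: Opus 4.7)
\bigskip

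\noindent\textbf{Proof proposal for \Cref{obs:aug_gone}.}

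The plan is a direct case analysis on each edge $e \in P$, using the three-case definition of eligibility (\Cref{def:eligible}) together with the observation that augmenting along $P$ only flips the membership of edges in $M$ — it does not change any $y$- or $z$-value, nor does it change $w_i(e)$ or $\Omega$. In particular, the three quantities that determine eligibility, namely $yz(e)$, $w_i(e)$, and whether $e\in E_B$ for some $B\in\Omega$, are unchanged except for the matched/unmatched status of the edges on $P$.

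First I would rule out case~(1) of \Cref{def:eligible} for every $e\in P$. Since $P$ is a path in the contracted eligible graph $G_{elig}/\Omega$, both endpoints of $e$ lie in distinct root blossoms of $\Omega$ (or are not in any active blossom at all); consequently $e\notin E_B$ for any $B\in\Omega$, both before and after the augmentation. So case~(1) cannot contribute to eligibility of $e$ after augmenting along $P$.

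Next, I would handle the two sub-cases based on $e$'s status before the augmentation. If $e$ was matched in $M$ (and thus eligible by case~(3) as some type $j$ edge with $yz(e) = w_i(e) + 2(\delta_j-\delta_i)$), then after the augmentation $e$ becomes unmatched, so only case~(2) could make $e$ eligible. But case~(2) requires $yz(e) = w_i(e) - \delta_i$, whereas $yz(e)$ still equals $w_i(e) + 2(\delta_j-\delta_i)$; since $j \leq i$, we have $2(\delta_j - \delta_i) \geq 0 > -\delta_i$, so the two values disagree and $e$ is ineligible. Symmetrically, if $e$ was unmatched in $M$ (and thus eligible by case~(2) with $yz(e)=w_i(e)-\delta_i$), then after the augmentation $e$ becomes matched with new type $j=i$, so only case~(3) could make $e$ eligible. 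Case~(3) would require $yz(e) = w_i(e) + 2(\delta_i - \delta_i) = w_i(e)$, which contradicts $yz(e) = w_i(e) - \delta_i < w_i(e)$. Hence $e$ is ineligible in this sub-case as well.

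I do not anticipate a main obstacle here; the only subtle point is being careful that the augmentation does not alter the type $j$ of the edge whose matched-status was flipped in the ``unmatched-to-matched'' direction (its new type is the current scale $i$, by definition of type), and that the previous type of a ``matched-to-unmatched'' edge satisfies $j\leq i$, so that the offset $2(\delta_j-\delta_i)$ is nonnegative. Both facts are immediate from the scaling framework in \Cref{fig:edmondssearch}, which only introduces matched edges during the current scale.
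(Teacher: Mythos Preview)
The paper states \Cref{obs:aug_gone} without proof, so there is nothing to compare against; your argument is the natural direct verification and is correct. The case split on matched/unmatched status together with the observations that (i) edges of $P$ in $G_{elig}/\Omega$ cannot lie in any $E_B$, (ii) $j\le i$ forces $2(\delta_j-\delta_i)\ge 0>-\delta_i$, and (iii) a newly matched edge acquires type $i$, is exactly what one needs. The wording in your final paragraph is slightly garbled (``does not alter the type $j$'' followed by ``its new type is $i$''), but the intended meaning is clear and the reasoning is sound.
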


\begin{proof}[Proof of \Cref{lem:approximateMWM}.]
Let $M_1^{*}$ and $M_2^{*}$ be optimal matching with respect to $\hat{w}$ and $w$. First, we have
\begin{allowdisplaybreaks}
\begin{align*}
    w(M) &= \sum_{e \in M} w(e) \\
    &\geq \sum_{e \in M}(1+4\epsilon)^{-1} \cdot yz(e)  \hspace{57mm} \mbox{near tightness \& \cref{lem:edge_weight_lower_bound}} \\
    &= (1+4\epsilon')^{-1} \cdot \left( \sum_{u \in V} y(u) + \sum_{B \in \Omega} \frac{|B| - 1}{2} \cdot z(B) - \sum_{u \in \hat{F}}y(u)\right) \\
    &\geq (1+4\epsilon')^{-1} \left(\sum_{u \in V(M_2^{*})} y(u) + \sum_{B \in \Omega} (|M_2^{*} \cap E(B)|) \cdot z(B) \right) - (1+4\epsilon')^{-1}(\epsilon' \cdot \hat{w}(M_1^{*})) \\
    &\geq (1+4\epsilon')^{-1} \cdot \left( \sum_{e\in M_2^{*}} yz(e) \right) - 2\epsilon' \cdot \hat{w}(M_1^{*}) \\
    &\geq (1+4\epsilon')^{-1} \cdot  \left(\sum_{e \in M_{2}^{*}} w(e)  -  \epsilon' \right) - 2\epsilon' \cdot \hat{w}(M_1^{*}) \\
        &= (1+4\epsilon')^{-1} \cdot  \left(\sum_{e \in M_{2}^{*}}( \hat{w}(e) -  \epsilon' + (w(e) - \hat{w}(e)))   \right) - 2\epsilon' \cdot \hat{w}(M_1^{*})\\
        &\geq (1+4\epsilon')^{-1} \cdot  \left(\sum_{e \in M_{2}^{*}} (1-\epsilon')\cdot \hat{w}(e) - \sum_{e \in M_{2}^{*}} |(w(e) - \hat{w}(e))| \right) - 2\epsilon' \cdot \hat{w}(M_1^{*}) \\
        &\hspace{130.5mm} \hat{w}(e) \geq 1\\
        &\geq (1+4\epsilon')^{-1} \cdot  \left(\sum_{e \in M_{2}^{*}} ((1-\epsilon')\cdot \hat{w}(e)) - \epsilon'\hat{w}(M^{*}_1) \right) - 2\epsilon' \cdot \hat{w}(M_1^{*})\\
     &\geq (1+4\epsilon')^{-1}\cdot(1-\epsilon') \cdot (1-2\epsilon')  \hat{w}(M^{*}_1) - 4\epsilon' \cdot \hat{w}(M_1^{*}) \hspace{25mm} \mbox{By \cref{lem:dif}} \\
     &= (1-\Theta(\epsilon')) \hat{w}(M^{*}_1) = (1-\epsilon) \hat{w}(M^{*}_1)\qedhere
\end{align*}
\end{allowdisplaybreaks}
\end{proof} 

\begin{lemma}\label{lem:matching_size_weight} Let $M^{*}$ be an optimal matching in $G$ w.r.t. $\hat{w}$. At any point of the algorithm, $\hat{w}(M^{*}) \geq |M| \cdot (W / 2^{i+2}) = |M| \cdot (\delta_i / 4\epsilon)$. \end{lemma}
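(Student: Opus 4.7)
The plan is to lower-bound $\hat{w}(M)$ (and therefore $\hat{w}(M^{*}) \ge \hat{w}(M)$) by combining a per-edge effective-weight bound on matched edges with the bounded weight change invariant, and then to convert between effective weights $w$ and actual weights $\hat{w}$.

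First I would use the type structure of matched edges. At any point during scale $i$, every matched edge is of type $j$ for some $j \le i$, since by definition a type $j$ edge is one last matched in scale $j$ and only scales $0, 1, \ldots, i$ have ever been entered. Then \cref{lem:edge_weight_lower_bound} gives $w(e) \ge W/2^{j+1} + \delta_j$ for each such edge. The right-hand side is monotonically decreasing in $j$, so uniformly
\[
w(e) \;\ge\; W/2^{i+1} + \delta_i \;\ge\; W/2^{i+1}.
\]
Summing over $e \in M$ yields $w(M) \;\ge\; |M|\cdot W/2^{i+1}$.

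Next I would translate this bound from $w$ back to $\hat{w}$ using \Cref{prop:RCS}\ref{item:boundedw}. Because $w(e)=\hat{w}(e)+\Delta w(e)$ and every $\Delta w(e)\ge 0$,
\[
\hat{w}(M) \;=\; w(M) - \sum_{e\in M}\Delta w(e) \;\ge\; w(M) - \sum_{e\in E}\Delta w(e) \;\ge\; w(M) - \epsilon'\cdot \hat{w}(M^{*}).
\]
Combining with $\hat{w}(M^{*}) \ge \hat{w}(M)$ and rearranging,
\[
(1+\epsilon')\cdot \hat{w}(M^{*}) \;\ge\; w(M) \;\ge\; |M|\cdot \frac{W}{2^{i+1}},
\]
so $\hat{w}(M^{*}) \ge |M|\cdot W/(2^{i+1}(1+\epsilon'))$. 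Since $\epsilon'$ is a small constant (certainly $\epsilon'\le 1$), this implies the desired $\hat{w}(M^{*}) \ge |M|\cdot W/2^{i+2}$. The equivalent formulation $|M|\cdot \delta_i/(4\epsilon')$ follows from the definition $\delta_i = \epsilon' W/2^i$.

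I do not anticipate a real obstacle: the only nontrivial ingredient is \cref{lem:edge_weight_lower_bound}, whose hypothesis (near domination together with minimum free-vertex dual $\ge W/2^{j+2}$ at the moment an unmatched edge becomes eligible in scale $j$) has already been established. The one subtlety worth flagging is that the bound on $\sum|\Delta w(e)|$ is given over the whole edge set, so one must not try to restrict it to $e\in M$ and loses nothing by dropping the restriction before applying the inequality.
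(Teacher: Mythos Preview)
Your proposal is correct and follows essentially the same route as the paper: use \cref{lem:edge_weight_lower_bound} to get $w(e)\ge W/2^{i+1}$ for each $e\in M$, convert $w$ back to $\hat{w}$ via \Cref{prop:RCS}\ref{item:boundedw}, invoke $\hat{w}(M^{*})\ge \hat{w}(M)$, and divide through by $1+\epsilon'$. Your write-up is in fact a bit more careful than the paper's (spelling out the monotonicity in $j$ and the type-$j\le i$ observation), and you correctly note that the final equality should read $\delta_i/(4\epsilon')$ rather than $\delta_i/(4\epsilon)$.
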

\begin{proof}
We have:
\begin{align*}
\hat{w}(M^{*}) &\geq \sum_{e \in M} \hat{w}(e)	 \\
&\geq \left(\sum_{e \in M} w(e)\right) -  \epsilon' \hat{w}(M^{*}) && \mbox{By \cref{prop:RCS}(\ref{item:boundedw})} \\
&\geq |M| \cdot (W / 2^{i+1}) - \epsilon' \hat{w}(M^{*}) && \mbox{By \cref{lem:edge_weight_lower_bound}}
\end{align*}
Therefore, $\hat{w}(M^{*}) \geq |M| \cdot (W/2^{i+1}) / (1+\epsilon') \geq W/2^{i+2}$.
\end{proof}

\begin{lemma}\label{lem:boundedchange} \cref{prop:RCS}(\ref{RCS:5})(\ref{item:boundedw}) holds throughout the algorithm. 
\end{lemma}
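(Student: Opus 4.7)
The plan is to verify properties \ref{RCS:5} and \ref{item:boundedw} of \Cref{prop:RCS} by induction on iterations, tracking how each step of the algorithm (see \Cref{fig:edmondssearch}) affects the two quantities $\sum_e \Delta w(e)$ and $\Phi := \sum_{v \in F(G)}(y(v) - \tau)$, and bounding the per-iteration contribution using \Cref{thm:augment_and_shrink}(3) and \Cref{lem:matching_size_weight}.

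For property \ref{item:boundedw}: nonnegativity of $\Delta w(e)$ is immediate because the only modification the algorithm makes is $\Delta w(e) \gets \Delta w(e) + \delta_i$, with $\Delta w(e)$ initialized to $0$. For the sum, each execution of the Augmentation and Blossom Shrinking step adds exactly $\delta_i$ to $\Delta w(e)$ for each $e \in M'$, and by \Cref{thm:augment_and_shrink}(3), $|M'| = O(\lambda |M|)$. \Cref{lem:matching_size_weight} gives $\delta_i |M| \le 4\epsilon' \hat{w}(M^*)$, so each iteration contributes $O(\lambda \epsilon' \hat{w}(M^*))$ to $\sum_e \Delta w(e)$. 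With $O(1/\epsilon')$ iterations per scale and $L+1$ scales, the cumulative contribution is $O((L+1)\lambda \hat{w}(M^*))$. Plugging in $\lambda = \epsilon'/(12(L+1))$ yields a bound $\le \epsilon' \hat{w}(M^*)$, as required.

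For property \ref{RCS:5}, the parity invariant is easy: every change to $y(v)$ for a free $v$ is in multiples of $\delta_i/2$, since isolation adds $\delta_i = 2\cdot(\delta_i/2)$, dual adjustment changes by $\pm \delta_i/2$, and scale transitions add $\delta_{i+1} = \delta_i/2$ (which after updating the granularity becomes $2\cdot(\delta_{i+1}/2)$). Crucially, every free vertex present in $G_{elig}/\Omega$ lies in $V_{out}(F(G_{elig}/\Omega))$ (via the empty alternating path from itself), and the primal blocking conditions guarantee no free vertex is in $V_{in}$. Therefore, every free vertex receives the same $-\delta_i/2$ shift during dual adjustment and hence their parities stay synchronized. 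To bound $\Phi$, observe that in a single iteration: (i) isolation adds exactly $\delta_i$ to $y(\dot f)$ for each $f \in F'$, contributing $+\delta_i|F'|$ to $\Phi$; (ii) matched vertices at the endpoints of augmented paths leave $F(G)$, which can only decrease $\Phi$; (iii) in dual adjustment both $\tau$ and $y(v)$ decrease by $\delta_i/2$ for every free $v$ (including isolated ones, since they remain in $V_{out}$ as noted), leaving $\Phi$ unchanged; (iv) scale transitions add $\delta_{i+1}$ uniformly to $\tau$ and every $y(v)$, leaving $\Phi$ unchanged. Thus each iteration increases $\Phi$ by at most $\delta_i |F'| = O(\lambda \delta_i |M|)$, and the same telescoping calculation as for property \ref{item:boundedw} yields $\Phi \le \epsilon'\hat{w}(M^*)$. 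The inequality $y(v) \ge \tau$ for every vertex is then immediate inductively from this per-iteration accounting.

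The main obstacle is verifying the clean per-iteration change of $\Phi$, because naively one might worry that an isolated free vertex $f$ (whose incident eligible edges have been ``turned off'' by adding $\delta_i$ to $y(f)$) stops being treated as outer by later dual adjustments, which would cause $y(f) - \tau$ to drift upward by $\delta_i/2$ per iteration. The key observation resolving this is that isolation does not remove $f$ from the vertex set of $G_{elig}/\Omega$; since $f$ is still a free vertex, it automatically belongs to $V_{out}(F(G_{elig}/\Omega))$ through the trivial zero-length path, and so its $y$-value is decreased by $\delta_i/2$ together with $\tau$, leaving the excess $y(f)-\tau$ invariant during dual adjustment. This ensures that $\Phi$ only ever grows by the isolation contribution $\delta_i |F'|$, which is the quantity directly controlled by \Cref{thm:augment_and_shrink}(3).
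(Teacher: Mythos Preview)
Your proof is correct and follows essentially the same approach as the paper: bound the per-iteration increase of $\sum_e\Delta w(e)$ by $\delta_i|M'|$ and of $\Phi=\sum_{v\in F(G)}(y(v)-\tau)$ by $\delta_i|F'|$, convert via \Cref{lem:matching_size_weight}, and sum over the $O((L+1)/\epsilon')$ iterations using $\lambda=\epsilon'/(12(L+1))$. Your explicit verification that isolated free vertices remain in $V_{out}$ (and hence still receive the $-\delta_i/2$ shift during dual adjustment) is a point the paper's proof glosses over; the paper simply asserts that ``every free vertex in $G$ besides those in $F'$ has its $y$-value decrease by $\delta_i/2$,'' which is slightly imprecise for exactly the reason you identify, but both accountings lead to the same bound up to a factor of $2$ that is absorbed in the constant.
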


\begin{proof}
Note that there are $L = \log_{2} W + 1$ scales and there are at most $(1/\epsilon + 2)$ iterations per scale.

For each iteration at scale $i$, $\sum_{e \in E} \Delta w(e)$ increases by at most:
\begin{align*}\delta_i \cdot |M'| \leq \delta_i \lambda \cdot |M| \leq \lambda \cdot 4\epsilon \hat{w}(M^{*}) && \mbox{by \cref{lem:matching_size_weight}}\end{align*}


Since there are $L = \log_{2} W + 1$ scales, at most $(1/\epsilon + 2) \leq 3 /\epsilon$ iterations per scale, at any point of the algorithm $\sum_{e \in E} \Delta w(e)$ is upper bounded by: $$(1+\log_{2}W)(3/\epsilon) \cdot \lambda \cdot 4\epsilon \hat{w}(M^{*}) \leq  (1+\log_{2}W)\cdot 12 \cdot \lambda \cdot \hat{w}(M^{*}) \leq \epsilon' \hat{w}(M^{*})$$

Similar we have $|F'| \leq \lambda \cdot |M|$ by \cref{thm:augment_and_shrink}. During the iteration, every free vertex in $G$ besides those in $F'$ have its $y$-value decrease by $\delta_i / 2$. 


Thus, the  deviations of the $y$-values of the free vertices from their baseline, $(\sum_{f \in F(G)} y(f)) - |F(G)| \tau$, increase by at most $(\delta_i/2) \cdot \lambda \cdot |M| \leq 4\epsilon \hat{w}(M^{*})$ in an iteration.

Since there are $L = \log_{2} W + 1$ scales, at most $(1/\epsilon + 2) \leq 3 /\epsilon$ iterations per scale, at any point of the algorithm $(\sum_{f \in F(G)} y(f) - |F(G)| \tau)$ is upper bounded by: $$(1+\log_{2}W)(3/\epsilon) \cdot \lambda \cdot 4\epsilon \hat{w}(M^{*}) \leq  (1+\log_{2}W)\cdot 12 \cdot \lambda \cdot \hat{w}(M^{*}) \leq \epsilon' \hat{w}(M^{*})$$ 
\end{proof}

\begin{lemma}\label{lem:scale_begin_end}Suppose that  \cref{prop:RCS}(\ref{RCS:1})--(\ref{RCS:4}) holds at the end of scale $i$, they must also hold at the beginning of scale $i+1$. \end{lemma}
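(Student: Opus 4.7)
The plan is to analyze the effect of the three scale-transition updates ($\delta_{i+1}\leftarrow \delta_i/2$, $\tau\leftarrow \tau+\delta_{i+1}$, and $y(u)\leftarrow y(u)+\delta_{i+1}$ for every $u\in V$) on each of properties~(\ref{RCS:1})--(\ref{RCS:4}) in turn. The key observation that makes every inequality fall out is that the truncated weights satisfy $w_i(e)\le w_{i+1}(e)\le w_i(e)+\delta_{i+1}$, because $\delta_i=2\delta_{i+1}$ makes the scale-$(i+1)$ grid a refinement of the scale-$i$ grid. Since the transition uniformly bumps every $y(u)$ by $\delta_{i+1}$ and leaves $z(B)$ unchanged, for every edge $e=uv$ the value $yz(e)$ increases by exactly $2\delta_{i+1}=\delta_i$.

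For the granularity condition~(\ref{RCS:1}), at the end of scale $i$ every $z(B)$ is a multiple of $\delta_i$ (hence of $\delta_{i+1}$), and every $y(u)$ is a multiple of $\delta_i/2=\delta_{i+1}$; after adding $\delta_{i+1}$, $y(u)$ is still a multiple of $\delta_{i+1}$, which in particular is a multiple of $\delta_{i+1}/2$. The truncated weights $w_{i+1}(e)$ are multiples of $\delta_{i+1}$ by definition. For active blossoms~(\ref{RCS:2}), neither $\Omega$ nor $z(\cdot)$ changes, so this is immediate. For near domination~(\ref{RCS:3}), the old bound $yz(e)\ge w_i(e)-\delta_i$ and the increase of $yz(e)$ by $\delta_i$ give a new value $yz(e)\ge w_i(e)\ge w_{i+1}(e)-\delta_{i+1}$, using $w_{i+1}(e)\le w_i(e)+\delta_{i+1}$.

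For near tightness~(\ref{RCS:4}), take a type-$j$ edge $e$. We have $yz(e)\le w_i(e)+2(\delta_j-\delta_i)$ before the transition, so afterwards $yz(e)\le w_i(e)+2(\delta_j-\delta_i)+2\delta_{i+1}$. Using $4\delta_{i+1}=2\delta_i$ and $w_i(e)\le w_{i+1}(e)$, this simplifies to $yz(e)\le w_{i+1}(e)+2(\delta_j-\delta_{i+1})$, which is exactly the scale-$(i+1)$ near-tightness bound. Note that the type of every edge is unchanged by the transition because no edge is newly matched or newly added to a blossom here.

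The main potential obstacle is the interplay between the change in granularity and the change in $y$-values in property~(\ref{RCS:4}): one has to verify that the amount $2\delta_{i+1}$ by which $yz(e)$ grows is exactly offset by the slack created when $\delta_i$ shrinks to $\delta_{i+1}$ in the tightness budget $2(\delta_j-\delta_{i+1})$. The identity $4\delta_{i+1}=2\delta_i$ resolves this cleanly, which is why the framework sets the $y$-increment to $\delta_{i+1}$ (rather than, say, $\delta_i$) at the scale boundary. The remaining properties~(\ref{RCS:5}) and~(\ref{item:boundedw}) are global (budget-style) conditions addressed by \cref{lem:boundedchange} and are not affected by the deterministic scale transition, so they need not be re-established here.
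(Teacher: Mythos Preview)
Your proof is correct and follows essentially the same approach as the paper's own proof. Both arguments hinge on the observation that the uniform bump $y(u)\leftarrow y(u)+\delta_{i+1}$ increases every $yz(e)$ by exactly $2\delta_{i+1}=\delta_i$, and then combine this with the refinement inequality $w_i(e)\le w_{i+1}(e)\le w_i(e)+\delta_{i+1}$ to recover the scale-$(i+1)$ versions of near domination and near tightness; the paper writes the same chain of inequalities, just without isolating the refinement bound as a separate lemma-like statement up front.
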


\begin{proof}
It is easy to see that \cref{prop:RCS}(\ref{RCS:1})(\ref{RCS:2}) are automatically satisfied as $\delta_{i+1} = \delta_{i} / 2$.  

For \cref{prop:RCS}(\ref{RCS:3}) (near domination), let $yz(e)$ and $yz'(e)$ denote the $yz$-value of $e$ prior to and after the update respectively in the last line of Scale $i$. We have:
\begin{align*}
yz'(e) &= yz(e) + 2\delta_{i+1} && \mbox{the $y$-value of both endpoints increase by $\delta_{i+1}$}\\
&\geq w_i(e) - \delta_{i} + 2\delta_{i+1} && \mbox{near domination at the end of scale $i$} \\
&\geq w_{i+1}(e) - \delta_{i+1} - \delta_{i} + 2\delta_{i+1}  \\
&= w_{i+1}(e) - \delta_{i+1} 
\end{align*}

For \cref{prop:RCS}(\ref{RCS:4}) (near tightness), let $yz(e)$ and $yz'(e)$ denote the $yz$-value of $e$ prior to and after the update respectively in the last line of Scale $i$. Suppose that $e$ is of type $j$. We have:
\begin{align*}
yz'(e) &= yz(e) + 2\delta_{i+1} && \mbox{the $y$-value of both endpoints increase by $\delta_{i+1}$}\\
&\leq w_i(e) + 2(\delta_j - \delta_i) + 2\delta_{i+1} && \mbox{near tightness at the end of scale $i$} \\
&\leq w_{i+1}(e) + 2(\delta_j - \delta_i + \delta_{i+1})  \\
&\leq w_{i+1}(e) + 2(\delta_j - \delta_{i+1}) &&&& \qedhere
\end{align*}
\end{proof}

\begin{lemma}\label{lem:parity}All vertices in $\hat{R}(F(G_{elig}/\Omega))$ have the same parity as a multiple of $\delta_i / 2$. \end{lemma}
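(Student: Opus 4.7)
The plan is to trace parities along alternating paths in $G_{elig}/\Omega$, building on two auxiliary claims.

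\textbf{Claim 1 (parity transfer across non-blossom eligible edges).} For any edge $e = uv$ satisfying case 2 or case 3 of \cref{def:eligible}, $y(u)$ and $y(v)$ share the same parity as multiples of $\delta_i/2$. Indeed, by the granularity property (item 1 of \cref{prop:RCS}), every $w_i(e)$ and every $z(B)$ is a multiple of $\delta_i$. In case 2, $yz(e) = w_i(e) - \delta_i$ is a multiple of $\delta_i$; in case 3 (a type $j$ edge with $j \leq i$), $yz(e) = w_i(e) + 2(\delta_j - \delta_i)$ is also a multiple of $\delta_i$ because $2\delta_j/\delta_i = 2^{i-j+1}$ is even. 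Subtracting $\sum_{B \ni u,v} z(B)$ (also a multiple of $\delta_i$) leaves $y(u) + y(v)$ as a multiple of $\delta_i$, which means $y(u)$ and $y(v)$ have the same $\delta_i/2$-parity.

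\textbf{Claim 2 (parity uniformity inside each blossom).} Every $B \in \Omega$ has all its vertices share a single parity. Proceed by structural induction on the blossom tree. A trivial blossom is vacuous. A non-trivial blossom $B = B_0 \cup \cdots \cup B_{2k}$ is built from an odd cycle of sub-blossoms joined by edges $e_0, \ldots, e_{2k}$; these edges are not internal to any then-active root blossom at the moment of formation in Step 3 of \cref{alg:main}, so they must qualify under case 2 or case 3 of \cref{def:eligible}, and Claim 1 applies to each. With the inductive hypothesis on each $B_i$, the parity is constant around the cycle, hence throughout $B$. This uniformity is preserved afterwards: the dual adjustment modifies every vertex of a root blossom (and of every nested sub-blossom) by the same $\pm \delta_i/2$; the free-vertex update for $f \in F'$ adds the even quantity $\delta_i$ to every vertex of $\hat{f}^\Omega$; blossom dissolution leaves $y$ untouched; and the scale transition adds the same $\delta_{i+1}$ to every vertex.

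Combining the two claims closes the argument. Let $v \in \hat{R}(F(G_{elig}/\Omega))$; then $v \in \hat{v^\Omega}$ for some contracted vertex $v^\Omega$ reachable from a free vertex $f^\Omega$ via an alternating path $f^\Omega = v_0^\Omega, \ldots, v_k^\Omega = v^\Omega$ in $G_{elig}/\Omega$. Each edge on the path is eligible but not internal to any root blossom, so it satisfies case 2 or case 3; for any concrete choice of endpoint representatives in $G$, Claim 1 propagates a single parity across consecutive representatives, while Claim 2 propagates that parity to every vertex in each $\hat{v_j^\Omega}$. The representative $\dot{f}^\Omega$ of the starting free vertex shares the common free-vertex parity guaranteed by item 5 of \cref{prop:RCS}, so all of $\hat{R}(F(G_{elig}/\Omega))$ inherits this one parity. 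The main technical subtlety lies in Claim 2: both the parity around a newly-formed blossom cycle and the uniform treatment of a blossom under all later $y$-updates must be checked carefully, and both follow from how \cref{alg:main} constructs blossoms out of eligible edges and how the dual adjustment acts on contracted vertices as atomic units.
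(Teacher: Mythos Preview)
Your proof is correct and follows essentially the same route as the paper's: transfer parity across eligible non-blossom edges via the granularity of $yz(e)-w_i(e)$, establish uniform parity inside each blossom by structural induction on the blossom hierarchy, and then thread parity along alternating paths from free vertices (whose $y$-values share a common parity by \cref{prop:RCS}(\ref{RCS:5})). If anything, your version is more explicit than the paper's about why the uniform-parity-within-a-blossom invariant survives dual adjustments, the $F'$ update, dissolution, and scale transitions---the paper's argument leaves most of that implicit.
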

\begin{proof}	
Initially, all vertices have the same $y$-value. We will argue inductively that the $y$-values of all the vertices in the same blossom have the parity, as a multiple of $\delta_i/2$. Suppose that it is true before we set $\Omega \leftarrow \Omega \cup \Omega'$ during an iteration. Consider an alternating path $P = (B_0, B_1,\ldots, B_k)$ in $G_{elig}$, where each $B_j$ is either a vertex or a blossom in $\Omega$ and $B_0 \in F(G_{elig}/\Omega)$. Let $(u_0,v_1), (u_1, v_2), \ldots, (u_{k-1}, v_k)$ be the edges in $G$ that correspond to the edges in $P$. Since $yz(u_{k-1}v_k)- w_i$ must be an integer multiple of $\delta_i$ by \cref{def:eligible}, $u_{k-1}$ and $v_{k}$ must have the same parity, as a multiple of $\delta_i / 2$.  This implies all the vertices in $\hat{B_0} \cup \hat{B_1} \ldots \cup \hat{B_k}$ have the same parity. 

Consider a blossom $\{{B}_1, \ldots {B}_{l}\}$ that has been formed, there must be a vertex $B_{f} \in F(G_{elig} /\Omega)$ such for every $B_j$ there is a an alternating path from $B_{f}$ to $B_j$. Therefore, all the vertices in the new blossoms have the same parity. Hence, the inductive hypothesis holds after setting $\Omega \leftarrow \Omega \cup \Omega'$.  

Now we argue that all vertices in $\hat{R}(F(G_{elig}/\Omega))$ have the same parity as a multiple of $\delta_i / 2$. For every vertex $B \in R(F(G_{elig} /\Omega))$, there must exist some vertex $B_{f} \in F(G_{elig}/ \Omega)$ such that there is an alternating path from $B_f$ to $B$. By the reasoning of an alternating path above and the fact that all vertices in the same blossom have the same parity, the parity of the vertices in $\hat{B}_f$ and $\hat{B}$ must be the same.  Since all the $y$-values of all free vertices have the same parity, all vertices in $\hat{R}(F(G_{elig}/\Omega))$ have the same parity.
\end{proof}

\begin{lemma}Suppose that \cref{prop:RCS}(\ref{RCS:1})--(\ref{RCS:4}) hold in the beginning of an iteration, they must also hold at the end of the iteration. \end{lemma}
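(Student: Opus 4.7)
The plan is to walk through the six atomic sub-operations that an iteration decomposes into---(A) the augmentation $M \leftarrow M \oplus \bigcup_{P \in \Psi} P$, (B) the adjoining of new blossoms $\Omega \leftarrow \Omega \cup \Omega'$, (C) the increment $\Delta w(e) \leftarrow \Delta w(e) + \delta_i$ for each $e \in M'$, (D) the free-vertex isolation which for each $f \in F'$ raises $y(u)$ by $\delta_i$ on $u \in \hat{f}^\Omega$ and drops $z(B)$ by $2\delta_i$ on the root blossom containing $f$, (E) the symmetric dual adjustment by $\pm \delta_i/2$ on $y$ and $\pm \delta_i$ on $z$ of the root blossoms, and (F) the cascaded blossom dissolution---and verify that each of Properties~(\ref{RCS:1})--(\ref{RCS:4}) is preserved after each step.

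Granularity (\ref{RCS:1}) is immediate: every $y$-update is a multiple of $\delta_i/2$ and every $z$- or $w_i$-update is a multiple of $\delta_i$. For Active Blossoms (\ref{RCS:2}), the blossoms produced by \textsc{Approx\_Primal} are full by \cref{thm:augment_and_shrink} and, being detected inside search structures rooted at free vertices, are outer in the updated $G_{elig}/\Omega$; step (E) raises their $z$-values to $\delta_i > 0$, and (F) iteratively strips away every root blossom whose $z$ has been driven to $0$, cascading through any newly exposed sub-blossom, so that every root blossom in $\Omega$ ends the iteration with $z > 0$.

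For Near Domination (\ref{RCS:3}) and Near Tightness (\ref{RCS:4}), I will track $yz(e) - w_i(e)$ across each sub-step. Step (A) leaves both values alone; an edge of $\Psi$ that flips from unmatched-eligible ($yz = w_i - \delta_i$) to matched acquires type $i$, for which near-tightness demands only $yz \leq w_i$---comfortably satisfied. Step (B) changes no value. In (C), $w_i(e)$ grows by $\delta_i$ only for $e \in M'$, but \textsc{Approx\_Primal} populates $M'$ only with eligible matched edges (visited by its BFS inside $G_{elig}$), so such a type-$j$ edge satisfies $yz(e) = w_i(e) + 2(\delta_j - \delta_i) \geq w_i(e)$ beforehand, leaving $yz(e) \geq w_i(e) - \delta_i$ under the new $w_i$ and only widening the near-tightness slack. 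In (D), $yz(e)$ is unchanged on edges contained in $\hat{f}^\Omega$ (the two $+\delta_i$ bumps on endpoints precisely cancel the $-2\delta_i$ change to the enclosing root's $z(B)$) and rises by $\delta_i$ on the unmatched edges exiting $\hat{f}^\Omega$; such exiting edges carry no type, so near-tightness is vacuous and near-domination only improves.

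The main obstacle is step (E), which demands a case analysis of each edge $uv$ according to whether each endpoint in $G/\Omega$ lies in an outer root blossom (O), an inner one (I), or outside $\hat{R}(F(G_{elig}/\Omega))$ (U), and according to whether the two endpoints share a root blossom. The shared-blossom, OI, and UU configurations leave $yz(e)$ invariant because the $y$-adjustments cancel, possibly with the help of the shared root's $z$-adjustment. The rising cases (II, IU, UI) cannot carry a type $j$: a matched edge whose endpoints are both inner would force both partners to be outer (the matched partner of an inner vertex is its predecessor on the alternating path, at even depth), and any blossom edge lies inside a single root blossom---so no near-tightness obligation is threatened. The two strictly-decreasing configurations, (i) unmatched OO (drop $\delta_i$) and (ii) OU/UO (drop $\delta_i/2$), are handled by combining \cref{prop:PBC} with \cref{lem:parity}. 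For (i), the primal blocking condition \cref{prop:PBC}(\ref{item:second_block}) forbids such an edge from being eligible---it would close an odd alternating cycle reachable from a free vertex in $G_{elig}/\Omega$---so $yz(e) \neq w_i(e) - \delta_i$; \cref{lem:parity} then tells us both endpoints share the free-vertex parity, forcing $yz(e) - w_i(e)$ to be an integer multiple of $\delta_i$ and hence $yz(e) \geq w_i(e)$ beforehand, which survives the $\delta_i$ drop. For (ii), the parity boost fails at the unreachable endpoint, but non-eligibility combined with the $\delta_i/2$ granularity still gives $yz(e) \geq w_i(e) - \delta_i/2$, sufficient to survive the $\delta_i/2$ drop. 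Step (F) touches only $\Omega$ and $z$-values that are already $0$, preserving all four properties trivially. The hardest technical point is precisely the combined invocation of \cref{prop:PBC} and \cref{lem:parity} to upgrade the granularity slack from $\delta_i/2$ to a full $\delta_i$ in the outer-outer case, which is what rescues near-domination from the $\delta_i$ drop in the dual adjustment.
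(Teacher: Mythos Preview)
Your overall decomposition and case analysis closely mirror the paper's proof, but there is a genuine gap in step~(E) for near-tightness. You assert that the rising cases II, IU, UI ``cannot carry a type~$j$,'' arguing that a matched edge with both endpoints inner is impossible because ``the matched partner of an inner vertex is its predecessor on the alternating path, at even depth.'' That reasoning is only valid when the matched edge is \emph{eligible}: only then does it lie in $G_{elig}$ and participate in alternating paths. An \emph{ineligible} matched edge $uv$ with $u$ inner may well have $v$ inner (reached from a free vertex along a completely separate route) or $v$ unreachable; such an edge carries a type~$j$, and after step~(E) its $yz$-value rises by $\delta_i$ (in II) or $\delta_i/2$ (in IU), so near-tightness must be checked. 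Your argument gives no coverage of IU matched at all.

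The fix is the exact mirror image of your own treatment of near-domination in the OO and OU cases. For an ineligible matched type-$j$ edge, eligibility condition~(\ref{elig:3}) fails, so $yz(e) < w_i(e) + 2(\delta_j-\delta_i)$. In II both endpoints are reachable, so \cref{lem:parity} makes $yz(e)$ a multiple of $\delta_i$, hence $yz(e)\le w_i(e)+2(\delta_j-\delta_i)-\delta_i$, which survives the $\delta_i$ rise; in IU only the $\delta_i/2$ granularity is available, but the rise is only $\delta_i/2$. This is precisely how the paper closes the case (its Case~2, ``$e\in M$ and at least one endpoint is in $\hat V_{in}\cup\hat V_{out}$''). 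A smaller secondary gap: you dismiss Granularity~(\ref{RCS:1}) as ``immediate,'' but it includes \emph{non-negativity}, and $z(B)$ for a root blossom containing some $f\in F'$ can drop to $-\delta_i$ after step~(D); the paper devotes a short case analysis (new versus pre-existing root blossom) to show it is restored to $\ge 0$ after step~(E).
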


\begin{proof}

Suppose that \cref{prop:RCS}(\ref{RCS:1})--(\ref{RCS:4}) hold right at the beginning of the iteration. We first show that \cref{prop:RCS}(\ref{RCS:1}) is satisfied after the iteration. First, note that throughout the iteration $z(B)$ and $\Delta w_i(e)$ change only by multiples of $\delta_i$ and $y(u)$ changes by multiples of $\delta_i/2$. Now we show that if $y(u)$, $w_i(e)$, $z(B)$ remain non-negative after the iteration. First $y(v) \geq \tau$ for every vertex $v$. The algorithm stops when $\tau$ reaches 0, so $y(v)$ is non-negative for every vertex $v$. For each edge $e$, since $\Delta w(e)$ only increases during the iteration, $w_i(e)$ must be non-negative. Consider a blossom $B \in \Omega$ after the Blossom Shrinking step. If $B$ is a root blossom that has been added to $\Omega$ during the iteration, then $B$ must be outer. This implies that $z(B) > 0$ after the Dual Adjustment step. If $B$ was a root blossom before the iteration, then $z(B) > 0$ and so $z(B) \geq \delta_i$ at the beginning of the iteration. Observe that $z(B)$ decreases by at most $\delta_i$ during an iteration: If $B$ contains a free vertex $f \in F'$ then $z(B)$ decreases by $2\delta_i$ during the Augmentation and Blossom Shrinking step. However, in this case, since $f \in \hat{V}_{out}(F(G_{elig}/\Omega)$, $Z(B)$ will increase by $\delta_i$ during the dual adjustment step so the net gain is $-\delta_i$, which implies $z(B) \geq 0$ before the Blossom Dissolution step. If $B$ does not contain a free vertex, then $z(B)$ decreases at most $\delta_i$ during the Dual Adjustment. If $B$ is a non-root blossom, then $z(B)$ remains unchanged after the dual adjustment step. In all three cases, $z(B)$ will be non-negative. 

Now we claim \cref{prop:RCS}(\ref{RCS:2}) is satisfied after the iteration. First we argue that each $B \in \Omega$ is {\it full} (i.e .$|M \cap E_{B}| = \lfloor |B| / 2\rfloor$). This is because an augmentation does not affect the fullness of the blossoms. Also, all the newly created blossoms in $\Omega_{Z}$ are full. 
    
Secondly, all root blossoms must have $z(B) > 0$ after the iteration; otherwise, they would have been dissolved during the Blossom Dissolution step. Also, since we only dissolve blossoms with zero $z$-values and we only increase $z(B)$ if $B \in \Omega$, it must be the case that $z(B) = 0$ for $B \notin \Omega$.  

Now we show that \cref{prop:RCS}(\ref{RCS:3},\ref{RCS:4}) are satisfied. During an augmentation, when we switch the status of an eligible matched edge to unmatched or an eligible unmatched edge to matched, \cref{prop:RCS}(\ref{RCS:3}) and \cref{prop:RCS}(\ref{RCS:4}) are still guaranteed to hold. This is because changing a matched edge to unmatched does not impose additional constraints. On the other hand, an eligible unmatched edge $e$ at scale $i$ must satisfy $yz(e) = w_i(e) - \delta_i$. This means that if we change the status of $e$ to become matched, it also satisfies \cref{prop:RCS}(\ref{RCS:4}) (near tightness), which is required for matched edges.

Consider the step when we add $\delta_i$ to $\Delta w(e)$ for $e \in M'$ to remove it from $G_{elig}$. If $e$ is an eligible matched edge of type $j$, it satisfies $yz(e) = w_i(e) + 2(\delta_j - \delta_i)$. After we add $\delta_i$ to $\Delta w(e)$, we must have $yz(e) = w_i(e) + 2(\delta_j - \delta_{i}) - \delta_{i} \in [w_{i}(e) -\delta_i, w_{i}(e) + 2(\delta_j - \delta_i)]$. This means $yz(e)$ still satisfies \cref{prop:RCS}(\ref{RCS:3},\ref{RCS:4}).

Now we show that the dual adjustment step also maintains \cref{prop:RCS}(\ref{RCS:3},\ref{RCS:4}). Consider an edge $e=uv$. If both $u$ and $v$ are not in $\hat{V}_{in}(F(G_{elig}/\Omega)) \cup \hat{V}_{out}(F(G_{elig}/\Omega))$ or both $u$ and $v$ are in the same root blossom in $\Omega$, then $yz(e)$ is unchanged, which implies near tightness and near domination remains to be satisfied. The remaining cases are as follows:

\begin{enumerate}
\item $e \notin M$ and at least one endpoint is in $\hat{V}_{in}(F(G_{elig}/\Omega)) \cup \hat{V}_{out}(F(G/\Omega))$. If $e$ is ineligible, then $yz(e) > w_i(e) - \delta_i$. Since both $yz(e)$ and $w_i(e)$ are multiples of $\delta_i$ by \cref{lem:parity}, it must be the case that $yz(e) \geq w_i(e)$ before the adjustment. Since $yz(e)$ can decrease at most $\delta_i$ during the adjustment, we have $yz(e) \geq w_i(e) - 1.5\delta_i$ after the adjustment. If $e$ is eligible, then at least one of $u, v$ is in $\hat{V}_{in}(F(G_{elig}/\Omega))$ due to \cref{thm:augment_and_shrink} (i.e.~no eligible outer-outer edge exists in $G_{elig}/\Omega$). Therefore, $yz(e)$ cannot be reduced, which preserves \cref{prop:RCS}(\ref{RCS:3}).

\item $e \in M$ and at least one endpoint is in $\hat{V}_{in}(F(G_{elig}/\Omega)) \cup \hat{V}_{out}(F(G_{elig}/\Omega))$. If $e$ is ineligible, we have $yz(e) < w_i(e) + 2(\delta_j - \delta_i)$. Since both $yz(e)$ and $w_i(e)$ are multiples of $\delta_i$ by \cref{lem:parity}, it must be the case that $w_i(e) - \delta_i \leq yz(e) \leq w_i(e) + 2(\delta_j-\delta_i) - \delta_i$ before the adjustment. Since every vertex in $\hat{V}_{out}(F(G_{elig}/\Omega))$ is either free or incident to an eligible matched edge, we have $u,v \notin \hat{V}_{out}(F(G_{elig}/\Omega))$. Therefore, $yz(e)$ increased by at most $\delta_i$ and it cannot decrease during the adjustment, which implies $yz(e) \leq w_i(e) + 2(\delta_i - \delta_j)$ after the adjustment, satisfying \cref{prop:RCS}(\ref{RCS:3},\ref{RCS:4}) (near domination and near tightness). If $e$ is eligible then it must be the case that one endpoint is in $\hat{V}_{in}(F(G_{elig}/\Omega))$ and the other is in $\hat{V}_{out}(F(G_{elig}/\Omega))$. $yz(e)$ value would remain unchanged and so \cref{prop:RCS}(\ref{RCS:3},\ref{RCS:4}) are satisfied.\qedhere
\end{enumerate}

\end{proof}

\begin{theorem}The algorithm in \Cref{fig:edmondssearch} outputs a $(1-\epsilon)$-approximate \MWM{}.  \end{theorem}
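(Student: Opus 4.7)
The plan is to combine the invariant-maintenance lemmas already established with the approximation guarantee of \Cref{lem:approximateMWM}. First I would verify that at initialization the relaxed complementary slackness conditions (\Cref{prop:RCS}) hold: granularity is immediate since $y(u) = W/2 - \delta_0/2$ is a multiple of $\delta_0/2$, $z(B) = 0$ for all $B$, and $\Delta w(e) = 0$ for all $e$; the active blossom set $\Omega = \emptyset$ trivially satisfies property (2); for any edge $e$, $yz(e) = 2(W/2 - \delta_0/2) = W - \delta_0 \geq w_0(e) - \delta_0$ gives near domination; near tightness holds vacuously since no edges are matched or inside a blossom; the free-vertex dual property holds with $\tau = W/2 - \delta_0/2$; and bounded weight change holds since $\Delta w \equiv 0$.

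Next, I would invoke the inductive machinery. For each iteration within scale $i$, the previously stated lemma shows that if \Cref{prop:RCS}(1)--(4) hold at the beginning of an iteration, then they still hold at the end; a crucial ingredient here is \Cref{thm:augment_and_shrink}, which guarantees that \textsc{Approx\_Primal} produces augmenting paths, weight modifier updates, and blossom contractions such that the primal blocking conditions are met (in particular $V_{in} \cap V_{out} = \emptyset$), enabling the unambiguous dual adjustment. \Cref{lem:scale_begin_end} propagates properties (1)--(4) across scale boundaries via the halving $\delta_{i+1} = \delta_i/2$ and the compensating shift $y(u) \leftarrow y(u) + \delta_{i+1}$. \Cref{lem:boundedchange} separately guarantees that properties (5) and (6) hold throughout the entire execution by bounding the cumulative error from removed edges and free vertices across all $O(\log W / \epsilon)$ iterations, using that each call to \textsc{Approx\_Primal} removes only an $O(\lambda)$ fraction of $|M|$ matched edges and free vertices with $\lambda = \Theta(\epsilon'/\log W)$.

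The last ingredient I would check is that $\tau$ actually reaches $0$ at the end of scale $L$. By construction, scale $i < L$ runs until $\tau = W/2^{i+2} - \delta_i/2$, and the first line of the transition to the next scale resets $\tau \leftarrow \tau + \delta_{i+1} = W/2^{i+2}$, which matches the starting value of $\tau$ for scale $i+1$ (namely $W/2^{(i+1)+1} - \delta_{i+1}/2 + \delta_{i+1}/2 \cdot 2$ up to the parity convention). At scale $L$, $\tau$ is driven down to $0$, so every free vertex has $y(v) \geq 0$ and \Cref{prop:RCS}(5) degenerates to the bounded-slack form required by \Cref{lem:approximateMWM}. Applying \Cref{lem:approximateMWM} then yields $\hat{w}(M) \geq (1-\epsilon)\hat{w}(M^*)$, completing the proof.

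The main obstacle I anticipate is bookkeeping around the boundary between scales: one must check that the dual shift $y(u) \leftarrow y(u) + \delta_{i+1}$ and the reset of $\tau$ preserve all six properties simultaneously, not just the ones addressed by \Cref{lem:scale_begin_end}. In particular, verifying that property (5) continues to hold after the scale transition, that the parity convention for free-vertex duals as multiples of $\delta_{i+1}/2$ is consistent, and that $\tau \leq y(v)$ remains valid for every vertex, all require a careful case analysis. Once these routine but delicate checks are done, the rest of the proof is an immediate composition of the preceding lemmas.
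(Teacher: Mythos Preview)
Your proposal is correct and follows essentially the same route as the paper: verify \Cref{prop:RCS} at initialization, use \Cref{lem:boundedchange} to maintain properties (5) and (6) globally, inductively propagate properties (1)--(4) through iterations and across scale boundaries via the iteration lemma and \Cref{lem:scale_begin_end}, and finish with \Cref{lem:approximateMWM}. The one place where you over-worry is the ``main obstacle'' you flag at the end: property (5) at scale transitions does not need a separate case analysis, since \Cref{lem:boundedchange} already asserts that (5) and (6) hold \emph{throughout} the algorithm (its proof accounts cumulatively over all $O(\log W/\epsilon')$ iterations), so the paper simply cites that lemma once and moves on.
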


\begin{proof}
We claim that the relaxed complementary slackness condition is satisfied at the end of iteration $L$, so that by \cref{lem:approximateMWM}, $M$ is a $(1-\epsilon)$-approximate \MWM{}. First note by \cref{lem:boundedchange}, \cref{prop:RCS}(\ref{RCS:5},\ref{item:boundedw}) hold throughout the algorithm so we only need to show that \cref{prop:RCS}(\ref{RCS:1}--\ref{RCS:4}) hold. In the beginning of scale 0, as we set $y(u) \leftarrow W/2 - \delta_0 / 2$ for $u \in V$, $z(B) \leftarrow 0$ for all $B\subseteq V$, $\Delta w(e) \leftarrow 0$ for $e \in E$, and $\Omega, M \leftarrow \emptyset$, \cref{prop:RCS}(\ref{RCS:1}--\ref{RCS:4}) must be satisfied. Now note that by Lemma \ref{lem:scale_begin_end} if \cref{prop:RCS}(\ref{RCS:1}--\ref{RCS:4}) is satisfied at the beginning of the first iteration of scale $i$, it will be satisfied at the end of the last iteration of scale $i$. Moreover, if $i < L$, by \cref{lem:boundedchange}, \cref{prop:RCS}(\ref{RCS:1}--\ref{RCS:4}) will be satisfied at the beginning of scale $i+1$. Therefore, \cref{prop:RCS}(\ref{RCS:1}--\ref{RCS:4}) will be satisfied at the end of scale $L$ inductively.
\end{proof}

\begin{theorem} The algorithm in \Cref{fig:edmondssearch} can be implemented in $\poly(1/\epsilon, \log n)$ rounds in the \congest model. \end{theorem}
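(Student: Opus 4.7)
I would decompose the round complexity as (number of scales) $\times$ (iterations per scale) $\times$ (cost per iteration), and argue that each factor is $\poly(1/\epsilon,\log n)$. Since the initial rescaling ensures $W\le n^3$, there are $L+1 = O(\log n)$ scales. Within scale $i$, the variable $\tau$ starts at $W/2^{i+1}-\delta_{i+1}/2$ and decreases by $\delta_i/2 = \epsilon' W/2^{i+1}$ per iteration until it reaches $W/2^{i+2}-\delta_i/2$ (or $0$ when $i=L$); hence every scale has $O(1/\epsilon')$ iterations and the algorithm performs $O(\log n/\epsilon)$ iterations in total.

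Next I would analyze the cost of a single iteration, which has three sub-steps. For the \emph{Augmentation and Blossom Shrinking} step, the call to \textsc{Approx\_Primal} costs $\poly(1/\lambda,\log n)=\poly(1/\epsilon,\log n)$ rounds by \cref{thm:augment_and_shrink}, since $\lambda=\Theta(\epsilon'/\log W)=\Theta(\epsilon/\log n)$. The post-processing — augmenting along the paths in $\Psi$ (each of matching length $\poly(1/\lambda)$), inserting the new blossoms of $\Omega'$ (each of size at most $\Cmax$) into $\Omega$, adding $\delta_i$ to $\Delta w(e)$ for every $e\in M'$, and applying the $y$/$z$ updates at the free vertices in $F'$ — is implemented via the lazy blossom data structure of \cref{appendix:data-structures}, each structural operation on a root blossom taking $O(\Cmax^2)=\poly(1/\epsilon)$ rounds and executed in parallel across root blossoms. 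The \emph{Dual Adjustment} step becomes easy once the sets $\hat V_{in}(F(G_{elig}/\Omega))$ and $\hat V_{out}(F(G_{elig}/\Omega))$ are in hand: item (4) of \cref{thm:augment_and_shrink} guarantees these are already marked locally at every vertex, so the updates to $y(u)$ are purely local while the updates to $z(B)$ are done in $O(1)$ rounds at each root blossom's base. There is no need to maintain $\tau$ explicitly, as the total number of iterations in each scale is pre-computed.

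For the \emph{Blossom Dissolution} step, each root blossom with $z(B)=0$ is dissolved by its base in $O(\Cmax^2)=\poly(1/\epsilon)$ rounds per \cref{appendix:data-structures}; sub-blossoms that become new root blossoms may themselves trigger further dissolutions, but since every active blossom has at most $\Cmax$ vertices the laminar depth is $O(\Cmax)$, so the entire cascade finishes in $\poly(\Cmax)=\poly(1/\epsilon)$ rounds. Adding the three sub-step costs gives a per-iteration cost of $\poly(1/\epsilon,\log n)$; multiplying by the $O(\log n/\epsilon)$ iterations yields the claimed total. Throughout, I would also verify inductively the invariant required by \cref{thm:augment_and_shrink}, namely that every active blossom has size at most $\Cmax$: this holds initially (with $\Omega=\emptyset$) and is preserved because item (2) of \cref{thm:augment_and_shrink} only creates new blossoms of size $\le\Bmax\le\Cmax$, while the other operations (augmentation, dissolution, dual adjustment) never grow existing blossoms.

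The main obstacle I expect is the blossom bookkeeping: the base of a root blossom must simulate DFS traversals through $G/\Omega$, augmentations that rebase the blossom, and cascading dissolutions for the entire laminar subfamily nested beneath it — all while respecting the $O(\log n)$ message-size limit. The lazy scheme of \cref{appendix:data-structures}, together with the hard size cap $\Cmax=\poly(1/\epsilon)$ coming from \cref{thm:augment_and_shrink}, is precisely what keeps each such operation within $\poly(1/\epsilon)$ rounds and thus makes the whole simulation feasible in the \congest model.
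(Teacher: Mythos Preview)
Your proposal is correct and follows essentially the same approach as the paper's own proof: decompose the total cost as $O(\log n)$ scales times $O(1/\epsilon')$ iterations per scale times $\poly(1/\epsilon,\log n)$ rounds per iteration, where the per-iteration bound comes from \cref{thm:augment_and_shrink} for \textsc{Approx\_Primal} and the $O(\Cmax^2)$ blossom-operation bounds of \cref{appendix:data-structures} for the remaining bookkeeping. Your explicit verification that the blossom-size invariant (each active blossom has at most $\Cmax$ vertices) is preserved across iterations is a useful addition that the paper's proof leaves implicit.
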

\begin{proof}

For each iteration, we argue that the Augmentation and Blossom Shrinking step, the Dual Adjustment step, and the Blossom Dissolution and Clean Up step can be implemented in $\poly(1/\epsilon, \log n)$ rounds.  

In the Augmentation and Blossom Shrinking step, the invocation of \textsc{Approx\_Primal}$(G_{elig},M,\Omega,  \lambda)$ takes $\poly(1/\epsilon, \log n)$ rounds by \cref{thm:augment_and_shrink}, which returns $\Psi$, $\Omega'$, $M'$, and $F'$ that are stored locally (see \cref{appendix:data-structures}). Moreover, it also marks whether each vertex is in $\hat{V}_{in}(F_{alive})$ or $\hat{V}_{out}(F_{alive})$. 

Now we explain how each step can be done in $\poly(1/\epsilon, \log n)$ rounds:

\begin{itemize}
\item Set $M \leftarrow M \oplus \bigcup_{P \in \Psi} P$.  Now each edge in $G_{elig}/\Omega$ knows whether it belongs to an augmenting path in $\Psi$ so it can flip its matching status in $O(1)$ round. For each root blossom $B$ in $\Omega$, if $B$ is in $P$, the flipping of an augmenting path inside $B$ and the change of base can be done in $O(\Bmax^2)$ rounds as shown in \cref{appendix:data-structures}.

\item Set $\Omega \leftarrow  \Omega \cup \Omega'$. Since each root blossoms in $\Omega'$ must be in inside $S_{\alpha}$ for some free vertex $\alpha$. The creation and storage of all the blossoms inside a root blossom can be done in $O(C^2_{\max}) = \poly(\log n, 1/\epsilon)$ rounds by  \cref{lemma:forming-blossoms} and \cref{appendix:data-structures}.

\item  For each $e \in M'$, set $\Delta w(e) + \delta_i$. Each edge knows whether it is in $M'$ so the adjustment of $\Delta w(e)$ takes $O(1)$ rounds. 

\item  For each $f \in F'$, set $y(u) \leftarrow y(u) + \delta_i$ for  $u \in \hat{f}^{\Omega}$ and $z(B) \leftarrow z(B) - 2\delta_i $ for the root blossom $B\ni f$ if it exists. These can be done in $O(\Bmax)$ rounds. 

\item Dual Adjustment. As the vertices have been marked whether they are in $\hat{V}_{in}(F(G_{elig}/\Omega))$ or $\hat{V}_{out}(F(G_{elig}/\Omega))$. The $y$-values can be adjusted in $O(1)$ rounds. The $z$-values of the blossoms can also be adjusted in $O(1)$ as they are stored at the base of the blossoms. 

\item Blossom Dissolution. As mentioned in \cref{appendix:data-structures}, it takes at most $O(\Bmax^2)$ rounds to perform the blossom dissolution step.
\end{itemize}

Since there are $O(\log W) = O(\log n)$ scales and each scale contains $O(1/\epsilon)$ iterations, the total running time is $\poly(\log n, 1/\epsilon)$.  
\end{proof}

\section{The {\normalfont\VertexAlgPhase} Algorithm}
\label{appendix:alg-phase}

The goal of this section is to prove \cref{lemma:vertex-weighted-FMU}.

\subsection{A Brief Overview to {\normalfont\textsc{Alg-Phase}} in~\cite{FMU22}}\label{sec:fmu-overview}

A single phase of the FMU algorithm, $\textsc{Alg-Phase}$, takes a graph $G$, a matching $M$, and a parameter $\epsilon$.
The \textsc{Alg-Phase} algorithm searches for a set of augmenting paths of lengths at most $\poly(1/\epsilon)$.
This is done via
performing
a specialized
DFS from all free vertices $\alpha\in F(G)$ simultaneously.
A \emph{structure}
$S_\alpha$ defined over a free vertex $\alpha\in F$ refers to the set of discovered vertices throughout DFS from $\alpha$:

\begin{definition}[{\cite[Definition 2.6 without active path]{FMU22}}]\label{def:structure}
A \emph{structure} $S_\alpha=(A_\alpha, P_\alpha)$ of a free node $\alpha$ is a set of arcs $A_\alpha$ along with an \emph{active path} $P_\alpha$ with the following properties:
\begin{enumerate}[itemsep=0pt]
  \item {\textbf{Disjointness}}: $S_\alpha$ is vertex-disjoint from all other structures. The set of vertices that are endpoints to any arc in $A_\alpha$ is denoted by $V_\alpha$.
  \item {\textbf{Alternating-paths to matched arcs}}: for each matched arc $\vec{e}$ there is an alternating path starting from $\alpha$ using only the arcs in $S_\alpha$.
  \item {\textbf{Endpoints of unmatched arcs}}: each endpoint of an unmatched arc of $S_\alpha$ is either $\alpha$ or an endpoint of a matched arc of $S_\alpha$.
\end{enumerate}
\end{definition}

The \emph{active path} associated with a structure is an alternating path starting from $\alpha$ and ending at the current visiting vertex (the \emph{head} of $S_\alpha$) in the DFS.
Moreover, each matched edge $e$ got at most two labels $\ell(\vec{e})$ and $\ell(\cev{e})$, one for each direction during the DFS. An edge with a specified direction is called an \emph{arc}. The labels of the arcs will be updated several times non-increasingly during the DFS, and these labels represent the shortest possible alternating path found currently from a free vertex.

The algorithm carefully maintains a collection of structures $\{S_\alpha\}$ such that they never overlap --- whenever a head of $S_\alpha$ is about to visit a new vertex (with an associated matched arc $\vec{e}$) that belongs to another structure $S_\beta$, one of the three things could happen:
(1) a new augmenting path is found,
(2) $\ell(\vec{e})$ is already smaller than the number of the matched edges on the current active path, or (3) $\ell(\vec{e})$ is at least the number of the matched edges on the current active path.
In case (1) we have found a new augmenting path in $S_\alpha\cup S_\beta$, the algorithm invokes \textsc{Augment-and-Clean} which 
adds the path to $\mathcal{P}$ and
removes all vertices in $S_\alpha\cup S_\beta$ from the graph.
In case (2) the head of $S_\alpha$ does not actually visit this new vertex. The DFS of $S_\alpha$ continues assuming that vertex in $S_\beta$ has been visited.
Case (3) is perhaps the crux of the entire algorithm: a \textsc{Reduce-Label-and-Overtake} procedure is invoked and some vertices, matched arcs and unmatched edges leave $S_\beta$ and join $S_\alpha$, and the DFS search on $S_\alpha$ ``fast-forwards'' for consuming all newly joined vertices and arcs.

In addition, this specialized DFS restricts itself to the search space and search depth.
The DFS starting from any free vertex does not go too deep and the whole structure does not grow too big.
Specifically, there is a \emph{depth limit} $\Lmax := 1/\epsilon$,
a \emph{size limit} $\limit:=1/\epsilon^2$,
and a \emph{pass limit} $\tau_{\max} := 1/\epsilon^4$ imposed to the DFS search\footnote{In~\cite{FMU22} their analysis requires a larger $\limit$ and $\tau_{\max}$ values 
because their algorithm has a quadratic relation between the approximation ratio and the ``almostness'' of an inclusion-maximal set of disjoint augmenting paths.}.
Notice that it is possible for an \textsc{Reduce-Label-and-Overtake} procedure to overtake an already large enough structure so that at the end the algorithm produces a structure of size much larger than $\textsc{limit}$.
Fortunately, with the pass limit, one can prove that the size of any structure does not exceed $\Cmax := \tau_{\max}\cdot(\Lmax+1)\cdot \limit = O(1/\epsilon^{7})$.
Moreover, by choosing a large enough pass limit and size limit, the number of free vertices that are still active after the search can be upper bounded by $h(\epsilon)|M|$, where $h(\epsilon):=\frac{4+2/\epsilon}{\epsilon\cdot \tau_{\max}} + \frac{2}{\limit} = O(\epsilon^2)$.

Throughout the execution of an \AlgPhase, the algorithm obtains a collection $\mathcal{P}$ of augmenting paths,
a set of vertices $V'$ to be removed,
as well as a collection of structures $\mathcal{S}$ which captures the DFS searching progress.
The algorithm also collects the active paths within each structure and put all active vertices into the set $V_A$.
The most important guarantee from the FMU algorithm~\cite[Lemma 4.1, full version]{FMU22} is that, once removing all vertices in $V'\cup V_A$ from $G$, all short augmenting paths (with at most $\Lmax$ matched edges) no longer exist.

The \AlgPhase algorithm can be summarized as follows:

\begin{lemma}[{\cite[Algorithm 2]{FMU22}}]\label{lemma:fmu}
  Let $G$ be the input graph, $M$ be the current matching,
  and $\epsilon$ be a parameter.
  Let the DFS parameters be $\tau_{\max}:=1/\epsilon^4$, $\limit:=1/\epsilon^2$, and $\Lmax:=1/\epsilon$.
  Then, there exists an algorithm \textsc{FMU-Alg-Phase} in both \congest model and \crewpram model such that in $\poly(1/\epsilon, \log n)$ time, returns $(\mathcal{P}, V', V_A, \mathcal{S})$ that satisfies the following:

\begin{enumerate}[itemsep=0pt]

\item \cite[Lemma 5.3, full version]{FMU22} The number of vertices in each structure $S_\alpha\in\mathcal{S}$ is at most $C_{\max}$. 
\item \cite[Derived from Section 6.4 and Section 6.7.4, full version]{FMU22} $|V'| \leq C_{\max} \cdot (2 |\mathcal{P}| + \epsilon^{32} \tau_{\max} |M|)$.

\item \cite[Lemma 5.5, full version]{FMU22} The number of free active vertices  $|F\cap V_{A}| \leq h(\epsilon) \cdot |M|$. Moreover, any active path contains at most $\Lmax$ matched arcs.

\item \cite[Lemma 4.1, full version]{FMU22} No augmenting paths $P$ with at most $\ell_{\max}$ matched edges exist in $G\setminus (V' \cup V_{A})$.

\item \cite[Implied by Induction Hypothesis in the proof of Lemma 4.1, full version]{FMU22} For each matched arc $\vec{e}$ such that $d_{G-V'-V_A, M}(\vec{e}) \le \Lmax$, then there exists a structure $S_\alpha\in\mathcal{S}$ such that $\vec{e}\in S_\alpha$.
\end{enumerate}

\end{lemma}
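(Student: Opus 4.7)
The plan is to take the algorithm \textsc{Alg-Phase} of Fischer, Mitrović, and Uitto~\cite{FMU22} off the shelf, instantiate it with the stated parameters $\tau_{\max}=1/\epsilon^{4}$, $\limit=1/\epsilon^{2}$, and $\Lmax=1/\epsilon$, and call the resulting procedure \textsc{FMU-Alg-Phase}. The five claims are each stated or directly derived in \cite{FMU22}; the proof is essentially a matching-up exercise that verifies the constants come out correctly under this particular parameter choice and then re-cites the corresponding lemma in \cite{FMU22}.

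For the structural bounds, I would argue item~1 by a direct appeal to \cite[Lemma 5.3]{FMU22}: because the DFS of any structure is capped by $\tau_{\max}$ passes, $\Lmax+1$ level-transitions, and $\limit$ vertices admitted between resets, the accumulated size of $S_\alpha$ is at most $\Cmax=\tau_{\max}(\Lmax+1)\limit$. For item~2 I would decompose $V'$ into (i) the at most $2\Cmax|\mathcal{P}|$ vertices purged by \textsc{Augment-and-Clean} whenever an augmenting path is discovered and (ii) the vertices charged out by label-reduction and overtaking, whose combined mass is bounded by $\Cmax\cdot\epsilon^{32}\tau_{\max}|M|$ in \cite[Sections 6.4 and 6.7.4]{FMU22}; summing the two charges produces the displayed bound.

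For items~3--5 I would appeal to the relevant statements in order. Item~3 is \cite[Lemma 5.5]{FMU22}, where the bound on still-active free vertices is shown to be $h(\epsilon)|M|$ with exactly $h(\epsilon)=\tfrac{4+2/\epsilon}{\epsilon\tau_{\max}}+\tfrac{2}{\limit}$; the ``at most $\Lmax$ matched arcs'' claim is immediate from the DFS depth limit that is enforced at every level of the search. Item~4 is the main correctness statement \cite[Lemma 4.1]{FMU22}, which asserts that after removing $V'\cup V_A$ no short augmenting path survives. Item~5 is the slightly stronger invariant carried through the induction in that same proof: the induction hypothesis states that \emph{every} matched arc at matching distance $\le\Lmax$ from a free vertex in $G\setminus(V'\cup V_A)$ lies in some structure $S_\alpha$, so I would explicitly pull that invariant out of the \cite[Lemma 4.1]{FMU22} argument as a separate corollary.

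Finally, for the running time I would note that a round of the FMU parallel DFS consists of constantly-many message exchanges along matched/unmatched arcs together with token passing to resolve overtaking and augmentations, and there are at most $\tau_{\max}\cdot\limit\cdot\Lmax=\poly(1/\epsilon)$ such rounds; in \congest this is $\poly(1/\epsilon)$ rounds (no aggregation beyond constant-hop neighborhoods is needed), while in \crewpram each round is simulated in $O(\log n)$ time with $O(m)$ processors by standard parallel sorting/prefix primitives. The main obstacle I expect is not mathematical but notational: matching the constants in items~2 and~3 requires tracing \cite{FMU22}'s potential arguments, which are spread over several subsections, and confirming that the exponent $32$ and the specific functional form of $h$ are the ones that fall out when one plugs in the chosen $(\tau_{\max},\limit,\Lmax)$. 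Once those are verified the rest is bookkeeping.
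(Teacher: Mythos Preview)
Your proposal is correct and matches the paper's approach: the lemma is not proved from scratch but is a summary of results imported from \cite{FMU22}, with each item cross-referenced to the relevant lemma or section there. The paper does not give a proof either; it states the lemma with citations and follows it with a short remark. The one substantive detail you are missing is that remark: to obtain the precise forms of Items~2 and~4, the paper notes two small modifications to Algorithm~7 of \cite{FMU22}, namely (i) replacing the $\poly(1/\epsilon)$-round $O(1)$-approximate matching on the auxiliary graph $H$ by a $\poly(1/\epsilon,\log n)$-round \emph{maximal} matching (e.g.\ via \cite{Fischer17}), and (ii) once $|M_H|<\epsilon^{32}|M|/2$, removing all structures that participate in $M_H$ rather than just the matched ones. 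Modification (ii) is what makes $V'$ absorb the extra $\Cmax\cdot\epsilon^{32}\tau_{\max}|M|$ term in Item~2, and modification (i) is what guarantees the maximality needed for Item~4. You correctly anticipate that constant-chasing through \cite{FMU22} is the crux; these two tweaks are exactly where that chase lands.
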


\paragraph{Remark} In order to guarantee Item 4 of \cref{lemma:fmu}, we slightly modify Algorithm 7 in \cite[full version]{FMU22} as the following:
(1) the $\poly(1/\epsilon)$ round $O(1)$-approximate maximum matching over an auxiliary graph $H$~\cite[Section 6.7.1, full version]{FMU22} is replaced by the $\poly(1/\epsilon, \log n)$ round maximal matching algorithm (e.g., \cite{Fischer17}).
(2) 
once we obtain a small enough sized maximal matching $|M_H| < \epsilon^{32}|M|/2$ over the auxiliary graph $H$, the structures that participate in the matching $M_H$ are all removed.
Therefore, the set $V'$ now includes slightly more vertices per \textsc{Pass-Bundle}, leading to Item 2 of \cref{lemma:fmu}. 

\subsection{The {\normalfont\VertexAlgPhase} Algorithm}
\label{appendix:proof-vertex-weighted-FMU}
\label{section:vertex-weighted-FMU}

Let $G'=(V/\Omega, E'):=G/\Omega$ be the contracted graph and $M'=M/\Omega$ be a matching on $G'$.
We assume the graph is \emph{vertex weighted} with respect to the collection of blossoms $\Omega$.
Since each blossom in $\Omega$ has a size at most $\Bmax$, we may assume that the vertex weights are at most $\Bmax$ as well.
That is, there is a positive integer weight function for all vertices $\|\cdot\| : V'\to \{1, 2, \ldots, {\Bmax}\}$.
In this subsection, we first assume that $G'$ is also the underlying communication graph, and establish a 
vertex-weighted \textsc{Alg-Phase} algorithm that proves \cref{lemma:vertex-weighted-FMU}.

In fact,
our vertex-weighted \textsc{Alg-Phase} algorithm is implemented via a simple reduction to the regular (unweighted) \textsc{Alg-Phase}: replacing each matched edge $e$ on $G'$ with an alternating path of $\|e\|_M \le \Bmax$ matched edges.

\begin{algorithm}[h]
  \caption{$\VertexAlgPhase(G':=G/\Omega, M, \lambda)$}\label{alg:vertex-weighted-fmu}
  \begin{algorithmic}[1]\small
  \Require A vertex-weighted graph $G'$, a matching $M$, and $\lambda$.
  \Ensure Collection of augmenting paths $\mathcal{P}$, the collection of all structures $\mathcal{S}$, the vertices to be removed $V'$, and the vertices in active paths $V_A$.
  
  \State Copy $G''\gets G'$.
  \State Remove all matched edge $e$ in $G''$ whenever $\|e\|_M > \Lmax$.\label{line:remove-heavy-edge}
  \State Remove all free vertices $\alpha$ in $G''$ whenever $(\|\alpha\|-1)/2 > \Lmax$.\label{line:remove-heavy-vertex}
  \State For each free vertex $\alpha$ with $(\|\alpha\|-1)/2 > 0$, extend with an alternating path of $(\|\alpha\|-1)/2$ matched edges.
  All edges that were originally incident to $\alpha$ now incident to the non-free endpoint of this alternating path.
  \State Replace each of the remaining matched edge $e$ in $G''$ with an alternating path $P_e$ (which begins and ends with matched edges) of length $2\|e\|_M+1$ (i.e., having exactly $\|e\|_M$ matched edges).
  \State Let $M''$ be the set of all matched edges in $G''$.
  \State $(\mathcal{P}'', \mathcal{S}'', V'', V''_A)\gets \AlgPhase(G'', M'', \lambda)$.\label{line:call-fmu} \Comment{See \cref{lemma:fmu}.}
  \State Construct $\mathcal{P}$,
  the set of augmenting paths (on $G'$) obtained from $\mathcal{P}''$ by recovering the replaced alternating paths back to a single matched edge.
  \State Construct $V'$ to contain all vertices in $V''\cap (V/\Omega)$. Furthermore, for every matched edge $e$, if $P_e\cap V''\neq\emptyset$ then we add \emph{both} endpoints of $e$ to $V'$.
  \State Similarly, construct $V_A$ to contain all vertices in $V''_A\cap (V/\Omega)$, as well as both endpoints of $e$ whenever $P_e\cap V''_A\neq\emptyset$. 
  \State $\mathcal{S}\gets \emptyset$.
  \For{each structure $S''_\alpha\in \mathcal{S}''$}
  \State Construct $S_\alpha$, which includes the free vertex $\alpha$ and all matched arcs $\vec{e}$ if \emph{the entire path $P_e$} is in $S_\alpha$.\label{line:construct-structure-remove-partial-path}
  \State Add all unmatched edges whose both endpoints are either $\alpha$ or incident to a matched arc in $S_\alpha$.
  \State $\mathcal{S}\gets \mathcal{S}\cup S_\alpha$.
  \EndFor
  \State \Return $(\mathcal{P}, \mathcal{S}, V', V_A)$.
  \end{algorithmic}
  \end{algorithm}

\begin{lemma}[Item 1 in \cref{lemma:vertex-weighted-FMU}]\label{lemma:3.1-1}
  Each set $S_\alpha\in \mathcal{S}$ returned by \cref{alg:vertex-weighted-fmu} is a structure and has at most $\Cmax$ vertices.
\end{lemma}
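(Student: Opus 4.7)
The proof proceeds in two stages: first I verify the three properties of \cref{def:structure} for $S_\alpha$, and then I bound $\|V_\alpha\|$. Since \cref{lemma:fmu} guarantees $|V''_\alpha|\le\Cmax$ in the auxiliary graph $G''$ and our construction defines $S_\alpha$ as a ``projection'' of $S''_\alpha$ back to $G/\Omega$, the main ideas come from tracking which vertices and arcs of $S''_\alpha$ survive this projection.

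For \emph{Disjointness}, observe that any vertex $u\in V_\alpha$ lies in $V(G/\Omega)\cap V''_\alpha$: either $u=\alpha$, or $u$ is an endpoint of some matched arc $\vec{e}\in S_\alpha$, which by \cref{line:construct-structure-remove-partial-path} requires the entire replacement path $P_e$ (including its endpoints $u,v$) to lie in $S''_\alpha$. Since the structures in $\mathcal{S}''$ are pairwise vertex-disjoint in $V''$, the same holds for the projections $\{V_\alpha\}$. For \emph{Alternating-paths to matched arcs}, fix $\vec{e}\in S_\alpha$, so $P_e\subseteq S''_\alpha$; applying \cref{def:structure} to $S''_\alpha$ yields an alternating path $Q''$ in $G''$ from $\alpha$ to $\vec{e}$ using only arcs of $S''_\alpha$. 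The key observation is that every internal vertex of any replacement path $P_{e'}$ is a \emph{fresh} vertex whose only incident edges in $G''$ lie inside $P_{e'}$, so once $Q''$ enters $P_{e'}$ it must traverse the entire $P_{e'}$ until exiting at the other endpoint; hence every replacement path intersected by $Q''$ lies wholly in $S''_\alpha$ and its corresponding matched edge belongs to $S_\alpha$. Collapsing each traversed $P_{e'}$ back to a single matched edge, and collapsing $\alpha$'s tail (if any) back to $\alpha$, transforms $Q''$ into an alternating path in $G/\Omega$ from $\alpha$ to $\vec{e}$; the connecting unmatched arcs between consecutive matched arcs are added to $S_\alpha$ by the algorithm since their endpoints are incident to matched arcs already placed in $S_\alpha$. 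The third property, \emph{Endpoints of unmatched arcs}, is immediate from the construction.

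For the size bound, note that $V_\alpha=\{\alpha\}\cup\{u,v:e=uv\in S_\alpha\cap M\}$, where distinct matched edges in $G/\Omega$ have pairwise disjoint vertex sets, so $\|V_\alpha\|=\|\alpha\|+2\sum_{e\in S_\alpha\cap M}\|e\|_M$. Each such $e$ contributes exactly $2\|e\|_M$ vertices of $P_e$ to $V''_\alpha$, and these contributions are pairwise disjoint across different $e$'s and from $\alpha$'s tail. The tail attached to $\alpha$ (introduced when $\|\alpha\|>1$ and guaranteed short by \cref{line:remove-heavy-vertex}, with at most $(\|\alpha\|-1)/2\le\Lmax$ matched edges) is a pendant of $\alpha$, so the FMU DFS starting from $\alpha$ has no choice but to walk its entire length within the depth and pass limits, and all $\|\alpha\|$ tail vertices lie in $V''_\alpha$. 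Summing yields $|V''_\alpha|\ge\|\alpha\|+\sum_{e\in S_\alpha\cap M}2\|e\|_M=\|V_\alpha\|$, which together with Item 1 of \cref{lemma:fmu} gives $\|V_\alpha\|\le\Cmax$. The main subtlety is justifying that $\alpha$'s tail is fully contained in $V''_\alpha$, which relies on the tail being a short pendant that the FMU search cannot bypass; the remaining argument is essentially an accounting check of the projection map defined by \cref{alg:vertex-weighted-fmu}.
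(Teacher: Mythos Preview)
Your proposal is correct and follows the same projection-based approach as the paper: since only fully-included replacement paths survive into $S_\alpha$, one obtains $\|S_\alpha\|\le|S''_\alpha|\le\Cmax$. The paper's own proof is much terser—declaring the structure axioms ``straightforward to check'' and not explicitly isolating the free-vertex tail that you single out as the main subtlety—so your version is a strict elaboration of the same argument rather than a different route.
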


\begin{proof}
By Item 1 of \cref{lemma:fmu}, $S''_\alpha$ has at most $\Cmax$ vertices.
According to \cref{line:construct-structure-remove-partial-path}, all partially included alternating paths will be removed from $S_\alpha$. Hence, the total weight of vertices in $S_\alpha$ is no greater than the number of vertices in $S''_\alpha$, i.e., $\|S_\alpha\|\le |S''_\alpha| \le \Cmax$.
Also, it is straightforward to check that $S_\alpha$ satisfies all criteria of \cref{def:structure}, the definition of a structure.
\end{proof}

\begin{lemma}[Item 2 in \cref{lemma:vertex-weighted-FMU}]\label{lemma:3.1-2}
  $V'$ returned from \cref{alg:vertex-weighted-fmu} satisfies $$\|V'\|\le \Cmax\cdot (\Lmax+1)\cdot (2|\mathcal{P}| + \lambda^{32}\tau_{\max}|M|).$$
\end{lemma}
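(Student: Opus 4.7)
The plan is to deduce the weighted bound on $\|V'\|$ from the unweighted bound on $|V''|$ given by Item 2 of \cref{lemma:fmu} applied to $(G'', M'', \lambda)$, namely $|V''| \le \Cmax(2|\mathcal{P}''| + \lambda^{32}\tau_{\max}|M''|)$. Two short identities then reduce the ambient quantities: the reconstruction in \cref{alg:vertex-weighted-fmu} is a bijection between the augmenting paths in $G''$ and those in $G'$, so $|\mathcal{P}''| = |\mathcal{P}|$; and a straightforward charging argument shows $|M''| \le |M|$, because every matched edge of $M$ is represented at most once in $G''$, either within a free-vertex extension (if it lies in a root blossom containing a free vertex) or as one of the $\|e\|_M$ matched edges on some replacement path $P_e$.

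The core of the proof is to establish $\|V'\| \le O(\Lmax+1)\cdot |V''|$; once this is done, plugging in the \cref{lemma:fmu} bound together with the two identities above completes the argument. I plan to exploit the pre-processing of \cref{line:remove-heavy-edge,line:remove-heavy-vertex}: after these removals, every matched edge $e$ still in $G''$ has $\|e\|_M \le \Lmax$ and every free vertex $\alpha$ still in $G''$ has $(\|\alpha\|-1)/2 \le \Lmax$. Consequently, any non-free vertex $u \in V/\Omega$ incident to a surviving matched edge satisfies $\|u\| \le 2\|e\|_M \le 2\Lmax$, and any surviving free vertex satisfies $\|\alpha\| \le 2\Lmax+1$, so every vertex of $V/\Omega$ that has any chance of being placed into $V'$ already has weight $O(\Lmax)$.

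I will split $V' = V'_1 \cup V'_2$, where $V'_1 := V'' \cap (V/\Omega)$ and $V'_2$ collects the endpoints of matched edges $e$ with $P_e \cap V'' \ne \emptyset$. The first piece gives $\|V'_1\| \le (2\Lmax+1)\cdot|V''|$ directly by the weight bound above. For $V'_2$, I will charge each such ``$V''$-active'' edge $e$ to an arbitrary vertex of $P_e \cap V''$; since every internal vertex of $G''$ lies on a unique $P_e$ and each vertex of $V/\Omega$ is an endpoint of at most one matched edge of $M'$, this assignment is injective, so the number of $V''$-active edges is at most $|V''|$. Each active edge contributes $\|u\|+\|v\|=2\|e\|_M \le 2\Lmax$ to $\|V'_2\|$, giving $\|V'_2\| \le 2\Lmax\cdot|V''|$. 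Summing yields $\|V'\| \le 4(\Lmax+1)\cdot|V''|$, and the constant factor can be absorbed into the asymptotic $\Cmax=\Theta(1/\lambda^7)$ that appears in the stated bound.

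The main technical obstacle is setting up this injective charging in the $V'_2$ analysis cleanly: multiple vertices of $V''$ may lie on a single $P_e$ (so a representative must be fixed), and endpoints of $P_e$ that come from $V/\Omega$ are simultaneously incident to extension paths and replacement paths, which requires verifying that no vertex is reused across two distinct edges. Once the ``one vertex of $V''$ per active edge'' invariant is established, the remainder of the inequality chain is mechanical.
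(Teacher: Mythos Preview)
Your approach is the same as the paper's: bound $\|V'\|$ by a multiple of $|V''|$, then invoke Item~2 of \cref{lemma:fmu} together with $|\mathcal{P}''|=|\mathcal{P}|$ and $|M''|\le |M|$. The paper compresses the first step into a single sentence (``at most $\Lmax$ more weights will join $V'$ per vertex in $V''$'') and asserts $\|V'\|\le(\Lmax+1)|V''|$, whereas your more careful charging yields $\|V'\|\le 4(\Lmax+1)|V''|$.

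The one genuine issue is your last line. The statement of the lemma is an \emph{exact} inequality with the fixed constant $\Cmax=\tau_{\max}(\Lmax+1)\limit$; you cannot ``absorb'' the stray factor~$4$ into $\Cmax$ by appealing to $\Cmax=\Theta(1/\lambda^7)$. If you keep your decomposition, you end up proving only $4$ times the stated bound. To match the lemma as written you would need to tighten the accounting (for instance, avoid the double-counting between $V'_1$ and $V'_2$: a vertex $u\in V''\cap(V/\Omega)$ that is also an endpoint of an active edge is currently charged in both pieces). Alternatively, one can simply note that the extra constant is immaterial for every downstream use in the paper (all applications pass through $O(\lambda|M|)$-type bounds), but that is a different statement from the one you are asked to prove.
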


\begin{proof}
  At most $\Lmax$ more weights will join $V'$ per vertex in $V''$.
  Thus, by Item 2 of \cref{lemma:fmu} and the fact that $|M''| \le |M|$ we have 
  \begin{align*}
    \|V'\| &\le (\Lmax+1)|V''| \\
    &\le \Cmax (\Lmax+1)(2|\mathcal{P}| + \lambda^{32}\tau_{\max}|M''|)\\
    &\le \Cmax (\Lmax+1)(2|\mathcal{P}| + \lambda^{32}\tau_{\max}|M|)
  \end{align*}
    as desired.
\end{proof}

\begin{lemma}[Item 3 in \cref{lemma:vertex-weighted-FMU}]\label{lemma:3.1-3}
  $V_A$ returned from \cref{alg:vertex-weighted-fmu} satisfies $$\|V_A\| \le h(\epsilon)\cdot (2\Lmax) \cdot |M|.$$
\end{lemma}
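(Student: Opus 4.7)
The plan is to bound $\|V_A\|$ via a per-active-path charging argument, using the guarantee from \cref{lemma:fmu} on the number of active paths in $G''$ together with a structural observation about how an alternating path can traverse each $P_e$.

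First I would establish that $|M''|=|M|$. Every matched edge of $G''$ lies either in an expansion $P_e$ of some $e \in M(G/\Omega)$ (contributing $\|e\|_M$ matched edges per $e$) or in the extension path of some free-vertex blossom $\alpha$ (contributing $(\|\alpha\|-1)/2$ per $\alpha$). Summing these two contributions accounts for exactly the matched edge $e$ between the two blossoms together with all interior matched edges of every blossom, so $|M''|=|M|$. Combining with Item~3 of \cref{lemma:fmu} applied to $G''$, the number of active paths is $|F(G'')\cap V_A''| \le h(\lambda)\cdot|M|$, and every active path has at most $\Lmax$ matched arcs.

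Next I would charge $\|V_A\|$ to the active paths. By construction, $V_A$ is the union of (i) the originating free vertex $\alpha$ of each active path (contributing $\|\alpha\|\le 2\Lmax+1$ by \cref{line:remove-heavy-vertex}) and (ii) both endpoints $u,v$ of each $e=uv\in M(G/\Omega)$ with $P_e\cap V_A''\neq\emptyset$ (contributing $\|u\|+\|v\|=2\|e\|_M\le 2\Lmax$ by \cref{line:remove-heavy-edge}). The key structural observation is that every intermediate vertex of $P_e$ has degree exactly $2$ in $G''$ (its only neighbors lie within $P_e$), and the same holds for each extension path of $\alpha$. Consequently, an alternating active path that enters $P_e$ at one of its boundary vertices must either traverse $P_e$ entirely, using all $\|e\|_M$ of its matched arcs, or terminate inside $P_e$; since an active path is simple and has a unique endpoint, at most one $P_e$ per active path can be partially traversed (and a partially traversed extension is mutually exclusive with any partially traversed $P_e$).

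Putting this together, for a single active path $P$ the fully traversed $P_e$'s contribute $\sum 2\|e\|_M = 2\sum \|e\|_M$, and since each such $P_e$ donates all of its $\|e\|_M$ matched arcs into the $\le\Lmax$ matched arcs of $P$, this sum is at most $2\Lmax$. The at most one partially traversed $P_e$ contributes an additional $2\|e\|_M\le 2\Lmax$, while $\alpha$ contributes $\|\alpha\|\le 2\Lmax+1$. Hence each active path contributes $O(\Lmax)$ to $\|V_A\|$, and summing over all $\le h(\lambda)\cdot|M|$ active paths yields the claimed bound $\|V_A\|\le h(\lambda)\cdot(2\Lmax)\cdot|M|$ (after absorbing a constant). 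The main obstacle is the partial traversal: an active path can spend a single matched arc inside a $P_e$ of large weight $\|e\|_M$ yet still contribute the full $2\|e\|_M$ to $\|V_A\|$, so a naive per-matched-arc accounting would overcount. The resolution is precisely the degree-$2$ property of intermediate vertices of each $P_e$ and each extension, which forces this overshoot to occur at most once per active path, giving an additive $O(\Lmax)$ overhead rather than a multiplicative one.
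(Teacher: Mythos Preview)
Your approach is essentially the paper's own argument, only spelled out in more detail: bound the number of active paths by $h(\lambda)\cdot|M''|$ via Item~3 of \cref{lemma:fmu}, charge $O(\Lmax)$ weight in $\|V_A\|$ to each active path, and use $|M''|\le |M|$. The degree-$2$ observation you make about the interior vertices of each $P_e$ (and of each free-vertex extension) is exactly the mechanism the paper relies on implicitly when it writes ``any partially joined alternating path has at most $\Lmax$ matched edges.''

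Two small points. First, your equality $|M''|=|M|$ overlooks the removals in \cref{line:remove-heavy-edge} and \cref{line:remove-heavy-vertex}; heavy matched edges and heavy free blossoms are dropped before the expansion, so one only has $|M''|\le |M|$. Since you need an upper bound this is harmless, but the direction of the argument should be stated accordingly. Second, your per-path accounting yields roughly $6\Lmax$ (fully traversed paths $\le 2\Lmax$, one partial $P_e\le 2\Lmax$, the free vertex $\le 2\Lmax+1$), and ``absorbing a constant'' is not literally compatible with the stated bound $h(\epsilon)\cdot(2\Lmax)\cdot|M|$. The paper's own proof is equally terse here and simply asserts ``total weight per active free vertex is at most $2\Lmax$''; in context this constant is immaterial since only $\|V_A\|=O(\lambda|M|)$ is ever used downstream (cf.\ \cref{lemma:main-bound-removed-edges-1}), but if you want the constant as written you would need to argue that the partial-$P_e$ overshoot and the free-vertex extension cannot both be charged at full cost simultaneously.
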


\begin{proof}
At most $\Lmax$ more weights will join $V_A$ per active free vertex, since any partially joined alternating path has at most $\Lmax$ matched edges.
By Item 3 of \cref{lemma:fmu}, the total weight per active free vertex is at most $2\Lmax$.
Moreover, notice that the number of matched edges $M''$ in $G''$ on \cref{line:call-fmu} is at most $|M|$. Thus we have \begin{align*} 
  \|V_A\| &\le h(\epsilon)\cdot (2\Lmax)\cdot |M''| \\
  &\le h(\epsilon)\cdot (2\Lmax)\cdot |M|.\qedhere
\end{align*}
\end{proof}

\begin{lemma}[Item 4 in \cref{lemma:vertex-weighted-FMU}]\label{lemma:3.1-4}
No augmenting path $P$ with $\|P\|_M \le \Lmax$ exists in $G'\setminus (V'\cup V_A)$.
\end{lemma}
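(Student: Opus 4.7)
The plan is to prove \cref{lemma:3.1-4} by contradiction, reducing to the unweighted guarantee (Item 4 of \cref{lemma:fmu}) applied to $G''$. Suppose there exists an augmenting path $P$ in $G' \setminus (V' \cup V_A)$ with $\|P\|_M \le \Lmax$. The first step is to observe that $P$ cannot traverse a matched edge $e$ with $\|e\|_M > \Lmax$ (removed in \cref{line:remove-heavy-edge}) nor touch a free vertex $\alpha$ with $(\|\alpha\|-1)/2 > \Lmax$ (removed in \cref{line:remove-heavy-vertex}), since either would already force $\|P\|_M > \Lmax$. Consequently $P$ survives the preprocessing stage that produces $G''$ inside \cref{alg:vertex-weighted-fmu}.

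Next, I would lift $P$ to an augmenting path $P''$ in $G''$: substitute each matched edge $e \in P \cap M'$ by its replacement alternating path $P_e$ (contributing $\|e\|_M$ matched arcs), and replace each free endpoint $\alpha$ of $P$ by its extension in $G''$ (contributing $(\|\alpha\|-1)/2$ matched arcs, with the new free endpoint of the extension becoming the endpoint of $P''$). A direct accounting via \cref{def:new_matching_length} — using that each non-free internal vertex on $P$ is incident to exactly one matched edge of $P$ — yields
\[
\tfrac{\|\alpha_0\|-1}{2}+\tfrac{\|\alpha_1\|-1}{2}+\sum_{e\in P\cap M'}\|e\|_M \;=\; \|P\|_M \;\le\; \Lmax,
\]
so $P''$ has at most $\Lmax$ matched edges, and the alternation is preserved by construction.

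The crucial final step is to verify $P'' \cap (V''\cup V_A'') = \emptyset$, which would then contradict Item 4 of \cref{lemma:fmu} on $G''$. This is ensured by the construction of $V'$ and $V_A$ in \cref{alg:vertex-weighted-fmu}: (i) any vertex of $P''$ lying in $V/\Omega$ also appears on $P$, so our hypothesis $P \cap (V'\cup V_A)=\emptyset$ keeps it out of $V''\cup V_A''$; (ii) any auxiliary vertex on a replacement path $P_e$ that belongs to $V''$ would force both endpoints of $e$ into $V'$ by the ``both endpoints'' rule, and since $e\in P$ this contradicts the assumption, with an analogous argument for $V_A''$ and $V_A$. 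The main subtlety — and the place where the bookkeeping needs the most care — is handling the auxiliary vertices introduced by the extensions of the free endpoints $\alpha_0, \alpha_1$: one must verify (or extend the ``both endpoints'' rule symmetrically to free-vertex extensions) that whenever such an extension-auxiliary vertex lies in $V''\cup V_A''$, the corresponding original free vertex $\alpha\in V/\Omega$ gets added to $V'\cup V_A$. Once this symmetric rule is in place, the contradiction against \cref{lemma:fmu} is immediate and the lemma follows.
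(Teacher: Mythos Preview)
Your approach is essentially identical to the paper's: assume such a $P$ exists, observe that the heavy edges and heavy free vertices removed in \cref{line:remove-heavy-edge}--\cref{line:remove-heavy-vertex} cannot appear on $P$, lift $P$ to $P''$ in $G''$ by substituting the replacement paths, and derive a contradiction with Item~4 of \cref{lemma:fmu}. The paper's own proof is considerably terser---it simply asserts that $P''\cap(V''\cup V_A'')\neq\emptyset$ ``implies that $P$ intersects with $V'\cup V_A$''---so your proposal is in fact more careful than the original.

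The subtlety you flag about the auxiliary vertices on the free-vertex extensions is a genuine point. As written, the construction of $V'$ and $V_A$ in \cref{alg:vertex-weighted-fmu} only pulls back endpoints of a matched edge $e$ when $P_e$ meets $V''$ (resp.\ $V_A''$); the analogous rule for the extension attached to a free vertex $\alpha$ is not stated. The paper's proof does not address this case. Your proposed fix---extending the ``both endpoints'' rule symmetrically so that if any auxiliary vertex on $\alpha$'s extension lies in $V''\cup V_A''$ then $\alpha$ is added to $V'\cup V_A$---is exactly what is needed, and once it is in place the argument goes through cleanly. Alternatively one can argue from the internals of \textsc{Alg-Phase} that whenever an auxiliary vertex on $\alpha$'s extension is cleaned or active then $\alpha$ itself must be as well (since the extension is a pendant path whose only connection to the rest of $G''$ is through $\alpha$), but this requires opening the black box, which your reduction-style fix avoids.
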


\begin{proof}
  Any augmenting path $P$ that includes a removed matched edge in \cref{line:remove-heavy-edge}
  or a removed free vertex in \cref{line:remove-heavy-vertex}
  must have $\|P\|_M >  \Lmax$.
  Hence, if there is an augmenting path $P$ with $\|P\|_M  \le \Lmax$, all edges and vertices are not removed from $G''$.

  Now, assume for contradiction that an augmenting path $P$ with $\|P\|_M$ exists on $G'\setminus (V'\cup V'_A)$.
  Let $P''$ obtained from $P$ by extending all matched edges and free vertices of $P$ with the corresponding alternating paths.  
  By Item 4 in \cref{lemma:fmu} 
  we know that $P''$ must intersect with $V''\cup V''_A$.
  However, this implies that $P$ intersects with  $V'\cup V_A$, a contradiction.
\end{proof}

\begin{lemma}[Item 5 in \cref{lemma:vertex-weighted-FMU}]\label{lemma:3.1-5}
Let $\vec{e}$ be a matched arc such that $d_{G'-V'-V_A, M}(\vec{e}) \le \Lmax$, then $\vec{e}$ belongs to a structure $S_\alpha\in\mathcal{S}$.
\end{lemma}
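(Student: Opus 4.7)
The plan is to transfer the hypothesis from $G'=G/\Omega$ to the auxiliary unweighted graph $G''$ constructed inside \cref{alg:vertex-weighted-fmu}, invoke Item~5 of \cref{lemma:fmu} on $G''$, and then translate the conclusion back through the reconstruction performed in \cref{line:construct-structure-remove-partial-path}.

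First, I would fix an alternating path $P$ in $G'-V'-V_A$ of matching length at most $\Lmax$ witnessing $d_{G'-V'-V_A,M}(\vec{e})\le\Lmax$, with starting free vertex $\alpha$. Since $\|e\|_M\le\Lmax$ and $(\|\alpha\|-1)/2\le\Lmax$, neither $e$ nor $\alpha$ is removed in \cref{line:remove-heavy-edge,line:remove-heavy-vertex}, so both survive in $G''$. Expanding $\alpha$ via its extension path and every matched edge of $P$ via its auxiliary replacement produces an alternating path $P''$ in $G''$ that starts at the free endpoint $\tilde{\alpha}$ of $\alpha$'s extension and ends at some matched arc $\vec{e}''$ lying on $P_e$; by construction, $P''$ has exactly $\|P\|_M\le\Lmax$ matched edges. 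Moreover, $P''$ avoids $V''\cup V''_A$: mimicking the argument in the proof of \cref{lemma:3.1-4}, any vertex of $P''$ inside $V''\cup V''_A$ would, through the propagation rules defining $V'$ and $V_A$ from $V''$ and $V''_A$, force a vertex of $P$ into $V'\cup V_A$, contradicting the choice of $P$. Hence $d_{G''-V''-V''_A,M''}(\vec{e}'')\le\Lmax$, and Item~5 of \cref{lemma:fmu} places $\vec{e}''$ in some structure $S''_\beta\in\mathcal{S}''$.

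The remaining and main step is to promote this single-arc membership to the entire subpath $P_e$, which is exactly what \cref{line:construct-structure-remove-partial-path} requires in order to add $\vec{e}$ to $S_\beta\in\mathcal{S}$. I would argue as follows. Every intermediate vertex of $P_e$ has degree exactly two in $G''$, being incident only to the two consecutive edges of $P_e$. For each of the other matched edges along $P_e$, a routing argument analogous to the construction of $P''$ shows that the matching distance to (at least one orientation of) each such arc on $G''-V''-V''_A$ is still at most $\Lmax$, because splicing a prefix of $P_e$ onto the translation of $P$ only shifts the matching distance by at most $\|e\|_M\le\Lmax$ in total. Thus Item~5 places every matched arc of $P_e$ in some structure. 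The main obstacle is to show that all these structures coincide with $S''_\beta$, handling both orientations of each arc; this I would establish by a short induction along $P_e$ starting from $\vec{e}''$, using the vertex-disjointness property (\cref{def:structure}(1)) together with the alternating-paths-to-matched-arcs and endpoints-of-unmatched-arcs axioms of \cref{def:structure}: because each intermediate vertex has degree two, forcing one of its two incident edges into $S''_\beta$ forces the other in as well. Once this is verified, all matched arcs and unmatched edges of $P_e$ lie in $S''_\beta$, and \cref{line:construct-structure-remove-partial-path} adds $\vec{e}$ to $S_\beta$, as required.
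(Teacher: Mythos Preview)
Your proposal is correct and follows essentially the same strategy as the paper: transfer the distance hypothesis from $G'$ to $G''$, invoke Item~5 of \cref{lemma:fmu} there, and then argue that the entire replacement path $P_e$ sits in a single structure $S''_\beta$ so that \cref{line:construct-structure-remove-partial-path} includes $\vec{e}$ in $S_\beta$.

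The one place where your argument diverges from the paper is the promotion step. You set up an induction along $P_e$, combining the degree-$2$ property of intermediate vertices with vertex-disjointness of structures to propagate membership in $S''_\beta$ arc by arc. The paper instead goes directly to the \emph{last} matched arc $\vec{f}$ on $P_e$ (in the direction of $\vec{e}$), observes it lies in some $S''_\alpha$, and then uses property~(2) of \cref{def:structure}: there must be an alternating path from $\alpha$ to $\vec{f}$ entirely inside $S''_\alpha$, and since the intermediate vertices of $P_e$ have degree two in $G''$, any such path must traverse all of $P_e$, placing the whole path in $S''_\alpha$ in one shot. This avoids the induction and the need to separately verify that every intermediate arc is in \emph{some} structure. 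Your splicing remark (``shifts the matching distance by at most $\|e\|_M$'') is also unnecessary once you note that $P''$ already ends at the last arc of $P_e$, so every arc on $P_e$ in the direction of $\vec{e}$ has distance at most $\|P\|_M\le\Lmax$ with no shift at all.
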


\begin{proof}
Consider the replaced alternating path $P_e$ of $e$ on $G''$.
We notice that $d_{G'-V'-V_A, M}(\vec{e}) \le \Lmax$ implies that every matched arc $\vec{f}$ on $P_e$ of the same direction as $\vec{e}$ also satisfies 
$d_{G''-V''-V''_A, M''}(\vec{f}) \le d_{G'-V'-V_A, M}(\vec{e}) \le \Lmax$.
By Item 5 in \cref{lemma:fmu}, we know that each of these matched arcs belongs to some structure $S''_\alpha$.
Let us assume that the last matched arc $\vec{f}$ on $P_e$ belongs to a structure $S''_\alpha$.
By \cref{def:structure}, there is an alternating path from $\alpha$ to $\vec{f}$ using only arcs within $S''_\alpha$.
This implies that the entire path $P_e$ belongs to $S''_\alpha$ as well.
Now, by \cref{line:construct-structure-remove-partial-path} of \cref{alg:vertex-weighted-fmu}, the arc $\vec{e}$ belongs to $S_\alpha$.
\end{proof}

The next lemma shows that it is possible to simulate \cref{alg:vertex-weighted-fmu} when the underlying communication graph is $G$.

\begin{lemma}\label{lemma:simulate-fmu}
The \congest simulation of $\VertexAlgPhase$ on $G'=G/\Omega$ over the network $G$ can be done with a $O(\Bmax^2)$ slow-down.
\end{lemma}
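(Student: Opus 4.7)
My plan is to give an edge-by-edge simulation of \cref{alg:vertex-weighted-fmu}, whose nominal host is the virtual graph $G''$ obtained from $G'=G/\Omega$, by the physical network $G$. The first step is to pin down how each virtual vertex of $G''$ sits inside $G$: I assign every vertex $v\in V(G')$, which represents a root blossom in $\Omega$, to the base of that blossom, exactly as required by the lazy implementation of \cref{sec:lazy-implementation}. All synthetic vertices that \cref{alg:vertex-weighted-fmu} introduces in lines 4--5 (the extension of a heavy free vertex, and the alternating path $P_e$ that replaces a matched edge $e$) are hosted at the same base. Because the base already stores the full induced subgraph of its blossom, together with the incident matched and unmatched edges, constructing $G''$ is a purely local step that requires no communication between nodes.

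Next I will show that one virtual round of \congest on $G''$ can be simulated in $O(\Bmax^2)$ real rounds on $G$. Edges of $G''$ partition into two types: (i) internal edges along a synthetic extension or a replaced path $P_e$, which lie entirely within one blossom, and (ii) edges inherited from $G'$, which are the unmatched edges of $G'$ and correspond one-to-one with concrete unmatched edges of $G$. Type-(i) edges are simulated as local computation at the host base and incur no communication. For a type-(ii) edge $uv$ with $u\in B$, $v\in B'$, the sending base routes its message through the spanning tree $T_B$ to the physical endpoint $u$, the message crosses the physical edge in one round, and the receiving base pulls it back via $T_{B'}$; each of these tree traversals takes $O(\diamB)=O(\Bmax)$ rounds. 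Pipelining the up-to-$\Bmax$ concurrent messages that a base may need to dispatch or collect in one virtual round then fits within another $O(\Bmax)$ rounds, yielding a per-virtual-round simulation cost of $O(\Bmax^2)$.

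The main obstacle to this plan is ensuring that the pipelining really is congestion-free on $T_B$: in one virtual round up to $\Bmax$ distinct messages at the base may need to reach $\Bmax$ distinct boundary vertices, while the tree itself has depth $O(\Bmax)$. I will resolve this via the standard routing fact that $k$ messages on a tree of depth $d$ can be pipelined in $O(d+k)$ rounds, so here $O(\Bmax)$ rounds suffice for both the downward dispatch and the upward aggregation at each base. Finally, the post-processing steps (the augmenting-path recovery of line 8, \cref{line:construct-structure-remove-partial-path}, and the construction of $V'$, $V_A$, $\mathcal{S}$) are again local computations at each base, bounded by the $O(\Bmax^2)$-sized state stored there. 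Combining these pieces with the $\poly(1/\lambda)$ round complexity of \AlgPhase on $G''$ guaranteed by \cref{lemma:fmu} yields a total simulation time of $O(\Bmax^2)\cdot \poly(1/\lambda)=\poly(1/\lambda)$, as claimed.
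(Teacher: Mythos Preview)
Your generic ``simulate one \congest round of $G''$ by $O(\Bmax^2)$ physical rounds'' argument has a real gap at the pipelining step. You assert that a base has ``up to $\Bmax$ concurrent messages'' to dispatch or collect per virtual round, but this is not true: the virtual vertex in $G'$ (and hence in $G''$) corresponding to a root blossom $B$ can have degree as large as $\Theta(\Bmax\cdot\Delta)$, since every physical boundary edge of $B$ may go to a distinct neighboring blossom. In a single \congest round on $G''$ the base must therefore push a \emph{distinct} $O(\log n)$-bit message toward each of these boundary edges, and routing that many messages through the tree $T_B$ costs $\Omega(\Bmax + \deg_{G'}(B))$ rounds, not $O(\Bmax)$. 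Your $O(\Bmax^2)$ bound thus does not follow from the argument as written.

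The paper sidesteps this by \emph{not} attempting a black-box round-by-round simulation. Instead it opens up \AlgPhase and observes (citing Section~6 of \cite{FMU22}) that the algorithm is built from four concrete primitive operations (i)--(iv). Operations (ii)--(iv) are edge traversal, intra-structure propagation, and size aggregation, all of which are broadcast/convergecast style and therefore genuinely cost $O(\Bmax^2)$ via the lazy implementation of \cref{sec:lazy-implementation}. The delicate case is operation (i), the $O(1)$-approximate matching on the auxiliary graph $H$: here the paper replaces the subroutine by Fischer's maximal matching~\cite{Fischer17}, and then has to argue separately that the vertex-splitting step in~\cite{Fischer17} can be executed with congestion~$1$ inside each blossom by greedily pairing incident edges along the communication tree $T_B$. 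This is exactly the step where your generic simulation would blow up, and the paper handles it only because the specific primitive allows such a pairing. To repair your proof you would need either to bound the number of \emph{distinct} messages a blossom must emit per \AlgPhase step (which requires looking inside the algorithm, as the paper does), or to give a different routing argument that tolerates high virtual degree.
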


\begin{proof}
  First, we notice that suppose that $G'$ itself is the underlying communication network,
  then the round complexity $\VertexAlgPhase$ is $\poly(1/\epsilon, \log n)$, since each replaced alternating path can be simulated on the endpoints with only a $O(1)$ overhead. (The status on this alternating path can be always represented using only $O(\log n + \log\Lmax)$-bits.)

  
  Finally, we show that the simulation of $\AlgPhase(G'', M'', \epsilon)$ over $G/\Omega$ can be done with a $O(\Bmax^2)$ slow-down.
  In Section 6 of \cite[full version]{FMU22}, the authors abstract out four types of operations (i) -- (iv) that constituted for simulating the $\AlgPhase$ algorithm on \congest.

  The operations (ii) traversing all edges, (iii) propagation across the same structure, (iv) aggregation for computing sizes of matching can be implemented in a straightforward way via the lazy implementation described in \cref{appendix:data-structures} with a $O(\Bmax^2)$ slow down.
  It now remains to show that (i) a $\poly(1/\epsilon, \log n)$ round $O(1)$-approximate maximum matching procedure exists on graphs $H$ which are built over a contracted graph $G/\Omega$.
  Since we are allowed to have $\poly(\log n)$ in our round complexity, applying a maximal matching procedure in~\cite{Fischer17} suffices for our purpose.
  One crucial step in~\cite{Fischer17} is to reduce the input graph into a graph of degree $\le 2$ via splitting vertices.
  The non-trivial part is, if we apply this vertex splitting procedure on $G/\Omega$, we have to make sure that the subsequent communications on the split vertices do not cause congestion in a blossom\footnote{This is not needed in \cite{FMU22}.}.
  Notice that we do not have sufficient time for gathering all incident edges' information into a single node in a blossom.
  Fortunately, 
  using the communication tree $T_B$ within a blossom mentioned in 
  \cref{sec:lazy-implementation}, we are able to greedily pair up all the incident edges such that the congestion between all paired incident edges is $1$. Thus, achieving only $O(\Bmax^2)$ slow-down on all operations.
\end{proof}

\begin{lemma}\label{lemma:simulate-fmu-pram}
The \crewpram simulation of $\VertexAlgPhase$ on $G'/\Omega$ over the input graph $G$ can be done with a $O(\Cmax^2\log n)$ slow-down.
\end{lemma}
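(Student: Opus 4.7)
The plan is to mirror the \congest simulation of \cref{lemma:simulate-fmu}, replacing round-based communication on the underlying network with parallel PRAM primitives. Recall that \cref{alg:vertex-weighted-fmu} is a reduction to \AlgPhase on an auxiliary graph $G''$ in which each matched edge of $G/\Omega$ is expanded into a short alternating path of length at most $2\Bmax+1$. Since $|M''|\le |M|$ and the expansions correspond to blossom interiors already present in $G$, the graph $G''$ has $O(m)$ edges and need not be materialized: using parallel prefix sums over $O(m)$ elements in $O(\log n)$ parallel time with $O(m)$ processors, I would set up a bijection between each matched edge of $G/\Omega$ and its replacement path in $G''$. The preprocessing on \cref{line:remove-heavy-edge}--\ref{line:remove-heavy-vertex} and the postprocessing around \cref{line:construct-structure-remove-partial-path} (assembling $\mathcal{P}$, $V'$, $V_A$, $\mathcal{S}$ from the $G''$-level outputs) each reduce to a constant number of parallel reductions or prefix sums, so they each take $O(\log n)$ parallel time with $O(m)$ processors.

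Next, I would invoke the \crewpram implementation of \AlgPhase from \cite{FMU22} as a black box, and simulate each of its logical operations on the contracted graph $G/\Omega$ using only the physical edges of $G$. Following the taxonomy of \cite[Section 6]{FMU22}, the operations to support are edge traversal, propagation within a structure, aggregation (e.g.~for computing matching sizes), and the maximal matching subroutine on the auxiliary graph $H$. Operations that touch a contracted vertex are dispatched in parallel over its (at most $\Cmax$) original vertices, and any within-blossom aggregation is carried out over the blossom's communication tree in $O(\log \Cmax) = O(\log n)$ parallel time. The most delicate primitive is the deterministic maximal matching of \cite{Fischer17}, whose vertex-splitting reduction I would simulate inside a blossom by greedily pairing incident edges along the communication tree so that CREW write conflicts are avoided, just as in the \congest version.

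The main obstacle I anticipate, paralleling the \congest case, is accounting for the worst-case overhead when many blossoms act concurrently in a single PRAM step. In particular, the vertex-splitting reduction may force up to $O(\Cmax^2)$ incident-edge pairings inside a blossom, and each such pairing may itself incur an $O(\log n)$ parallel reduction for bookkeeping. Combining the $O(\Cmax^2)$ blossom-internal cost with the $O(\log n)$ overhead for parallel prefix sums and reductions yields the stated $O(\Cmax^2 \log n)$ slowdown per logical step of \AlgPhase. Since the black-box runtime of \AlgPhase is $\poly(1/\lambda,\log n)$, the total running time of the simulation remains $\poly(1/\lambda,\log n)$ with $O(m)$ processors, as required.
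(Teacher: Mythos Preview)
Your approach is essentially the same as the paper's: simulate the \congest implementation of \cref{lemma:simulate-fmu} (which already carries the $O(\Cmax^2)$ factor from blossom-internal work) and absorb the additional $O(\log n)$ from parallel primitives. The difference is only in where you locate that $\log n$. You spread it across prefix sums, within-blossom reductions, and the vertex-splitting step of \cite{Fischer17}; the paper pins it on a single place---each DFS step must identify the next available unvisited neighbor, which in \crewpram takes $O(\log\Delta)=O(\log n)$ time via a reduction over the adjacency list. Both are legitimate sources of the extra logarithm, so your argument goes through, but the paper's accounting is sharper and avoids re-deriving the $\Cmax^2$ term (it is simply inherited from \cref{lemma:simulate-fmu}, not re-justified via incident-edge pairings as you do).
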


\begin{proof}
Our \crewpram algorithm simulates the \congest implementation of \VertexAlgPhase described in~\Cref{lemma:simulate-fmu} in a straightforward way.
In most of the steps, each decision from each processor is done in constant time.
However, in each of the DFS step, the algorithm has to ensure progress.
That is, the algorithm is required to figure out the next available (and valid) unvisited vertex, which can be straightforwardly implemented in $O(\log \Delta)=O(\log n)$ time where $\Delta$ is the max degree of the graph, thereby an extra $\log n$ factor in the slow-down.
\end{proof}

\begin{proof}[Proof of \cref{lemma:vertex-weighted-FMU}]
It is straightforward to check that \cref{lemma:vertex-weighted-FMU}
is implied by \cref{lemma:3.1-1}, \cref{lemma:3.1-2}, \cref{lemma:3.1-3}, \cref{lemma:3.1-4}, \cref{lemma:3.1-5}, \cref{lemma:simulate-fmu}, and \cref{lemma:simulate-fmu-pram}.
\end{proof}

\section{$(1-\epsilon)$-\MWM{} in Semi-Streamming Model}\label{sec:semi-streaming-impl}

\newcommand{\MaxMinRatio}{max-min ratio\xspace}

This section aims to prove \Cref{thm:semi-streaming-result}.
The main ingredient we will be applying is the reduction from Gupta and Peng~\cite{GuptaP13}.
Let $G=(V, E, w)$ be the graph with posistive edge weights and 
let the \emph{\MaxMinRatio} $W$ be the ratio between the heaviest weight and the lightest weight.
Let $\epsilon < 1$ be a constant and let $\epsilon_0=\epsilon/5 < 1/5$.
Gupta and Peng~\cite{GuptaP13} reduces the $(1-\epsilon)$-\MWM{} problem with an arbitrary \MaxMinRatio $W$ into $O(\log_{1/\epsilon}W)$ instances of $(1-\epsilon_0)$-\MWM{} such that the \MaxMinRatio in each instance is bounded by $W'\le (1/\epsilon)^{O(1/\epsilon)}$.
We describe the reduction as follows.

For each edge $e\in E$, we define its \emph{bucket number} $b(e) := \lfloor \log_{1/\epsilon_0} w(e)\rfloor$. 
Let $k=\lceil 1/\epsilon_0\rceil$ and consider $k$ copies of the graph $G$: $G_0, G_1, \ldots, G_k$.
In the $i$-th copy, all edges whose bucket number modulo $k$ equals to $i$ are removed.
That is, the graph $G_i$ is composed of all the edges $E_i := \{e\in E\ |\ b(e)\not\equiv i\ (\!\bmod k)\}$.

Now, in each graph $G_i$, the consecutive buckets that were not removed get merged together and form a \emph{level}.
Specifically, we say an edge $e$ in $G_i$ has its level $\ell_i(e) := \lceil (b(e)-i)/k\rceil$.
These levels naturally partition each $G_i$ into $O(\frac{1}{k}\log_{(1/\epsilon_0)}W)$ subgraphs.
The edge weights at different levels differ by $\Omega(1/\epsilon_0)$ multiplicatively due to the removal of the buckets.
Let $H_{i, j}$ be the subgraph of $G_i$ that contains all level $j$ edges, and let $M_{i, j}$ be \emph{any} $(1-\epsilon_0)$-\MWM{} of $H_{i, j}$ computed by any algorithm.
For each graph $G_i$, by applying the classical greedy algorithm on $\cup_j M_{i, j}$, one obtains a matching $M_i$ that is an $(1-4\epsilon_0)$-\MWM{} on $G_i$.

To prove the approximation ratio, we use the fact that each $M_{i, j}$ is a matching and the edge weights between edges from $M_{i, j}$ and $M_{i, j-1}$ are differed by at least $(1/\epsilon_0)$.
For each edge $e$ presented in the resulting matching $M_i$, at most $2$ edges from each lower level of $\cup_j M_{i, j}$ may be blocked by $e$. 
Let $M^\star_i$ be a \MWM{} on $G_i$, and let $M^\star_{i, j}$ be a \MWM{} on $H_{i, j}$.
Hence, we have: 
\begin{align*}
w(M^\star_i) &\le \sum_j w(M^\star_{i, j}) \\
&\le \sum_j \frac{1}{1-\epsilon_0} w(M_{i, j}) \\
&\le \frac{1}{1-\epsilon_0} \sum_{e\in M_i} w(e)\left(1 + \sum_{j < \ell(e)} 2\epsilon_0^{\ell(e)-j}\right) \\
&=\frac{1+2\epsilon_0}{(1-\epsilon_0)^2} w(M_i) \le \frac{1}{1-4\epsilon_0}w(M_i). \tag{$\epsilon_0 < 1/5$}
\end{align*}

Finally, Gupta and Peng~\cite{GuptaP13} proved that if we choose the maximum weighted matching among $M = \arg\!\max_{M_i}\{w(M_1), w(M_2), \ldots, w(M_k)\}$, we obtain an $(1-5\epsilon_0)$-\MWM{} of $G$.
This is because the missing edges at each subgraph $G_i$ from $G$ forms a partition.
In particular, if we consider $M^\star$, a \MWM{} of $G$, each edge in $M^\star$ occurs in \emph{exactly} $k-1$ subgraphs.
Hence,
\begin{align*}
  (k-1) w(M^\star) &\le \sum_i w(M\cap G_i) \\
  &\le \sum_i w(M^\star_i) 
  \le \frac{1}{1-4\epsilon_0} \sum_i w(M_i) \\
  &\le \frac{k}{1-4\epsilon_0} w(M). \tag{$w(M_i)\le w(M)$}
\end{align*}

Therefore, we have $w(M^\star) \le [(1-1/k)(1-4\epsilon_0)]^{-1}w(M)$. Since $k=\lceil 1/\epsilon_0\rceil$, we have $1-1/k \ge 1-\epsilon_0$ and thereby 
$w(M^\star) \le (1-5\epsilon_0)^{-1} w(M)$
as desired.
We summarize the reduction described above in the following lemma, in terms of semi-streaming model:

\begin{lemma}\label{lemma:gupta-peng-reduction}
Let $\epsilon_0 < 1/5$ be a constant and let $\epsilon=5\epsilon_0$.
Suppose there exists a semi-streaming algorithm $\mathcal{A}$ such that, on a graph $G=(V, E, w)$ with $n$ vertices and \MaxMinRatio at most $(1/\epsilon_0)^{\lceil 1/\epsilon_0\rceil}$,
algorithm $\mathcal{A}$
outputs an $(1-\epsilon_0)$-\MWM{} in $\poly(1/\epsilon_0)$-passes and $n\cdot\poly(1/\epsilon_0)$ space.
Then, there exists a semi-streaming algorithm $\mathcal{A}'$ such that, on a graph $G=(V, E, w)$ with $n$ vertices and \MaxMinRatio $W$, algorithm $\mathcal{A'}$ outputs a $(1-\epsilon)$-\MWM{} in $\poly(1/\epsilon)$-passes and $n\cdot\log_{1/\epsilon}W\cdot \poly(1/\epsilon)$ space.
\end{lemma}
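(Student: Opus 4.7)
The plan is to run the hypothesized algorithm $\mathcal{A}$ in parallel on all the subgraphs $H_{i,j}$ produced by the Gupta--Peng partition described in the paragraphs preceding the lemma, driving every instance from the same stream. I would fix $\epsilon_0=\epsilon/5$ and $k=\lceil 1/\epsilon_0\rceil$ and enumerate all pairs $(i,j)$ with $0\le i<k$ and $0\le j\le \lceil (\log_{1/\epsilon_0} W)/k\rceil$. For each such pair I instantiate an independent copy $\mathcal{A}_{i,j}$ of the hypothesized algorithm, aimed at producing an $(1-\epsilon_0)$-\MWM{} on $H_{i,j}$. During each stream pass, an incoming edge $e$ has its bucket number $b(e)=\lfloor \log_{1/\epsilon_0} w(e)\rfloor$ computed in $O(1)$ work and is delivered to $\mathcal{A}_{i,j}$ precisely when $b(e)\not\equiv i\pmod k$ and $j=\lceil (b(e)-i)/k\rceil$. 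Because the routing is stateless and all instances consume passes in lockstep, the whole simulation uses the same $\poly(1/\epsilon_0)$ passes as a single invocation of $\mathcal{A}$.

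Next I would verify the preconditions on $\mathcal{A}$. By the definition of levels, every edge in $H_{i,j}$ has a bucket number lying in a window of $k-1$ consecutive integers, so the \MaxMinRatio inside $H_{i,j}$ is bounded by $(1/\epsilon_0)^{k-1}\le (1/\epsilon_0)^{\lceil 1/\epsilon_0\rceil}$, matching the premise. Since the total number of nonempty $H_{i,j}$ is $O(\log_{1/\epsilon} W)$, the total working memory across all instances is $n\cdot \log_{1/\epsilon} W\cdot \poly(1/\epsilon)$ words, which is within the claimed semi-streaming budget.

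Finally, once every $\mathcal{A}_{i,j}$ has terminated and returned $M_{i,j}$, I would combine them offline without any additional stream access: for each $i$, sort $\bigcup_j M_{i,j}$ in decreasing weight and greedily build $M_i$, then return $M=\arg\max_i w(M_i)$. The approximation analysis in the paragraphs preceding the lemma already shows $w(M_i)\ge (1-4\epsilon_0)w(M_i^\star)$ for each $i$, and then that $w(M)\ge (1-5\epsilon_0)w(M^\star)=(1-\epsilon)w(M^\star)$, using the fact that every edge of an optimal $M^\star$ on $G$ survives in exactly $k-1$ of the subgraphs $G_i$. There is no real obstacle here beyond bookkeeping: the correctness and approximation ratio are inherited from the reduction analysis, and the key observation is that routing each stream edge to its $k-1$ target instances is a trivial local operation and so introduces no pass overhead over a single invocation of $\mathcal{A}$.
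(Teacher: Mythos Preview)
Your proposal is correct and follows essentially the same approach as the paper's proof: both simulate $\mathcal{A}$ in parallel on the $O(\log_{1/\epsilon}W)$ instances $H_{i,j}$, route edges locally via their bucket numbers, and combine the resulting $M_{i,j}$ offline using the analysis already given before the lemma. Your write-up is in fact more explicit than the paper's three-sentence proof—you spell out the routing, verify the max-min-ratio precondition on each $H_{i,j}$, and account for the space and passes—but there is no substantive difference in method.
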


\begin{proof}
We note that the computation of the bucket number $b(e)$ and its levels $\ell_i(e)$ depends only on $\epsilon_0$. Using the reduction from Gupta and Peng~\cite{GuptaP13}, the algorithm $\mathcal{A}'$ simply simulates $\mathcal{A}$ on each of the $O(\log_{1/\epsilon_0}W)$ instances $\{H_{i, j}\}$. Since the matchings $M_{i, j}$ can be stored entirely using $n\cdot O(\log_{1/\epsilon} W)$ space, the output can be produced without any additional pass.
\end{proof}

Now, we turn our attention to solving a reduced instance:

\begin{lemma}\label{lem:semi-streaming-simulate-fig2}
Let $G=(V, E, w)$ be the input graph with $n$ vertices and \MaxMinRatio $W$.
Let $\epsilon <1/5$ be a constant.
There exists a semi-streaming algorithm simluating the modified scaling framework (\Cref{fig:edmondssearch}) in $\poly(1/\epsilon, \log W)$-passes and $n\cdot \poly(1/\epsilon, \log W)$ space.
\end{lemma}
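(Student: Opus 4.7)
The plan is to simulate the modified scaling framework of \Cref{fig:edmondssearch} directly in the semi-streaming model, implementing each iteration using $\poly(1/\epsilon, \log W)$ passes over the edge stream. Since the framework consists of $L+1 = \Theta(\log W)$ scales with $O(1/\epsilon')$ iterations per scale, a per-iteration pass cost of $\poly(1/\epsilon, \log W)$ yields the desired overall bound. The parameter $\lambda = \Theta(\epsilon/\log W)$ therefore gives $\Cmax, \Lmax, \Bmax = \poly(1/\epsilon, \log W)$, which is what is needed to fit blossom data and BFS depths within our budget.

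First I would verify the space budget. Vertex-indexed state ($y$-values, the matching $M$, the pointer $\rho(v)$ from each vertex to its current root blossom ID, and the scalar $\tau$) uses $O(n \log W)$ bits. The laminar family $\Omega$ has $O(n)$ nodes; by~\Cref{thm:augment_and_shrink}\ref{enum:main-small-blossoms} each blossom has $\le \Cmax$ vertices, so storing the blossom tree along with $z$-values, the matched sub-edges inside each blossom, and the communication tree $T_B$ used by the lazy representation of \cref{appendix:data-structures} costs $O(n \cdot \poly(1/\epsilon, \log W))$ words. Crucially, unlike the \congest implementation, we do \emph{not} store the full induced subgraph $E(B)$ in memory: in the streaming model any incident-edge query for a blossom is served by reading off the stream, since each edge knows its endpoints' blossom IDs via $\rho(\cdot)$. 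Weight modifiers $\Delta w(e)$ are nonzero only on matched edges, so they contribute $O(n)$ words.

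Next, each iteration invokes \textsc{Approx\_Primal} (\cref{alg:main}), whose expensive call is \VertexAlgPhase on $G/\Omega$. By \cref{section:vertex-weighted-FMU}, \VertexAlgPhase reduces to running FMU's \AlgPhase on a transformed unweighted graph $G''$ obtained by replacing each matched edge $e$ with a short alternating path of $O(\Bmax)$ edges. This transformation is purely local: for every edge $uv$ of $G$ in a pass, we emit its $O(\Bmax)$ virtual arcs in $G''$ on the fly using $\rho(u), \rho(v)$ and the in-memory blossom data, so the semi-streaming implementation of \AlgPhase from~\cite{FMU22} applies directly, costing $\poly(1/\epsilon, \log W)$ passes and $O(n \cdot \poly(1/\epsilon, \log W))$ space per call. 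The remaining pieces of the framework are simpler and each fit in at most $\poly(1/\epsilon, \log W)$ passes: aggregate counters ($|M|$, $|\mathcal{P}|$, $|E_i|$, $|F_i|$) need one pass; blossom identification inside each structure $S_\alpha$ is done in memory since $\|S_\alpha\|\le \Cmax$; the Bellman-Ford style label propagation of \cref{lemma:simulate-bfs} takes $O(\Lmax^2)$ passes; and the Dual Adjustment, Blossom Dissolution, and augmentation steps of \Cref{fig:edmondssearch} are local updates to $y$, $z$, $M$, $\Delta w$, and $\rho$ that execute in $O(1)$ passes.

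The principal obstacle I anticipate is coherently representing $G/\Omega$ across passes as $\Omega$ evolves. I would resolve this by maintaining the pointer $\rho(v)$ explicitly and refreshing it after every modification to $\Omega$: whenever Step~3 of \cref{alg:main} contracts new blossoms from structures $S_\alpha$ (purely in memory, since $\|S_\alpha\| \le \Cmax$) and whenever the Blossom Dissolution step of \Cref{fig:edmondssearch} removes a root blossom, the $O(\Cmax)$ vertices of each affected blossom have their $\rho$ pointers updated in memory without any pass. A secondary concern is that in \cref{alg:main} we may be required to access the original-graph neighborhood of a blossom to detect the full set of edges entering/leaving a cluster; since each edge reveals itself once per pass and the total work per pass is $O(m)$, this is absorbed into the per-iteration pass count. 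Summing over the $\Theta(\log W / \epsilon)$ iterations yields $\poly(1/\epsilon, \log W)$ total passes and $O(n \cdot \poly(1/\epsilon, \log W))$ space, as claimed.
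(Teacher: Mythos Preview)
Your proposal is correct and follows essentially the same route as the paper's proof: both observe that $\lambda=\Theta(\epsilon/\log W)$ so that $\Cmax,\Bmax,\Lmax=\poly(1/\epsilon,\log W)$, and then simulate each of the $O((\log W)/\epsilon)$ iterations of \Cref{fig:edmondssearch} in $\poly(1/\lambda)$ passes and $n\cdot\poly(1/\lambda)$ space. The paper's argument is terser---it simply asserts that \textsc{Approx\_Primal} and \VertexAlgPhase fit in this budget---whereas you spell out the root-blossom pointer $\rho(\cdot)$, the on-the-fly emission of the virtual graph $G''$, and the Bellman--Ford simulation.

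One small slip: you assert that ``weight modifiers $\Delta w(e)$ are nonzero only on matched edges, so they contribute $O(n)$ words.'' This is not obviously true and is not what the paper claims. An edge $e$ whose $\Delta w$ was bumped at scale $i$ can, after subsequent dual adjustments or scale transitions, satisfy the eligibility equation for matched edges again and then be flipped to unmatched by a later augmentation; $\Delta w(e)$ is never reset. The paper instead argues that every edge with $\Delta w(e)\neq 0$ was \emph{at some point} matched, and since at most $O(n)$ edges become newly matched per iteration, the number of nonzero modifiers is bounded by $O(n)$ times the total number of iterations, i.e.\ $n\cdot\poly(1/\epsilon,\log W)$. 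This looser accounting still lands inside your stated space budget, so the conclusion is unaffected, but your $O(n)$ claim as written needs either a proof or the paper's weaker bound.
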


\begin{proof}
First of all, it suffices to notice that $\lambda=O((1/\epsilon)\log W)=\poly(1/\epsilon, \log W)$.
Hence, the \textsc{Approx\_Primal} procedure (and the associated \VertexAlgPhase{}) can be simulated in $\poly(1/\lambda)$-passes and $n\cdot \poly(1/\lambda)$ space.

To analyze the number of passes to the entire scaling framework, we note that there are  $O(\log W)$ scales and within each scale there are $\poly(1/\epsilon)$ iterations. Therefore, the total number of passes needed is $\poly(1/\epsilon, \log W)$.

To analyze the space usage, it suffices to notice that 
each active blossom has size $\Bmax=\poly(1/\lambda)$ and each stored augmenting path has length $\poly(1/\lambda)$.
In addition, it is not hard to see that the set of active blossoms $\Omega$ is a laminar set.
This implies that there are at most $O(n)$ active blossoms.
Hence, the space usage for storing augmenting paths $\Psi$, $y$ values for vertices, and non-zero $z$ values for blossoms is at most $n\cdot \poly(1/\epsilon, \log W)$.
To bound the number of non-zero entries of edge modifiers $\Delta(e)$, it suffices to notice that any edge $e$ with $\Delta(e)\neq 0$ must be once a matched edge. In each iteration there can be at most $O(n)$ newly matched edges.
Therefore, the number of edges with non-zero edge modifiers can be bounded by $O(n)$ times the number of passes, which is $n\cdot \poly(1/\epsilon, \log W)$ as desired.
\end{proof}

\begin{proof}[Proof of \Cref{thm:semi-streaming-result}]
\Cref{thm:semi-streaming-result} follows after plugging \cref{lem:semi-streaming-simulate-fig2} into \cref{lemma:gupta-peng-reduction}, since now $W\le (1/\epsilon)^{O(1/\epsilon)}$ so $\log W = \poly(1/\epsilon)$.
\end{proof}

\end{document}